\newcommand{\eat}[1]{}
\newcommand{\responseref}[2]{\hyperref[#1]{\color{red}{\emph{#2}}}}
\newcommand{\qqquad}{\qquad\qquad}
\newcommand{\qqqquad}{\qqquad\qqquad}
\newcommand{\String}{\text{String}}
\newcommand{\arr}{\rightarrow}
\newcommand{\darr}{\mapsto}
\newcommand{\x}{\times}
\newcommand{\co}{\circ}
\newcommand{\ffrac}[2]{\genfrac{}{}{0pt}{0}{#1}{#2}}
\newcommand{\keyw}[1]{\boldsymbol{\kw{#1}}}
\newcommand{\meane}[2]{\lbrack\!\lbrack #1 \rbrack\!\rbrack_{#2}}
\newcommand{\meanl}[2]{\lbrack\!\lbrack #1 \rbrack\!\rbrack^{L}_{#2}}
\newcommand{\lbag}{\beta}
\newcommand{\expf}{\kw{exp}}
\newcommand{\ACz}{\ensuremath{\kw{AC_0}}}
\newcommand{\TCz}{\ensuremath{\kw{TC_0}}}
\newcommand{\NCz}{\ensuremath{\kw{NC_0}}}
\newcommand{\logspace}{\ensuremath{\kw{LOGSPACE}}}
\newcommand{\shred}{\kw{sh}}
\newcommand{\Shred}{\kw{Sh}}
\newcommand{\Value}{\kw{Val}}
\newcommand{\Pair}{\kw{Pair}}
\newcommand{\Seq}{\kw{Seq}}
\newcommand{\Elem}{\kw{Elem}}
\newcommand{\Pred}{\kw{EndsWith}}
\newcommand{\Succ}{\kw{StartsWith}}
\newcommand{\Dict}{\kw{Dict}}
\newcommand{\ToSeq}{\kw{ToSeq}}
\newcommand{\implyA}{\rightarrow}
\newcommand{\FStr}{N^{Str}}
\newcommand{\FBin}{FBin}
\newcommand{\FSet}{F^{Set}}
\newcommand{\FBag}{F^{Bag}}
\newcommand{\Eq}{\kw{Eq}}
\newcommand{\monco}{\odot}
\newcommand{\movs}{M}
\newcommand{\Movie}{Movie}
\newcommand{\shows}{\mathit{Sh}}
\newcommand{\exQ}{DOz}
\newcommand{\cprod}{\x}
\renewcommand{\deg}[1]{#1^\co}
\newcommand{\pbN}{\bN^{+}}
\newcommand{\Base}{Base}
\newcommand{\TBase}{TBase}
\newcommand{\uZ}{\emptyset}
\newcommand{\etpl}{\sng(\tuple{})}
\newcommand{\card}[1]{| #1 |}
\newcommand{\tcost}{\kw{tcost}}
\newcommand{\cost}[3]{\cC_{#2}\lbrack\!\lbrack #1 \rbrack\!\rbrack(#3)}
\newcommand{\costc}[3]{\cC_{#2}\lbrack\!\lbrack #1 \rbrack\!\rbrack_{#3}}
\newcommand{\costcx}[3]{\cC_{#2}\lbrack\!\lbrack #1
\rbrack\!\rbrack_{\Irhoc;\Vrhoc[x\coloneq #3]}}
\newcommand{\costcy}[3]{\cC_{#2}\lbrack\!\lbrack #1
\rbrack\!\rbrack_{\Irhoc;\Vrhoc[y\coloneq #3]}}
\newcommand{\costt}[2]{\cC_{#2}\lbrack\!\lbrack #1 \rbrack\!\rbrack}
\newcommand{\sz}{\kw{size}}
\newcommand{\sOne}{1}
\newcommand{\dg}{\kw{deg}}
\newcommand{\sngr}{\keyw{sng}^*}
\newcommand{\piT}[1]{\sng(\pi_{#1})}
\newcommand{\piTC}[2]{\sng(\pi_{#2}(#1))}
\newcommand{\eptybag}{\set{}}
\newcommand{\dcup}{\cup}
\newcommand{\gdic}{\cD}
\newcommand{\sngv}[1]{\set{#1}}
\newcommand{\sngvar}[1]{\sng(#1)}
\newcommand{\fc}[1]{#1^F}
\newcommand{\gc}[1]{#1^\gG}
\newcommand{\ggc}[2]{#1^{\gG_{#2}}}
\newcommand{\fcf}[1]{\shred^F(#1)}
\newcommand{\gcf}[1]{\shred^\gG(#1)}
\newcommand{\shF}{\fc{\sh}}
\newcommand{\shG}{\gc{\sh}}
\newcommand{\shGG}[1]{\ggc{\sh}{#1}}
\newcommand{\Any}{\kw{Any}}
\newcommand{\pp}{\keyw{p}}
\newcommand{\pcost}{\kw{cost}}
\newcommand{\kcons}{\kw{consistent}}
\newcommand{\eff}[1]{}
\newcommand{\NRC}{\kw{NRC}}
\newcommand{\pNRC}{\kw{NRC}^+}
\newcommand{\ipNRC}{\kw{IncNRC}^+}
\newcommand{\lpNRC}{\pNRC_l}
\newcommand{\ilpNRC}{\kw{IncNRC}^+_l}
\newcommand{\apNRC}{\kw{NRC}^+_{\Sigma}}
\newcommand{\aipNRC}{\kw{IncNRC}^+_{\Sigma}}
\newcommand{\alpNRC}{\kw{NRC}^+_{l,\Sigma}}
\newcommand{\ailpNRC}{\kw{IncNRC}^+_{l,\Sigma}}
\newcommand{\apRA}{\kw{RA}^+_{\Sigma}}
\newcommand{\ipNRA}{\kw{IncNRA}^+}
\newcommand{\IGamma}{\Gamma}
\newcommand{\VGamma}{\Pi}
\newcommand{\Irho}{\gamma}
\newcommand{\Vrho}{\varepsilon}
\newcommand{\Drho}{\phi}
\newcommand{\Irhoc}{\deg{\Irho}}
\newcommand{\Vrhoc}{\deg{\Vrho}}
\newcommand{\rename}{\rho}
\newcommand{\gG}{\Gamma}
\newcommand{\inL}{\kw{inL}}
\newcommand{\outL}{\kw{outL}}
\newcommand{\nst}{\kw{u}}
\newcommand{\sh}{\kw{s}}
\newcommand{\Num}{\boldsymbol{Num}}
\newcommand{\nzero}{\underline{0}}
\newcommand{\nplus}{+}
\newcommand{\nminus}{-}
\newcommand{\nprod}{\cdot}
\newcommand{\nflt}{\keyw{sum}}
\newcommand{\app}{\kw{app}}
\newcommand{\deltac}{\inc}
\newcommand{\filter}{\kw{filter}}
\newcommand{\forz}{\keyw{for}}
\newcommand{\for}[2]{\keyw{for}\ #1\ \keyw{in}\ #2}
\newcommand{\fortight}[2]{\keyw{for}\;#1\;\keyw{in}\;#2}
\newcommand{\forr}[3]{\keyw{for}\ #1\ \keyw{in}\ #2\collects #3}
\newcommand{\forx}[2]{\keyw{for}\ x\ \keyw{in}\ #1\collects #2}
\newcommand{\fory}[2]{\keyw{for}\ y\ \keyw{in}\ #1\collects #2}
\newcommand{\lletz}{\keyw{let}}
\newcommand{\llet}[2]{\keyw{let}\;#1 \coloneq #2\;\keyw{in}}
\newcommand{\lletBin}[4]{\keyw{let}\;#1 \coloneq #2,\,#3 \coloneq #4\;\keyw{in}}
\newcommand{\where}{\keyw{where}}
\newcommand{\nest}{\texttt{nest}}
\newcommand{\collect}{\keyw{union}}
\newcommand{\collects}{\;\keyw{union}\;}
\newcommand{\sng}{\keyw{sng}}
\newcommand{\bagS}[2]{{#2}\set{#1}}
\newcommand{\related}{\texttt{related}}
\newcommand{\relB}{\texttt{relB}}
\newcommand{\isRelated}{\texttt{isRelated}}
\newcommand{\oR}{\overline{R}}
\newcommand{\oDR}{\overline{\Delta R}}
\newcommand{\kw}[1]{\operatorname{#1}}
\newtheorem{theorem}{Theorem} 
\newtheorem{fact}{Fact} [section]
\newtheorem{corollary}[fact]{Corollary}
\newtheorem{lemma}[theorem]{Lemma}
\newtheorem{definition}{Definition}
\newtheorem{example}{Example}
\newenvironment{changemargin}[2]{%
\begin{list}{}{%
\setlength{\leftmargin}{#1}%
\setlength{\rightmargin}{#2}%
}%
\item[]}{\end{list}}
\newcommand{\sch}{\kw{Sch}}
\newcommand{\la}{\langle}
\newcommand{\ra}{\rangle}
\newcommand{\tuple}[1]{\langle{#1}\rangle}
\newcommand{\set}[1]{\{{#1}\}}
\newcommand{\supp}{\kw{supp}}
\newcommand{\flt}{\keyw{flatten}}
\newcommand{\true}{\kw{true}}
\newcommand{\false}{\kw{false}}
\newcommand{\size}{\kw{size}}
\newcommand{\cC}{\ensuremath{\mathcal{C}}}
\newcommand{\cD}{\ensuremath{\mathcal{D}}}
\newcommand{\bD}{\ensuremath{\mathbb{D}}}
\newcommand{\bL}{\ensuremath{\mathbb{L}}}
\newcommand{\bN}{\ensuremath{\mathbb{N}}}
\newcommand{\mean}[3]{\lbrack\!\lbrack #1 \rbrack\!\rbrack_{#2}(#3)}
\newcommand{\dblbrk}[1]{\lbrack\!\lbrack #1\rbrack\!\rbrack}
\newcommand{\inc}{\delta}
\newcommand{\incR}[1]{\delta_{#1}}
\newcommand{\Bag}{\keyw{Bag}}
\newcommand{\scream}[1]{\textit{\large ~***~#1~***~}}
\newcommand{\val}[1]{\scream{val:~#1}}
\newcommand{\christoph}[1]{\scream{ck:~#1}}
\newcommand{\codesize}{\normalsize}
\lstdefinelanguage{Scala}%
{morekeywords={abstract,%
  case,catch,char,class,%
  def,else,extends,final,finally,for,%
  if,import,implicit,%
  match,module,%
  new,null,%
  object,override,%
  package,private,protected,public,%
  for,public,return,super,%
  this,throw,trait,try,type,%
  val,var,%
  with,while,%
  yield%
  },%
  sensitive,%
  morecomment=[l]//,%
  morecomment=[s]{/*}{*/},%
  morestring=[b]",%
  morestring=[b]',%
  showstringspaces=false%
}[keywords,comments,strings]%
\definecolor{listingbg}{RGB}{240, 240, 240}
\date{}
\begin{document}


\title{Incremental View Maintenance For Collection Programming\thanks{
This work was supported by ERC grant 279804 and 
NSF grants IIS-1217798 and IIS-1302212.}}

\numberofauthors{3}

\author{
\alignauthor Christoph Koch\\
   \affaddr{EPFL}\\
   \email{christoph.koch@epfl.ch}
\alignauthor Daniel Lupei\titlenote{Corresponding author.}\\
   \affaddr{EPFL}\\
   \email{daniel.lupei@epfl.ch}
\alignauthor Val Tannen\\
   \affaddr{University of Pennsylvania}\\
   \email{val@cis.upenn.edu}
}

\maketitle


\begin{abstract}

In the context of incremental view maintenance (IVM), {\em delta} query
derivation is an essential technique for speeding up the processing of large,
dynamic datasets.
The goal is to generate delta queries that, given a small
change in the input, can update the materialized view more efficiently than via
recomputation.  

In this work we propose the first solution for the efficient incrementalization
of positive nested relational calculus ($\pNRC$) on bags
(with integer multiplicities).
More precisely, we model the cost of $\pNRC$ operators and classify
queries as efficiently incrementalizable if their delta has
a strictly lower cost than full re-evaluation.
Then, we identify $\ipNRC,$ a large fragment of $\pNRC$ that is efficiently
incrementalizable and we provide  
a semantics-preserving translation that takes any $\pNRC$ query to a
collection of $\ipNRC$ queries.
Furthermore, we prove that incremental maintenance 
for $\pNRC$ is within the complexity class $\NCz$
and
we showcase how {\em recursive} IVM, a technique that has provided
significant speedups over traditional IVM in the case of flat
queries~\cite{ahmad14}, can also be applied to $\ipNRC.$

\end{abstract}


\section{Introduction}

Large-scale
collection processing 
in frameworks such as Spark~\cite{Zaharia10} or LINQ~\cite{LINQ}
can greatly benefit from incremental maintenance in order to
minimize query latency in the face of updates. 
These frameworks provide 
collection abstractions equivalent to nested relational operators
that are embarrassingly parallelizable.
Also, they can be aggressively optimized using
powerful algebraic laws.
Language-integrated querying makes use of this
algebraic framework to turn declarative collection processing queries
into efficient nested calculus expressions.
%

%
%
Incremental view maintenance (IVM) by static query rewriting
(a.k.a.\ delta query derivation) has proven to be a highly useful and, 
for instance in the context of data warehouse loading,
an indispensable feature of many commercial data management systems.
With delta processing,
the results of a query are incrementally maintained by a 
delta query that, given the original input and an incremental update, 
computes the corresponding change of the output.
Query execution can thus be staged 
into an offline phase for running the query over an initial database and
materializing the result,
followed by an online phase in which the delta query is evaluated and its
result applied to the materialized view upon receiving updates.
This execution model means that one can do as much as possible 
once and for all before any updates are first seen, 
rather than process the entire input every time data changes.

Delta processing is worthwhile only if delta query evaluation is much cheaper
than full re-computation. In many cases deltas are actually 
asymptotically faster --
for instance, filtering the input based on some predicate takes linear time, 
whereas the corresponding delta query does not need to access the database but only considers
the incremental update, and thus runs in time proportional to the size of the update (in practice, usually constant time).

%
%
The benefits of incremental maintenance can be amplified if one applies
it recursively \cite{koch_ring} --
one can also speed up the evaluation of delta queries by 
materializing and incrementally maintaining their results using
second-order delta-queries (deltas of the delta queries).
One can build a hierarchy of delta queries, where the deltas at
each level are used to maintain the materialization of deltas above them, 
all the way up to the original query.  
This approach of {\em higher-order} delta derivation 
(a.k.a.\ recursive IVM)
admits a complexity-theoretic separation
between re-evaluation and incremental maintenance of 
positive relational queries with aggregates ($\apRA$)
\cite{koch_ring}, and
outperforms classical IVM by many orders of magnitude~\cite{ahmad14}.
Unfortunately,
the techniques described above target only flat relational
queries and as such cannot be used to enable incremental maintenance for  
collection processing engines.

%
%

In this work we address the problem of delta processing for 
positive nested-relational calculus on bags ($\pNRC$).
Specifically, we consider deltas for updates 
that are applied to the input
relations via a generalized bag union $\uplus$ (which sums up multiplicities),
where tuples have integer multiplicities in
order to support both insertions and deletions.
We formally define what it means for a nested update to be {\em incremental}
and a $\pNRC$ query to be {\em efficiently} incrementalizable, and we propose
the first solution for the efficient incremental maintenance of $\pNRC$ queries.

We say that a query is efficiently incrementalizable
if its delta has a lower cost than recomputation.
We define cost domains equipped with partial orders for every nested type
in $\pNRC$ and determine cost functions for the constructs of $\pNRC$
based on their semantics and a lazy evaluation strategy.
The cost domains that we use 
attach a cardinality estimate to 
each nesting level of a bag, where the cardinality of a nesting level
is defined as the maximum cardinality of all the bags with the same nesting
level.
For example, to the nested bag $\set{\set{a},\set{b},\set{c,d}}$ we associate a
cost value of $3\set{2},$ since the top bag has $3$ elements and the inner bags
have a maximum cardinality of $2$.
This choice of cost domains was motivated by the fact that  
data may be distributed unevenly across the nesting levels of a 
bag, while one can write queries that operate just on a particular 
nested level of the input.
Even though our cost model makes several conservative 
approximations,
it is still precise enough to separate incremental maintenance from
re-evaluation for a large fragment of $\pNRC$.

We efficiently incrementalize $\pNRC$ in two steps.
We first establish $\ipNRC,$ the largest fragment for which we can derive 
efficient deltas.
Then,
for queries in $\pNRC\setminus\ipNRC$, we provide a semantics
preserving translation into a collection of $\ipNRC$ queries on a differently
represented database.

For $\ipNRC$ we leverage the fact that our delta transformation is {\em closed}
(i.e.\ maps to the same query language)
and illustrate how to further optimize delta processing using recursive IVM:
if the delta of an $\ipNRC$ query still depends on the database,
it follows that it can be partially evaluated and 
efficiently maintained using a higher-order delta. 
We show that for any $\ipNRC$ query there are only a 
finite number of higher-order delta derivations possible before the resulting
expressions no longer depend on the database (but are purely functions of the update), and thus no longer require
maintenance.

The only queries that fall outside $\ipNRC$ are those that use the singleton
bag constructor $\sng(e),$ where $e$ depends on the database. 
This is supported by the intuition that in  $\pNRC$ 
we do not have an efficient way to modify
$\sng(e)$ into $\sng(e\uplus\Delta e),$ without first removing $\sng(e)$ from
the view and then adding $\sng(e\uplus \Delta e),$ which amounts to 
recomputation.
The challenge of efficiently applying updates to inner bags, a.k.a.\ {\em
deep} updates, does not lie in designing an operator that 
navigates the structure of a nested object and applies the update to the right
inner bag, but doing so while providing useful re-writing rules wrt.\ 
the other language constructs, which can be used to derive efficient delta
queries.
Previous approaches to incremental maintenance of nested views have 
either ignored the issue of deep updates \cite{gluche97}, 
handled it by triggering recomputation of nested bags \cite{liu99} 
or defaulted to change propagation \cite{nakamura01, kawaguchi97}.

We address the problem of efficiently incrementalizing $\sng(e)$ 
with {\em shredding}, a semantics-preserving transformation
that replaces the inner bag introduced by $\sng(e)$ with a label $l$ and
separately maintains the mapping between $l$ and its defining query $e.$   
Therefore, deep updates can be applied by simply modifying the label definition
corresponding to the inner bag being updated.
As such, the problem of incrementalizing $\pNRC$ queries is reduced to that
of incrementalizing the collection of $\ipNRC$ queries resulting from the
shredding transformation.
Furthermore, based on this reduction we also show that,
analogous to the flat relational case~\cite{koch_ring},
incremental processing of $\pNRC$ queries is in a strictly lower complexity
class than re-evaluation (NC$_0$ vs.\ TC$_0$).

The idea of encoding inner bags by fresh indices/labels 
and then keeping track of the mapping between the labels and the contents
of those bags has been studied before in the literature in various contexts
\cite{CheneyLW14,Koch05,Bussche,LevyS97,Suciu93,Grust2010}.
However we are, to the best of our knowledge, 
the first to propose a generic and compositional shredding transformation
for solving the problem of
efficient IVM for $\pNRC$ queries.
The compositional nature of our solution is essential for applications where 
nested data is exchanged between several layers of the system.

We summarize our contributions as follows:
\begin{itemize}
\item We define the notions of {\em incremental} nested update and 
{\em efficient incrementalization} of nested queries,
based on cost domains and a cost interpretation over $\pNRC$'s constructs. 
\item We  provide 
the first solution for the efficient incrementalization of
positive nested-relational calculus ($\pNRC$).

\item  We show how delta processing of nested queries can be further
optimized using recursive IVM~\cite{koch_ring}.

\item
We
show that incremental evaluation is in a strictly lower complexity class than
re-computation ($\NCz$ vs.\ $\TCz$).
\end{itemize}

The rest of the paper is organized as follows.
We first
introduce our approach for
the incrementalization of $\pNRC$ queries
on a motivating example and  
formally define
the variant of positive nested relational calculus that we use
.
The efficient delta processing of a large fragment of $\pNRC$
is discussed in Section~\ref{sec:nrc-inc}.
and in Section~\ref{sec:shred} we show how the full $\pNRC$ can be efficiently  
maintained. 
Finally, 
in Section~\ref{sec:related} we review the related literature.
Each of the main sections of the paper 
(sec.~\ref{sec:mot_ex}-\ref{sec:shred}) has a corresponding
appendix containing the proofs (omitted for space reasons)
and additional examples/discussions referenced in the body of that section.


\section{Motivating example}
\label{sec:mot_ex}



We follow the classical approach to incremental query evaluation,
which is based on applying certain syntactic transformations called 
``delta rules'' to the query expressions of interest
(in Appendix~\ref{app:flat-delta} we revisit how delta processing works for 
the flat relational case). 
In the following, we give some intuition for the difficulties that arise in 
finding a delta rules approach to the problem of incremental computation on
\emph{nested} bag relations.

{\bf Notation.} For a query $Q$ and relation $R$, 
we denote by $Q[R]$ the fact
that $Q$ is defined in terms of relation $R.$ 
We will sometimes simply write $Q,$ if $R$ is obvious from the context.

\begin{example}
We consider the query $\related$ that
computes for every movie in relation $\movs(name,gen,dir)$ 
a set of related movies 
which are either in the same genre $gen$ or share the same artistic director
$dir$.
We define $\related$ in Spark\footnote{%
To improve the presentation
we omitted Spark's boilerplate code.
}:
\vspace{-5pt}
\begin{small}
\begin{verbatim}     
case class Movie(name: String, gen: String, dir: String)
val M: RDD[Movie] = ...
val related = for(m <- M) yield (m.name, relB(m))
def relB(m: Movie) = 
  for(m2 <- M if isRelated(m,m2)) yield m2.name 
def isRelated(m: Movie, m2: Movie) = 
  m.name != m2.name && (m.gen==m2.gen || m.dir==m2.dir)
\end{verbatim}
\end{small}
\vspace{-5pt}
where 
\texttt{RDD} is Spark's collection type for distributed datasets,
$\relB(m)$ computes the names of all the  movies related to $m$ 
and
$\isRelated$ tests if two different movies are related by genre or director. 
We evaluate $\related$ on an example instance. 

\noindent
\begin{tabular}{l l l}
$\;\movs $ 
&&
$\;\related[\movs] $
\\
\begin{small}
\begin{tabular}{l | c | c}
\hline
$name$ & $gen$ & $dir$
\\
\hline
Drive & Drama & Refn 
\\
Skyfall & Action & Mendes 
\\
Rush & Action & Howard 
\\
\hline
\end{tabular}
\end{small}
&&
\begin{small}
\begin{tabular}{l | c}
\hline
$name$ & $\set{name}$ 
\\
\hline
Drive 
& $\set{}$
\\
Skyfall
& $\set{\text{Rush}}$
\\
Rush
& $\set{\text{Skyfall}}$
\\
\hline
\end{tabular}
\end{small}
\end{tabular}
\vspace*{2mm}

Now consider the outcome of updating $\movs$ with $\Delta\movs$ via bag union
$\uplus$, where $\Delta\movs$ is a relation with the same schema as $\movs$ and
contains a single tuple $\tuple{\text{Jarhead},\text{Drama},\text{Mendes}}$.

\noindent
\begin{tabular}{l@{\hskip2pt}l}
\\[-2mm]
\noindent
$\;\movs \uplus \Delta \movs $
&
$\;\related[\movs \uplus \Delta \movs] $
\\
\begin{small}
\noindent
\begin{tabular}{l | c | c}
\hline
$name$ & $gen$ & $dir$
\\
\hline
Drive & Drama & Refn 
\\
Skyfall & Action & Mendes
\\
Rush & Action & Howard 
\\
Jarhead & Drama & Mendes 
\\
\hline
\end{tabular}
\end{small}
&
\begin{small}
\begin{tabular}{l | c@{}}
\hline
$name$ & $\set{name}$ 
\\
\hline
Drive 
& $\set{\text{Jarhead}}$
\\
Skyfall
& $\set{\text{Rush, Jarhead}}$
\\
Rush 
& $\set{\text{Skyfall }}$
\\
Jarhead 
& $\set{\text{Drive, Skyfall}}$
\\
\hline
\end{tabular}
\end{small}
\end{tabular}

\end{example}

To incrementally update the result of $\related$ we design a set of delta rules 
that, when applied to the definition of $\related[\movs]$, give us an
expression $\inc(\related)[\movs,\Delta\movs]$ s.t.:
\begin{align*}
\\[-5mm]
\related[\movs \uplus \Delta\movs]~=~
\related[\movs] \uplus \inc(\related)[\movs,\Delta\movs].
\\[-5mm]
\end{align*}

\begin{sloppypar}
For our example, in order to 
modify $\related[\movs]$ into 
$\related [ \movs \uplus \Delta \movs]$,
without completely replacing the existing tuples\footnote{%
Maintaining the result of $\related$ by completely replacing the affected
tuples defeats the goal of making incremental computation more
efficient than full re-evaluation, as these tuples could be arbitrarily large.  
},
one would have to add the movie Jarhead to the inner bag of related movies for
Drive (same genre) and Skyfall (same director).
However, 
our target language of Nested Relational Calculus
(NRC)~\cite{buneman-kleisli:95,LellahiT97,BusscheGV07,BusscheV13}
(with bag semantics, where tuples have integer multiplicities
in order to support both insertions and deletions~\cite{LibkinW97,koch_ring}) is
not equipped with the necessary constructs for expressing this kind of changes, and efficiently processing such `deep' updates represents the main
challenge in incrementally maintaining nested queries.
Although update operations able to perform deep changes have been
proposed in the literature~\cite{Liefke99}, they lack the necessary re-write
rules needed for a {\em closed} delta transformation, 
which is a prerequisite for recursive IVM.
\end{sloppypar}

\sloppy
In order to make inner bags accessible by `deep' updates,
we must first devise a naming scheme to address them. 
We have two options:
i) we can either associate a label to each tuple in a bag and then
identify an inner bag based on this label and the index of the tuple component 
that contains the bag, or 
ii) we can associate a label to each inner bag, and separately maintain a
mapping between the label and the corresponding inner bag.
In other words, labels can either identify the position of an inner bag
within the nested value or serve as an alias for the contents of the inner bag.
For example, given a value
$X = \set{\tuple{a,\set{x_1,x_2}},\tuple{b,\set{x_3}}},$
the first alternative decorates it with labels as follows: 
$\set{l_1 \mapsto \tuple{a,\set{x_1,x_2}}, l_2 \mapsto \tuple{b,\set{x_3}}},$
and then addresses the inner bags by $l_1.2$ and $l_2.2.$
By contrast, the second approach creates the mappings
$l_1 \mapsto \set{x_1,x_2}$ and $l_2 \mapsto \set{x_3}$, and then
represents the original value as the flat bag 
$\fc{X} = \set{\tuple{a,l_1},\tuple{b,l_2}}.$

\fussy
Even though both schemes faithfully represent the original nested value,
we prefer the second one,  a.k.a.\ {\em
shredding}~\cite{CheneyLW14,Grust2010}, as it offers a couple of advantages.
Firstly, 
it makes the contents of the inner bags conveniently accessible to updates
via regular bag addition, without the need to introduce a custom update 
operation\footnote{%
The authors investigated this alternative and found it particularly
challenging due to the complex ways in which this custom operation would
interact with the existing constructs of the language.}.
Secondly, since inner bags are represented by labels it also avoids duplicating 
their contents. For example, when computing the Cartesian product of $X$ with
some bag $Y,$ one would normally create a copy of the tuples in $X$, along
with their inner bags, for each element of $Y$. 
Moreover, any update of an inner bag from $X$ would also have to be applied to 
every instance of that bag appearing in the output of $X \x Y$.  
By contrast, the second scheme computes the Cartesian product only 
between $X^F$ and $Y$, while the mappings between labels and the contents of 
the inner bags remain untouched. 
Therefore, any update to an inner bag of $X$ can
be efficiently applied just by updating its corresponding mapping.

For operating over nested values represented in shredded form,
we propose a semantics-preserving transformation that rewrites a query with 
nested output $Q[R]$ into a query $\fc{Q}$ returning the flat representation of 
the result, 
along with a series of queries $\gc{Q},$ 
computing the contents of its inner bags.

\subsection{Incrementalizing $\related$}
\label{sec:inc_related}

\sloppy
We showcase our approach on the motivating example by first expressing it in 
NRC. 
The main constructs that we use are:
i) the for-comprehension $\for{x}{Q_1~}\where~p(x)~\collect ~Q_2(x)$, which
iterates over all the elements $x$ from the output of query $Q_1$ that satisfy
predicate $p(x)$ and unions together the results of each $Q_2(x)$, and
ii) the singleton constructor $\sng(e)$, which creates a bag with the result of
$e$ as its only element.
\begin{align*}
\\[-6mm]
\related 
& \equiv
\for{m}{\movs}\,\collect\;\sng(\tuple{m.name,\relB(m)})
\\
\relB(m) 
& \equiv 
\for{m_2}{\movs}\;\where\;\isRelated(m,m_2)
\\
&\quad
\;\collect\;\sng(m_2.name). 
\\[-6mm]
\end{align*}
\fussy
Next, we investigate 
the incrementalization of the constructs used by the $\related$ query in order
to identify which one of them can lead to the problem of deep updates. 
The delta rule of the $\forz$ construct is a natural generalization of the 
rule for Cartesian product in relational algebra\footnote{%
$\inc(e_1 {\x} e_2) = \inc(e_1) {\x} e_2~\uplus~e_1 {\x} \inc(e_2)
                      ~\uplus~ \inc(e_1) {\x} \inc(e_2)
$}:
\begin{align}
\nonumber
\\[-7mm]
\label{eq:delta-for}
\inc(\for{x}{Q_1} \collect\; Q_2) 
& = 
\for{x}{\inc(Q_1)} \,\collect\; Q_2
\\
\nonumber
& \uplus 
\for{x}{Q_1} \,\collect\; \inc(Q_2)
\\
\nonumber
& \uplus 
\for{x}{\inc(Q_1)} \,\collect\; \inc(Q_2)
\\[-7mm]
\nonumber
\end{align}
assuming we can derive corresponding deltas for $Q_1$ and $Q_2$.

If the $\where$ clause is also present, the same rule applies because 
we only consider the positive fragment of nested bag languages, 
for which predicates are not allowed to test expressions of bag type 
(the reasoning behind this decision is detailed in 
Appendix~\ref{app:challenges}).
Therefore the predicates in the $\where$ clause can only be boolean
combinations of comparisons involving base type expressions and these
are not affected by updates of the database.

The difficulty arises when we try to design a delta rule for
singleton, specifically, how to deal with $\sng(e)$ when $e$ depends
on some database relation. There is plainly no way
in our calculus to express the change from $\sng(\movs)$ to 
$\sng(\movs \uplus \Delta \movs)$ in an efficient manner,
i.e., one that is 
proportional to the size of $\Delta \movs$ and not the size of the output.
This is the same problem that we saw with the $\related$ example above.
In Section~\ref{sec:nrc-inc} we will show that
$\sng(e)$ is the only construct in our calculus  
whose efficient incrementalization relies on `deep' updates.

\subsection{Maintaining inner bags}
\label{sec:app_upd}

In order to facilitate the maintenance of the bags produced by $\relB(m)$, we
associate to each one of them a label, and we store separately a mapping between
the label and its bag.
Then, for implementing updates to a nested bag, 
we can simply modify the definition of its associated label via bag union.
We note that this strategy can be applied for enacting `deep' changes to both
nested materialized views as well as nested relations in the database.

Since the bags created by $\relB(m)$ clearly depend on the variable $m$ bound by
the $\forz$ construct, we also incorporate the values that $m$ takes in the
labels that replace them.
The simplest way of doing so is to use labels that are pairs of 
indices and values, where the index uniquely identifies the inner query being
replaced. 
In our running example, as we have just a single inner query,
we only need one index $\iota$.
 
The shredding of $\related$ yields two queries, 
$\fc{\related}$ producing a flat version of $\related$
with its inner bags replaced by labels, and
$\gc{\related}$ that computes the value of a nested bag given a
label {\em parameter} $\ell$ of the form $\tuple{\iota,m}$
\begin{align*}
\\[-6mm]
\fc{\related} 
&\equiv 
\for{m}{\movs}\,\collect\;\sng(\tuple{m.name,\tuple{\iota,m}})
\\
\gc{\related}(\ell) 
&\equiv 
\for{m_2}{\movs}
               \;\where\;\isRelated(\ell.2,m_2)
\\
&\quad         \;\collect\;\sng(m_2.name)
\\[-6mm]
\end{align*}
The output of these queries on our running example is:
\newline
\begin{tabular}{>{$}l<{$} >{$}l<{$}}
\\[-2mm]
\;\fc{\related}[\movs]
&
\;\gc{\related}[\movs]
\\[-2mm]
\begin{small}
\begin{tabular}[t]{l | >{$}c<{$}}
\hline
name & \ell 
\\
\hline
Drive & \tuple{\iota,\tuple{\text{Drive,..}}} 
\\
Skyfall & \tuple{\iota,\tuple{\text{Skyfall,..}}}
\\
Rush & \tuple{\iota,\tuple{\text{Rush,..}}}
\\
\hline
\end{tabular}
\end{small}
&
\begin{small}
\begin{tabular}[t]{>{$}c<{$}@{\hskip8pt}>{$}c<{$}@{\hskip8pt}>{$}c<{$}}
\hline
\ell & \mapsto & \set{name}
\\
\hline
\tuple{\iota,\tuple{\text{Drive,..}}} & \mapsto
& \set{} 
\\
\tuple{\iota,\tuple{\text{Skyfall,..}}} & \mapsto
& \set{\text{Rush}} 
\\
\tuple{\iota,\tuple{\text{Rush,..}}} & \mapsto
& \set{\text{Skyfall}} 
\\
\hline
\end{tabular}
\end{small}
\end{tabular}
\vspace*{2mm}

Although in our example the generated queries are completely flat, this need not
always be the case. 
In particular, in order to avoid expensive pre-/post-processing steps, one
should perform shredding only down to the nesting level that is affected by the
changes in the input.

Upon shredding, the strategy for incrementally maintaining $\related$
is to materialize and
incrementally maintain $\fc{\related}$ and $\gc{\related},$ and then recover
$\related$ from the results based on the following equivalence:
\begin{align}
&
\nonumber
\\[-6.5mm]
&
\related =  
\for{r}{\fc{\related}}\;\collect\;
\nonumber
\\
&\qqquad\qquad\;\;
\sng(\tuple{r.1,\,\gc{\related}(r.2)}),
\nonumber
\\[-6.5mm]
\nonumber
\end{align}
which holds since the values that $m$ takes are incorporated in the 
labels $\ell$, and
$\gc{\related}(\ell)$ is essentially a rewriting of the subquery $\relB(m).$

We remark that, while being able to reconstruct $\related$ 
from $\fc{\related}$ and $\gc{\related}$ is important for proving the 
correctness of our transformation (see Section~\ref{sec:shred-corr}),
it is not essential for representing the final result since the labels 
that appear in $\fc{\related}$ can simply be seen as references to the inner 
bags.
We also note that even though $\gc{\related}$ is parameterized by $\ell$, 
one can use standard domain maintenance techniques to materialize it
since the relevant values of $\ell$ are ultimately 
those found in the tuples of $\fc{\related}$.
Finally, in this example the labels
are in bijection with the values over which $m$ ranges, and hence,
one could use those values themselves as labels. 
In general however we may have several nested subqueries that depend on the 
same variable $m$.

In the process of shredding queries
we replace every subquery of a singleton construct that depends
on the database with a label that does not. 
This is the case with the subquery $\relB(m)$
in $\related,$ and we have a very simple delta rule for 
expressions that do not depend on the input bags:
$\inc(\sng(\tuple{m.name,\,\tuple{\iota,m}})) = \inc(\sng(m_2.name)) =
\emptyset$.
Therefore, applying delta rules such as~(\ref{eq:delta-for}) gives us:
\begin{align}
\nonumber
\\[-6mm]
&
\inc(\fc{\related})
= 
\for{m}{\Delta\movs}\,
\collect\,\sng(\tuple{m.name,\tuple{\iota,m}})
\nonumber
\\
&
\inc(\gc{\related})(\ell) 
= 
\fortight{m_2}{\Delta\movs}\,\where\,\isRelated(\ell.2,m_2)
\nonumber
\\
&
\qqqquad
  \collect\;\sng(m_2.name)
\nonumber
\\[-6mm]
\nonumber
\end{align}

We shall prove in Section~\ref{sec:nrc-inc} that,
for the class of queries to which $\fc{\related}$ and $\gc{\related}$ belong,
the delta rules do indeed produce a proper update.
We remark that
since the domain of $\gc{\related}$ is determined by the labels in
$\fc{\related}$, it may be extended by the $\inc(\fc{\related})$ update.
Thus,
when updating the materialization of $\gc{\related}$
with the change produced by $\inc(\gc{\related})$, one must
also check whether each label in its domain has an associated definition, and if
not initialize it accordingly.

{\bf Cost analysis.}
In the following we show that maintaining $\related$ incrementally 
is more efficient than its re-evaluation 
(for the general case see Section~\ref{sec:cost-transf}). 
Let us assume that $M$ and $\Delta M$ have $n$ and $d$ tuples, respectively, 
including repetitions. 
From the expressions above it follows that the costs of computing
the original queries ($\fc{\related}$ and $\gc{\related}(\ell)$) is proportional
to the input,
while their deltas cost $O(d)$.

As previously noted,
$\related[\movs\uplus\Delta\movs]$ 
can be recovered from:
\begin{align*}
\\[-9mm]
&
\for{r}
{\fc{\related}[\movs\uplus\Delta\movs]}\;\collect\;
\\
&
\quad\;\;\;
\sng(\tuple{r.1,\,\gc{\related}[\movs\uplus\Delta\movs](r.2)}),
\\[-6mm]
\end{align*}
and by the properties of delta queries and one of
the general equivalence laws of the NRC~\cite{buneman-kleisli:95}, 
this becomes $V \uplus W$ where
\begin{align}
\nonumber
\\[-6mm]
\label{eq:V}
V  = &\; \for{r}{\fc{\related}[\movs]}\; \collect\;
\\
\nonumber
&\qquad
        \sng(\tuple{r.1,\,\gc{\related}[\movs](r.2)
               \uplus\inc(\gc{\related})(r.2)})
\\
\label{eq:W}
W  = &\; \for{r}{\inc(\fc{\related})}
\;\collect\;
\\
\nonumber
&\qquad
        \sng(\tuple{r.1,\,\gc{\related}[\movs\uplus\Delta\movs](r.2)})
\\[-6mm]
\nonumber
\end{align}


Even counting repetitions, 
we have $O(n)$ tuples in
the materialization of $\fc{\related}[\movs]$ while the result of computing 
$\inc(\fc{\related})$ has $O(d)$ tuples.
From~(\ref{eq:V}) the cost of computing $V$ is $O(nd)$ 
and from~(\ref{eq:W}) the cost of computing $W$ is $O(d(n+d))$,
where we assumed that unioning two already materialized bags
takes time proportional to the smaller one, and looking up the definition of a
label takes constant (amortized) time.
Thus, the incremental computation of $\related$ costs $O(nd+d^2)$. 
For the costs of maintaining $\fc{\related}$ and $\gc{\related}$ we have $O(d)$
and $O(d(n+d))$, respectively, considering that initializing the new labels
introduced by $\inc(\fc{\related})$ takes $O(dn)$ and then updating all the
definitions in $\gc{\related}$ takes $O((n+d)d)$ (which includes the cost of
rehashing the labels in $\gc{\related}$ as may be required due to its increase
in size).
It follows that the overall cost of IVM is $O(nd+d^2)$ and when $n \gg d$,
performing IVM is clearly much better than recomputing
$\related[\movs\uplus\Delta\movs]$ which costs $\Omega((n+d)^2)$ (in the
step-counting model we have been using).

In the next sections we develop our approach in detail.

\section{Calculus}
\label{sec:nrc-def}

We describe the version of the positive nested relational calculus ($\pNRC$) 
on bags that we use%
. 
Its types are:
\begin{align*}
\\[-6mm]
A, B, C\ &\coloneq\ 1\ \mid\ \Base\ \mid\ A \x B\ \mid\ \Bag(C),
\\[-6mm]
\end{align*}
where 
$\Base$ is the type of the database domain and
$1$ denotes the ``unit'' type (a.k.a. the type of the $0$-ary tuple $\tuple{}$).
We also use $\TBase$ to denote nested tuple types 
with components of only $\Base$ type.

In order to capture all updates, i.e., both insertions and deletions,
we use a generalized notion of bag where elements have
(possibly negative) integer multiplicities
and bag addition $\uplus$ sums multiplicities as integers.
In addition, for every bag type we have an empty bag constructor $\uZ$,
as well as construct $\ominus(e)$ that 
negates the multiplicities of all the elements produced by $e$.
We remark that, semantically, bag types along with empty bag $\uZ$,
bag addition $\uplus$ and bag minus $\ominus$ exhibit the structure of a
commutative group.
This implies that given any two query results $Q_{old}$ and $Q_{new}$,
there will always exist a value $\Delta Q$ s.t.\ $Q_{new} = Q_{old} \uplus
\Delta Q$.
This rich algebraic structure that bags exhibit is also the reason why we use
a calculus with bag, as opposed to set semantics.

\sloppy
Typed calculus expressions $\IGamma;\VGamma \vdash e : \Bag(B)$ have 
two sets of type assignments to variables
$\IGamma=X_1{:}\Bag(C_1),\cdots,X_m{:}\Bag(C_m)$ and 
$\VGamma=x_1{:}A_1,\cdots,x_n{:}A_n$,
in order to distinguish between variables $X_i$ defined via
$\keyw{let}$ bindings and which reference top level bags, and variables $x_i$
which are introduced within $\forz$ comprehensions and bind the inner elements
of a bag.
We maintain this distinction since in the process of shredding we will use the
latter set to generate unique labels, identifying shredded bags 
(section~\ref{sec:shredding_trans}).

\fussy
The typing rules and semantics of $\pNRC$ are given in 
Figure~\ref{fig:nrc_type_sem}, 
where 
$R$ ranges over the relations in the database,
$X$ and $x$ range over the variables in the contexts $\IGamma$ and $\VGamma$,
respectively,
$\lletz$ binds the result of $e_1$ to $R$ and uses it in the evaluation of
$e_2$, $\cprod$ performs Cartesian product of bags,
$\forz$ iteratively evaluates $e_2$ with $x$ bound to every element of $e_1$ 
and then unions together all the resulting bags, 
$\flt$ turns a bag of bags into just one bag by unioning the inner bags,
$\sng$ places its input into a singleton bag and
$p$ stands for any predicate over tuples of primitive values.
Compared to the standard formulation given in~\cite{buneman-kleisli:95}
we use a calculus version that is ``delta-friendly'' in that all expressions
have bag type and more importantly most of its constructs are either linear or
distributive wrt.\ to bag union, with the notable exception of $\sng(e)$.
Therefore
we have a bag (Cartesian) product construct instead of a pairing construct, 
we have a separate flattening construct, and we control carefully how
singletons are constructed (note that we have four rules for
singletons but they do not ``overlap''). 
Finally,
$\Irho$ and $\Vrho$ are assignments of values to variables,
and we denote their extension with a new assignment by 
$\Irho[X \coloneq v]$ and $\Vrho[x \coloneq v]$, respectively.
Throughout the presentation, we will omit such value assignments whenever they
are not explicitly needed for resolving variable names.

\begin{figure}[t!]

\begin{subfigure}{\columnwidth}
{\small
\begin{align*}
\\[-7mm]
&
\frac{ \sch(R) {=} B }
{
R : \Bag(B)
}
\;\;\;
\frac{ 
\IGamma; \VGamma \vdash e_1{:}\;\Bag(C)
\;\;\; 
\IGamma, X {:} \Bag(C); \VGamma \vdash e_2{:}\;\Bag(B)
}{ 
\IGamma; \VGamma \vdash \llet{ X }{ e_1 }\ e_2  : \Bag(B)
} 
\\[2mm]
&
\frac{}
{
\IGamma, X\,{:}\,\Bag(C); \VGamma \vdash X :\Bag(C)
}
\;\;\,
\frac{}
{
\IGamma; \VGamma, x\,{:}\,\TBase \vdash p(x) :\Bag(1) 
}
\\[2mm]
&
\frac{}
{ 
\IGamma; \VGamma, x{:}A \vdash \sng(x) : \Bag(A)
}
\quad\;\,
\frac{}
{
 \etpl : \Bag(1)
}
\quad\;
\frac{}
{
 \uZ : \Bag(B)
}
\\[5pt]
&
\frac{i=1,2}
{ 
\IGamma; \VGamma, x{:}A_1 {\x} A_2 \vdash \piTC{x}{i} : \Bag(A_i)
}
\;\;\,
\frac{
 e  : \Bag(B) 
}
{
 \sng(e) : \Bag(\Bag(B))
}
\\[5pt]
&
\frac{ 
\IGamma; \VGamma \vdash e_1  : \Bag(A)
\quad 
\IGamma; \VGamma, x {:} A \vdash e_2 : \Bag(B)
}{ 
\IGamma; \VGamma \vdash \for{ x }{ e_1 } \collects e_2  : \Bag(B)
}
\quad\;\;
\frac{
 e_{1,2}     : \Bag(B) 
}{ 
 e_1 {\uplus} e_2 : \Bag(B)
} 
\\[5pt]
&
\frac{
  e_i : \Bag(B_i), i=1,2 
}{ 
 e_1 \cprod e_2  : \Bag(B_1 {\x} B_2)
} 
\quad\;
\frac{
 e  : \Bag(\Bag(B)) 
}
{
 \flt(e) : \Bag(B)
}
\quad\;
\frac{
 e  : \Bag(B) 
}
{
 \ominus(e) : \Bag(B)
}
\end{align*}
}

\end{subfigure}
~
\begin{subfigure}{\columnwidth}
\begin{align*}
\\[-5mm]
&
\meane{R}{} = R
\qquad\quad\;\;\;
\meane{ \llet{ X }{ e_1 }\;e_2 }{\Irho;\Vrho} 
= \meane{e_2}{\Irho[X \coloneq \meane{e_1}{\Irho;\Vrho}];\Vrho}
\\
&
\meane{X}{\Irho;\Vrho} = \Irho(X)
\quad
\meane{p(x)}{\Irho;\Vrho} = 
\text{ if } p(\Vrho(x))\ \text{then}\ \;\set{\tuple{}}\ \;\text{else}\ \;\set{}
\\
&
\meane{\sngvar{x}}{\Irho;\Vrho} = \sngv{\Vrho(x)}
\qquad\quad\;\,
\meane{\piTC{x}{i}}{\Irho;\Vrho} = 
\sngv{\pi_i(\Vrho(x))}
\\
&
\meane{\sng(e)}{} = \set{\meane{e}{}} 
\qqquad\;\;\;
\meane{ \flt(e) }{} = {\biguplus}_{ v \in \meane{e}{} } v
\\
&
\meane{ \for{ x }{ e_1 } \collects e_2 }{\Irho;\Vrho} 
= {\biguplus}_{ v \in \meane{e_1}{\Irho;\Vrho} } 
  \meane{e_2}{\Irho;\Vrho[x \coloneq v]}
\\
&
\meane{e_1 \cprod e_2}{} = 
{\biguplus}_{ v_1 \in \meane{e_1}{} }
{\biguplus}_{ v_2 \in \meane{e_2}{} }
\set{\tuple{v_1,v_2}}
\quad
\meane{\etpl}{} = \sngv{\tuple{}}
\\
&
\meane{\uZ}{} = \set{}
\qquad
\meane{e_1 \uplus e_2}{} = \meane{e_1}{} \uplus
                               \meane{e_2}{}
\quad\;\;
\meane{\ominus(e)}{} = \ominus( \meane{e}{} ) 
\end{align*}

\end{subfigure}

\caption{Typing rules and semantics for $\pNRC$.}
\label{fig:nrc_type_sem}

\end{figure}

Booleans are simulated by $\Bag(1)$, 
with the singleton bag $\etpl$ denoting \textit{true} and the 
empty bag $\uZ$ denoting \textit{false}. Consequently, the return type of
predicates $p(x)$ is also $\Bag(1)$. 
The ``positivity''
of the calculus is captured by the restriction put on (comparison) predicates
$p(x)$ to only act on tuples of basic values 
since comparisons involving bags can be used to simulate
negation~\cite{buneman-kleisli:95}.
We discuss in Appendix~\ref{app:challenges} the challenges posed by
negation wrt.\ efficient maintenance within our framework.

\vspace{-2mm}
\begin{example}
\label{ex:filter}
Filtering an input bag $R$ according to some predicate $p$ can be defined
in $\pNRC$ as:
\begin{align*}
\\[-6mm]
\filter_p[R] &= \for{ x }{ R }\ \where\ p(x)\ \collect\ \sng(x)
\\[-6mm]
\end{align*}
considering that
the $\forz$ construct with $\where$
clause (also used in Section~\ref{sec:mot_ex}) can be expressed as follows:
\begin{align*}
\\[-6mm]
&
\for{ x }{ e_1 }\ \where\ p(x)\ \collect\ e_2 =
\\
&\qqquad
\for{ x }{ e_1 }\ \collect\ \for{ \_ }{ p(x) }\ \collect\ e_2,
\\[-6mm]
\end{align*}
where we ignore the variable binding the contents of the bag returned by 
predicate $p$ since its only possible value is $\tuple{}.$  
\end{example}

\sloppy
For a variable $X$ we say that an expression $e$ is 
{\em $X$-dependent} if $X$ appears as a free variable in $e$,
and {\em $X$-independent} otherwise.
Also,
among $\pNRC$ expressions we distinguish
between those that are {\em input-independent}, 
i.e.\ are {\em $R$-independent} for all relations $R$ in the database,
and those that are {\em input-dependent}.
We define $\ipNRC$ as the fragment of $\pNRC$ that uses a
syntactically restricted singleton construct $\sngr(e)$, 
where $e$ must be {\em input-independent}.
While this prevents $\ipNRC$ queries from adding nesting levels to their
inputs\footnote{%
We note that the query from Section~\ref{sec:mot_ex}
does not belong to $\ipNRC.$
}, it does provide the useful guarantee that their deltas
do not require deep updates. 
We take advantage of this fact in the next section, when we discuss the 
efficient delta-processing of $\ipNRC$.
For the incrementalization of the full $\pNRC$, 
we provide a shredding transformation taking any $\pNRC$ query into 
a series of $\ipNRC$ queries (see Section~\ref{sec:shred}).  

\fussy


\section{Incrementalizing $\ipNRC$}
\label{sec:nrc-inc}

In the following
we show that any query in $\ipNRC$ admits a delta expression with
a lower cost estimate than re-evaluation.
Since the derived deltas are also $\ipNRC$ queries, 
their evaluation can be optimized in the same way as the original query,
i.e.\ materialize and maintain them via delta-processing. 
We call the resulting expressions \emph{higher-order} deltas.
As each derivation produces `simpler' queries,
we show that the entire process has a finite number of steps and the final one
is reached when the generated delta no longer depends on the database.
Thus the maintenance of nested queries can be further optimized
using the technique of recursive IVM, which has delivered important speedups
for the flat relational case~\cite{ahmad14}. 

To simplify the presentation, we consider a database where a single relation $R$
is being updated.
Nonetheless, the discussion and the results carry over in a
straightforward manner when updates are applied to several relations.   

The delta rules for the constructs of $\ipNRC$ wrt.\ the update of
bag $R$ are given in Figure~\ref{fig:ipnra-delta}, where $\Delta R$ is
a bag containing the elements to be added/removed from $R$ (with
positive/negative multiplicity for insertions/deletions)
and we use $\lletBin{X}{e_1}{Y}{e_2}\ e$ as a shorthand for
$\llet{X}{e_1}\ (\llet{Y}{e_2}\ e)$.
The delta of constructs that do not depend on $R$ is the empty bag,
while the rules for the other constructs are a direct consequence of their
linear or distributive behavior wrt.\ bag union.
We show that indeed, the derived delta queries $\incR{R}(h)[R, \Delta R]$ 
 produce a correct update for the return value of $h:$

\begin{figure}[t]
\begin{align*}
\\[-8mm]
&
\incR{R}(R) = \Delta R
\quad\;\;\;
\incR{R}(X) = \uZ
\qquad\;\,
\incR{R}(p(x)) = \uZ
\quad\;\;\;
\incR{R}(\uZ) = \uZ 
\\[2pt]
&
\incR{R}( \llet{X}{e_1}\ e_2 ) = 
\lletBin{X}{e_1}{\Delta X}{\incR{R}(e_1)}\ 
\\[2pt]
&
\qqqquad\qquad
\incR{R}(e_2)\ \uplus\ \incR{X}(e_2)\ \uplus\ \incR{R}(\incR{X}(e_2))
\\[2pt]
&
\incR{R}(\sngvar{x}) = \uZ
\quad\;\;\;
\incR{R}(\piTC{x}{i}) = \uZ
\quad\;\;\,
\incR{R}(\etpl) = \uZ
\\[2pt]
&
\incR{R}(\sngr(e)) = \uZ
\quad\;\,
\incR{R}(\flt(e)) = \flt(\incR{R}(e))
\\[2pt]
&
\incR{R}(\for{x}{e_1} \collects e_2) = 
\for{x}{\incR{R}(e_1)} \collects e_2
\\[2pt]
&
\qqqquad\qquad\quad\,
\uplus\ \for{x}{e_1} \collects \incR{R}(e_2)
\\[2pt]
&
\qqqquad\qquad\quad\,
\uplus\ \for{x}{\incR{R}(e_1)} \collects \incR{R}(e_2)
\\[2pt]
&
\incR{R}(e_1 \cprod e_2) = 
\incR{R}(e_1) \cprod e_2 
\ \uplus\ e_1 \cprod \incR{R}(e_2)
\ \uplus\ \incR{R}(e_1) \cprod \incR{R}(e_2)
\\[2pt]
&
\incR{R}(e_1 {\uplus} e_2) = \incR{R}(e_1) \uplus \incR{R}(e_2)                      
\qqquad\;\;\,
\incR{R}(\ominus(e)) = \ominus(\incR{R}(e))                      
\\[-8mm]
\end{align*}

\caption{Delta rules for the constructs of $\ipNRC$}
\label{fig:ipnra-delta}
\end{figure}

\begin{restatable}{proposition}{deltathipNRA}
\label{prop:ipnra-delta}
Given an $\ipNRC$ expression $h[R]:\Bag(B)$ with input $R: \Bag(A)$ 
and update $\Delta R: \Bag(A)$, then:
\begin{align*}
\\[-6mm]
&
h[R\uplus\Delta R]~=~ 
h[R] ~\uplus~ \incR{R}(h)[R,\Delta R].
\\[-5mm]
\end{align*}
\end{restatable}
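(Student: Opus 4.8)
The plan is to prove the identity by structural induction on $h$, after first strengthening the inductive hypothesis so that it is stable under the $\keyw{let}$ rule. Because the rule for $\llet{X}{e_1} e_2$ recursively invokes $\incR{X}$ in addition to $\incR{R}$, I would phrase the induction for an arbitrary variable $Z \in \IGamma$ (the updated relation $R$ or any $\keyw{let}$-bound variable) with update $\Delta Z$, claiming $h[Z \uplus \Delta Z] = h[Z] \uplus \incR{Z}(h)[Z, \Delta Z]$, where $h[Z \uplus \Delta Z]$ denotes evaluation of $h$ in the environment obtained by reassigning $Z$ to $Z \uplus \Delta Z$. The proposition is then the instance $Z = R$.

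All base cases are read off directly from the semantics of Figure~\ref{fig:nrc_type_sem}. For $h = R$ the rule gives $\incR{R}(R) = \Delta R$ and the claim is the tautology $R \uplus \Delta R = R \uplus \Delta R$; for $\uZ$, $\etpl$, $\sngvar{x}$, $\piTC{x}{i}$, $p(x)$, and a variable $X \ne Z$, the denotation never consults the binding of $Z$, so both sides agree and the delta $\uZ$ is correct. The case $h = \sngr(e)$ is where the restriction defining $\ipNRC$ is used in an essential way: since $e$ is input-independent it is unaffected by the update, so $\sngr(e)$ is unchanged and $\incR{R}(\sngr(e)) = \uZ$ holds -- this is exactly the construct whose unrestricted $\pNRC$ form would force deep updates. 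A point to verify is that in the generalized statement $e$ is also independent of the $\keyw{let}$-bound variable being differentiated, so that $\incR{X}(\sngr(e)) = \uZ$ likewise holds.

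The inductive cases for $\flt$, $\cprod$, $e_1 \uplus e_2$, $\ominus(e)$, and the comprehension $\for{x}{e_1} \collects e_2$ all reduce to the inductive hypothesis together with the (bi)linearity of these constructs with respect to $\uplus$. After substituting $e_i[Z \uplus \Delta Z] = e_i[Z] \uplus \incR{Z}(e_i)$, I would expand using the facts that $\flt$ and $\ominus$ are group homomorphisms and that both $\cprod$ and the comprehension distribute over $\uplus$ in each argument. For the product and the comprehension this distribution yields four summands -- the unchanged value plus three cross terms -- which are precisely the original term and the three-summand delta prescribed by the rules of Figure~\ref{fig:ipnra-delta}.

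The $\keyw{let}$ construct is the main obstacle, and the delta rule's three summands together with the second-order term $\incR{R}(\incR{X}(e_2))$ reflect its discrete product-rule character. The plan is to apply the hypothesis to $e_1$, so that the binding of $X$ changes from $e_1[R]$ to $e_1[R] \uplus \Delta X$ with $\Delta X = \incR{R}(e_1)$, and then to treat $e_2$ as a function of the two simultaneously updated variables $R$ and $X$. Peeling the update off one variable at a time -- first replacing $X$ by $X \uplus \Delta X$ via the hypothesis for $\incR{X}$, then replacing $R$ by $R \uplus \Delta R$ via the hypothesis for $\incR{R}$ applied both to $e_2$ and to $\incR{X}(e_2)$ -- collapses the difference to $\incR{R}(e_2) \uplus \incR{X}(e_2) \uplus \incR{R}(\incR{X}(e_2))$, matching the rule once $X$ and $\Delta X$ are bound as in $\lletBin{X}{e_1}{\Delta X}{\incR{R}(e_1)}$. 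The delicate points are ensuring each appeal to the hypothesis lands on a strict subexpression or a delta thereof, so the induction is well-founded, and keeping the environment bookkeeping consistent -- holding $X$ fixed while differentiating with respect to $R$ and vice versa -- in accordance with the $\keyw{let}$ semantics.
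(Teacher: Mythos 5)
Your proposal is correct and follows essentially the same route as the paper's proof: structural induction on $h$ using the semantics of Figure~\ref{fig:nrc_type_sem}, with the base cases and $\sngr(e)$ immediate from input-independence, the $\forz$, $\cprod$, $\flt$, $\ominus$ and $\uplus$ cases handled by distributivity over bag union, and the $\lletz$ case handled by peeling the update off one variable at a time, applying the induction hypothesis to $e_1$, to $e_2$ (with respect to both $R$ and $X$), and to a first-order delta of $e_2$. Your explicit strengthening of the hypothesis to an arbitrary variable $Z$, and your peeling order (first $X$, then $R$, which produces exactly the mixed term $\incR{R}(\incR{X}(e_2))$ as written in Figure~\ref{fig:ipnra-delta}), merely make explicit what the paper's proof does implicitly (the paper peels $R$ first and arrives at the transposed term $\incR{X}(\incR{R}(e_2))$), and your flagged concerns about well-foundedness and about $\sngr(e)$ under $\incR{X}$ are the same points the paper's own proof glosses over.
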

\begin{proof}(sketch)
The proof follows via structural induction on $h$ and from the semantics of 
$\ipNRC$ constructs (extended proof in Appendix~\ref{sec:app:delta}).
\end{proof}

\vspace{-5mm}

\begin{restatable}{lemma}{deltalemipNRAconst}
\label{lem:delta-const}
The delta of an {\em input-independent} $\ipNRC$ expression $h$ 
is the empty bag, $\incR{R}(h) = \uZ.$
\end{restatable}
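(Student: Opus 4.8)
The plan is to obtain the lemma as an immediate consequence of Proposition~\ref{prop:ipnra-delta} together with the commutative group structure of bag types recorded in Section~\ref{sec:nrc-def}, rather than by grinding through the delta rules directly. First I would instantiate Proposition~\ref{prop:ipnra-delta} at the given expression $h$, obtaining, for arbitrary $R$ and $\Delta R$,
\[
h[R \uplus \Delta R] ~=~ h[R] ~\uplus~ \incR{R}(h)[R, \Delta R].
\]
Since $h$ is \emph{input-independent}, the relation symbol $R$ does not occur free in $h$, so the value $\meane{h}{\Irho;\Vrho}$ does not depend on the binding assigned to $R$ in $\Irho$. In particular, replacing the value of $R$ by that of $R \uplus \Delta R$ leaves the result unchanged, i.e.\ $h[R \uplus \Delta R] = h[R]$.

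Substituting this into the displayed identity gives $h[R] = h[R] \uplus \incR{R}(h)[R, \Delta R]$. Every bag type carries a commutative group structure under $\uplus$, with neutral element $\uZ$ and inverse $\ominus$, so $\uplus$ is cancellative; cancelling $h[R]$ on both sides yields $\incR{R}(h)[R, \Delta R] = \uZ$. As $R$ and $\Delta R$ were arbitrary, the delta expression $\incR{R}(h)$ evaluates to the empty bag in every environment, which is exactly the asserted identity $\incR{R}(h) = \uZ$.

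One could instead argue by structural induction on $h$, noting that $R$-independence propagates to all immediate subexpressions and that each rule in Figure~\ref{fig:ipnra-delta} either returns $\uZ$ outright (for $X$, $p(x)$, $\uZ$, $\sngvar{x}$, $\piTC{x}{i}$, $\etpl$, and $\sngr(e)$; the rule $\incR{R}(R) = \Delta R$ never fires, since $R$ cannot appear in an $R$-independent $h$) or is assembled from the deltas of $R$-independent subexpressions via $\uplus$, $\cprod$, $\flt$, $\forz$, and $\ominus$, each of which sends a (semantically) empty argument to the empty bag. The one delicate case is $\llet{X}{e_1} e_2$, whose delta contains the terms $\incR{X}(e_2)$ and $\incR{R}(\incR{X}(e_2))$; these are \emph{not} subexpressions of $h$, so the structural hypothesis does not apply to them directly, and this is the step I expect to be the main obstacle for the inductive route. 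Discharging it requires the extra observation that $\Delta X$ is bound to $\incR{R}(e_1)$, which is $\uZ$ by the hypothesis, so that $\incR{X}(e_2)$ is a delta taken against a null change of $X$ and again collapses to $\uZ$ by Proposition~\ref{prop:ipnra-delta}. Since the first, global argument sidesteps this complication entirely, I would present that one.
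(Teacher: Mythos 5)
Your proof is correct, and it takes a genuinely different route from the paper's. The paper proves Lemma~\ref{lem:delta-const} by structural induction on $h$, checking against the rules of Figure~\ref{fig:ipnra-delta} that every case collapses to $\uZ$ via algebraic laws such as $\forx{\uZ}{e_2} = \uZ$ and $\uZ \uplus \uZ = \uZ$; you instead obtain it in one step from Proposition~\ref{prop:ipnra-delta}: input-independence gives $h[R \uplus \Delta R] = h[R]$ in every environment, and cancellation in the commutative group of bags forces $\incR{R}(h)[R,\Delta R] = \uZ$. This is non-circular, since the paper proves the Proposition without appealing to the Lemma, and it respects the paper's ordering of the two statements. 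What each approach buys: your global argument is shorter and treats all constructs uniformly; in particular it dissolves the $\lletz$ case, which is exactly where the paper's induction needs the extra fact $\incR{X}(e_2)[X,\uZ] = \uZ$ (itself an instance of your null-update-plus-cancellation observation, as you correctly note in your fallback sketch). The paper's induction, on the other hand, exhibits something slightly more concrete: the expression output by the delta rules rewrites, by those simple laws, to the literal constant $\uZ$. That syntactic reading is the form in which the Lemma is invoked inside the proof of Theorem~\ref{prop:ipnra-cost}, where $\costt{\cdot}{}$ is read off the syntax of the delta expression; your argument establishes the identity only as a semantic equivalence. Since deltas are determined only up to semantic equivalence (any expression satisfying the Proposition's equation serves as a delta), your proof equally licenses substituting $\uZ$ for the delta of an input-independent subexpression before costing, so this is a presentational rather than a logical difference, but it is worth stating explicitly if your version were to replace the paper's. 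Your alternative inductive sketch is essentially the paper's own proof, including its resolution of the let case.
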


The lemma above is useful for deriving in a single step the delta of 
{\em input-independent} subexpressions 
(as in Example~\ref{ex:filter-delta}), but it also plays
an important role in showing that deltas are cheaper than the original queries
(Theorem~\ref{prop:ipnra-cost})
and in the discussion of higher-order incrementalization
(Section~\ref{sec:higher-order-delta}).
 
{\bf Notation.} We sometimes write $\inc(h)$ instead of $\incR{R}(h)$ if the
updated bag $R$ can be easily inferred from the context. 

\begin{example}
\label{ex:filter-delta}
Taking the delta of the $\ipNRC$ query presented in Example~\ref{ex:filter}
results in:
\begin{align*}
\\[-6mm]
\incR{R}(\filter_p)
&= \for{ x }{ \Delta R }\ \where\ p(x) \collects \sng(x),
\\[-6mm]
\end{align*}
since 
$\incR{R}(\for{ \_ }{ p(x) } \collects \sng(x)) = \uZ$
(from Lemma~\ref{lem:delta-const})
and 
$\for{ x }{ e } \collects \uZ = \uZ.$
As expected the delta query of $\filter_p$ amounts to filtering the update:
$\filter_p[\Delta R]$. 
\end{example}

\subsection{Higher-order delta derivation}
\label{sec:higher-order-delta}

The technique of higher-order delta derivation 
stems from the intuition that if the evaluation of a query can be sped up
by re-using a previous result and evaluating a cheaper delta, 
then the same must be true for the delta query itself.
This has brought about an important leap forward in the incremental
maintenance of flat queries~\cite{ahmad14},
and in the following we show 
that our approach to delta-processing enables recursive IVM for
$\pNRC$ as well  
(since we derive `simpler' deltas expressed in the same language
as the original query).

The delta queries $\inc(h)[R, \Delta R]$ 
we generate may depend on both
the update $\Delta R$ as well as the initial bag $R.$
Considering that typically the updates are much smaller than the original bags
and thus the cost of evaluating $\inc(h)$ is most likely dominated by the
subexpressions that depend on $R$,
it is beneficial to partially evaluate $\inc(h)[R, \Delta R]$ offline
wrt.\ those subexpressions that depend only on $R$. 
Once $\Delta R$ becomes available, one can use the partially
evaluated expression of $\inc(h)$ to quickly compute the final update for 
$h[R].$   

\sloppy
However, 
since the underlying bag $R$ is continuously being updated,
in order to keep using this strategy we must be able to efficiently
maintain the partial evaluation of $\inc(h)$. 
Fortunately, $\inc(h)[R, \Delta R]$ is an $\ipNRC$ expression just like $h$,
and thus we can incrementally maintain its partial evaluation wrt.\ $R$ 
based on its second-order delta $\inc^2(h)[R, \Delta R, \Delta' R],$ 
as in
\begin{align*}
\\[-6.5mm]
\inc(h)[R \uplus \Delta' R, \Delta R]
=
\inc(h)[R, \Delta R]
\uplus
\inc^2(h)[R, \Delta R, \Delta' R],
\\[-6mm]
\end{align*}
where $\Delta' R$ binds the update applied to $R$ in $\inc(h)[R, \Delta R].$

\fussy
The same strategy can be applied to $\inc^2(h)$, leading to a series 
$\inc^k(h)$$[R, \Delta R,$$ \cdots, \Delta^{(k-1)} R]$
of partially evaluated higher-order deltas.
Each is used to incrementally maintain the preceding delta
$\inc^{k-1}(h),$ all the way up to the original query $h$.

\vspace{-2mm}
\begin{example}
Given bag $R:\Bag(\Bag(A))$ let us consider the first and second order deltas 
of query $h$
\begin{align*}
\\[-6mm]
&
h[R] = \flt(R) {\x} \flt(R)
\\
&
\inc(h)[R,\Delta R] = \flt(R) {\x} \flt(\Delta R) 
\\
&
\qqquad\quad\;\;
	\uplus 
     		\flt(\Delta R) {\x} 
			( \flt(R) {\uplus} \flt(\Delta R) )
\\
&
\inc^2(h)[\Delta R, \Delta' R]
= \flt(\Delta' R) {\x} \flt(\Delta R)
\\
&
\qqquad\qquad\quad
	\uplus \flt(\Delta R) {\x} \flt(\Delta' R).
\end{align*}
In the initial stage of delta-processing,
besides materializing $h[R]$ as $H_0,$ 
we also partially evaluate $\inc(h)$ wrt.\ $R$ as $H_1[\Delta R]$. 
Then, for each update $U$,
we maintain $H_0$ and $H_1[\Delta R]$ using:
\begin{align*}
\\[-6.5mm]
H_0 = H_0 \uplus H_1[U] 
\quad\;\;
H_1[\Delta R] = H_1[\Delta R] \uplus \inc^2(h)[\Delta R, U].
\\[-6.5mm]
\end{align*}
We note that one can apply updates over partially evaluated expressions
like $H_1[\Delta R]$ due to the rich algebraic structure of the calculus (bags
with addition and Cartesian product form a semiring) which makes it possible to
factorize $H_1[\Delta R] \uplus \inc^2(h)[\Delta R, U]$ into subexpressions that
depend on $\Delta R$, and subexpressions that do not.
Nonetheless, the process of compiling these expressions into highly optimized
trigger programs is outside the scope of this work.

Finally, we remark that in the traditional IVM approach, the value of
$\flt(R)$ which depends on the entire input $R$ is recomputed for each
evaluation of $\inc(h)[R,U]$, whereas with recursive IVM we evaluate it only
once during the initialization phase.
\end{example}
\vspace{-2mm}

Since we can always derive an extra delta query, this process 
could in principle generate an infinite series of deltas and thus render the
approach of recursive IVM inapplicable.
By contrast, we say that a query is {\em recursively incrementalizable}
if there exists a $k$ such that $\inc^k(h)$ no longer depends on the input
(and therefore there is no reason to continue the recursion and to derive a delta for it).
In our previous example, this happened for $k = 2$.
In the following we will show that any $\ipNRC$ query is 
{\em recursively incrementalizable}.

\sloppy
In order to determine the minimum $k$ for which $\inc^k(h)$ is 
input-independent we associate to every $\ipNRC$ expression a degree
$\dg_{\Drho}(h) {:} \bN$ as follows:
$\dg_{\Drho}(R) {\,=} 1$,
$\dg_{\Drho}(X) {\,=} \Drho(X)$,
$\dg_{\Drho}(h) {\,=} 0$ for 
$h \in \{\Delta R, \sngvar{x}, \piTC{x}{i}, \sngr(e), \uZ, p,$ $\etpl \}$
and:
\begin{align*}
\\[-6.5mm]
&
\dg_{\Drho}(e_1 \uplus e_2) = \max(\dg_{\Drho}(e_1), \dg_{\Drho}(e_2))
\\
&
\dg_{\Drho}(\for{x}{e_1} \collects e_2) = 
\dg_{\Drho}(e_1) + \dg_{\Drho}(e_2)
\\
&
\dg_{\Drho}(e_1 \cprod e_2) = 
\dg_{\Drho}(e_1) + \dg_{\Drho}(e_2)
\\
&
\dg_{\Drho}(\flt(e)) =
\dg_{\Drho}(\ominus(e)) = \dg_{\Drho}(e)
\\
&
\dg_{\Drho}(\llet{X}{e_1}\ e_2) = \dg_{\Drho[X\coloneq \dg_{\Drho}(e_1)]}(e_2)
,
\\[-7mm]
\end{align*}
where $\Drho$ associates a degree to each free variable $X$, 
corresponding to the degree of its defining expression.

\fussy
We remark that the expressions $h$ 
that have degree $0$ are exactly those which are {\em
input-independent}.
Therefore, determining the minimum $k$ s.t.\ $\inc^{k}(h)$ is 
{\em input-independent} means finding the minimum $k$ s.t.\ 
$\dg(\inc^{k}(h)) = 0,$ where $\inc^{0}(h) = h$.
In order to show that this $k$ is in fact the degree of $h,$
we give the following theorem, 
relating the degree of an expression to the degree of its delta.
   
\begin{restatable}{theorem}{PropDecreasingDegree}
\label{prop:delta-dec-deg}
Given an input-dependent $\ipNRC$ expression $h[R]$ then   
$\dg(\inc(h)) = \dg(h) - 1$.
\end{restatable}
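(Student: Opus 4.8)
The plan is to prove the statement by structural induction on $h$, carrying along the complementary fact (already available as Lemma~\ref{lem:delta-const}) that the delta of an input-independent expression is $\uZ$, since this is what lets me discard the $\uZ$-valued summands that the delta rules produce before I measure any degrees. This simplification step is not cosmetic but essential: $\dg$ is a purely syntactic over-approximation, so it would mismeasure $e\cprod\uZ$ as $\dg(e)$ rather than $0$. I therefore read $\inc(h)$ in the normal form obtained by replacing the delta of every input-independent subterm with $\uZ$ (Lemma~\ref{lem:delta-const}) and then applying the absorbing laws $e\cprod\uZ = \uZ = \uZ\cprod e$ and $\for{x}{\uZ}\collects e = \uZ = \for{x}{e}\collects\uZ$, which prune exactly the degree-inflating summands. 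Note also that a comprehension variable $x$ occurs only inside $\sngvar{x}$ and $\piTC{x}{i}$, both of degree $0$, so it never contributes to $\dg$ and needs no entry in the degree environment.

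The base and unary cases are immediate. For $h = R$ the rule gives $\inc(R) = \Delta R$ with $\dg(\Delta R) = 0 = \dg(R) - 1$. For $h = \flt(e)$ and $h = \ominus(e)$ the delta commutes with the construct and both are degree-preserving, so input-dependence of $h$ forces that of $e$ and the induction hypothesis yields $\dg(\inc(h)) = \dg(\inc(e)) = \dg(e) - 1 = \dg(h) - 1$. For $h = e_1 \uplus e_2$, whose degree is the maximum of the two, I use linearity of $\inc$: if both operands are input-dependent the hypothesis gives $\max(\dg(e_1) - 1,\dg(e_2) - 1) = \dg(h) - 1$; if only $e_1$ is, then $\inc(e_2) = \uZ$ and $\max(\dg(e_1) - 1, 0) = \dg(e_1) - 1 = \dg(h) - 1$ because $\dg(e_1) \geq 1$, so here the maximum absorbs the spurious $0$ and no pruning is even required.

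The representative work is in the two bilinear cases $h = e_1 \cprod e_2$ and $h = \for{x}{e_1}\collects e_2$, whose degree is the \emph{sum} $\dg(e_1) + \dg(e_2)$ and whose delta is a three-term expression. When both operands are input-dependent, the two ``linear'' summands have degrees $(\dg(e_1) - 1) + \dg(e_2)$ and $\dg(e_1) + (\dg(e_2) - 1)$, both equal to $\dg(h) - 1$, while the cross term has degree $(\dg(e_1) - 1) + (\dg(e_2) - 1) = \dg(h) - 2$, strictly smaller, so the maximum is exactly $\dg(h) - 1$. When one operand, say $e_2$, is input-independent, its delta is $\uZ$ and the pruning above collapses $\inc(h)$ to the single surviving summand $\inc(e_1)\cprod e_2$ (resp.\ $\for{x}{\inc(e_1)}\collects e_2$), of degree $(\dg(e_1) - 1) + \dg(e_2) = \dg(h) - 1$.

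The hard case, and the one that forces a strengthening of the induction hypothesis, is $h = \llet{X}{e_1} e_2$, because its delta rule introduces a fresh binding $\Delta X \coloneq \inc(e_1)$ and the three terms $\inc(e_2) \uplus \incR{X}(e_2) \uplus \incR{R}(\incR{X}(e_2))$ that mix a delta with respect to $R$ and a delta with respect to the \emph{bound} variable $X$. To handle $\incR{X}$, I would prove the theorem in the stronger form: for any update variable $Z$ whose delta variable $\Delta Z$ satisfies $\dg(\Delta Z) = \dg(Z) - 1$, every $Z$-dependent $e$ has $\dg(\incR{Z}(e)) = \dg(e) - 1$ (and $\incR{Z}(e) = \uZ$ when $e$ is $Z$-independent). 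This invariant is reestablished at each binding: by the induction hypothesis on $e_1$ one gets $\dg(\Delta X) = \dg(\inc(e_1)) = \dg(e_1) - 1 = \dg(X) - 1$, so the pair $(X,\Delta X)$ satisfies exactly what $(R,\Delta R)$ does and the strengthened statement applies to both $\incR{R}(e_2)$ and $\incR{X}(e_2)$. Reading the body in the environment $\Drho[X \coloneq \dg(e_1),\, \Delta X \coloneq \dg(e_1) - 1]$ and writing $d_2 = \dg(h) = \dg_{\Drho[X \coloneq \dg(e_1)]}(e_2)$, whichever of $R,X$ the body genuinely depends on contributes a summand of degree $d_2 - 1$, the other linear summand is either $\uZ$ or again of degree $d_2 - 1$, and the iterated cross term $\incR{R}(\incR{X}(e_2))$ is of strictly smaller degree, so the maximum is $d_2 - 1 = \dg(h) - 1$. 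The main obstacle is precisely this bookkeeping: I must check that the $\uZ$-pruning convention interacts correctly with the substitution $\Delta X \coloneq \inc(e_1)$ (in particular, when $e_1$ is input-independent, $\Delta X = \uZ$ and the $X$-mediated terms must vanish so that only $\incR{R}(e_2)$ survives), and that the degree environment is threaded consistently through nested bindings so that the invariant $\dg(\Delta Z) = \dg(Z) - 1$ is available at every point where $\incR{Z}$ is taken.
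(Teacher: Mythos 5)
Your proof is correct and, on every case the paper's own proof treats ($R$, $\forx{e_1}{e_2}$, $e_1\cprod e_2$, $e_1\uplus e_2$, $\ominus(e)$, $\flt(e)$), it follows the same route: structural induction, Lemma~\ref{lem:delta-const} to replace deltas of input-independent subterms by $\uZ$, and a case split on which operands are input-dependent, with the arithmetic $\max\bigl(\dg(e_1){-}1+\dg(e_2),\;\dg(e_1)+\dg(e_2){-}1,\;\dg(e_1)+\dg(e_2){-}2\bigr)=\dg(h)-1$ for the two bilinear constructs. You go beyond the paper in two respects, both worthwhile. First, you make explicit the pruning convention ($e\cprod\uZ=\uZ$, $\forx{\uZ}{e_2}=\uZ$, and so on) that the paper applies silently when it writes $\inc(\forx{e_1}{e_2})=\forx{\inc(e_1)}{e_2}$ in its Case~1; your remark that this step is essential is accurate, since the purely syntactic degree would otherwise assign the spurious summand $\forx{e_1}{\uZ}$ the value $\dg(e_1)>\dg(h)-1$ and break the claimed equality. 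Second, and more substantially, you handle $h=\llet{X}{e_1}\ e_2$, a case the paper's proof omits entirely even though its delta rule produces the mixed summands $\incR{X}(e_2)$ and $\incR{R}(\incR{X}(e_2))$; your strengthened induction hypothesis --- the invariant $\dg(\Delta Z)=\dg(Z)-1$ for every bound update pair, re-established at each binding from the hypothesis on $e_1$ --- is the natural way to make the induction go through, and your treatment of the degenerate subcase where $e_1$ is input-independent (so $\Delta X=\uZ$ and the $X$-mediated summands must vanish, exactly as in the paper's own use of $\incR{X}(e_2)[X,\uZ]=\uZ$ inside the proof of Lemma~\ref{lem:delta-const}) is what keeps the bookkeeping sound. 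In short: same method, but your version is the more complete proof.
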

\begin{proof}(sketch)
The proof follows by induction on the structure of $h$ and
from the definition of $\inc(\cdot)$ and $\dg(\cdot)$ (for the 
extended proof see Appendix \ref{sec:app:higher-delta}).
\end{proof}

Theorem~\ref{prop:delta-dec-deg} 
captures the fact that the delta of a $\ipNRC$ query is `simpler' than the
original query and we can infer from it that
$\dg(\inc^k(h)) = \dg(h) - k$. 
It then follows that $\dg(h)$ is the minimum $k$ s.t.\ 
$\dg(\inc^k(h)) {=} 0$, i.e.\ the minimum $k$ s.t. 
$\inc^{k}(h)$ is {\em input-independent}.

We conclude that with recursive IVM one can avoid computing over
the entire database during delta-processing
by initially materializing the given query and its deltas up to
$\inc^{\max(0,\dg(h)-1)}(h)$,
since those are the only ones that are {\em input-dependent}.
Then, maintaining each such materialized 
$H_i \coloneq \inc^i(h)$ is simply a matter of
partially evaluating $\inc^{i+1}(h)$ wrt.\ the update and applying it to $H_i$. 
Moreover, the ability to derive higher order deltas and materialize them wrt.\ 
the database is the key result that enables 
the \ACz\ vs.\ \NCz\ complexity separation between
nonincremental and incremental evaluation (Theorem~\ref{complex:main_result}).

\subsection{Cost transformation}
\label{sec:cost-transf}

Considering that delta processing is worthwhile only if the size of the change
is smaller than the original input, in this section we discuss what does it mean
in the nested data model for an update to be {\em incremental}.
Then, we provide a cost interpretation to every $\ipNRC$ expression that given
the size of its input estimates the cost of generating the output.
Finally, we prove that 
for incremental updates the derived delta query is indeed cost-effective
wrt.\ the original query.

While for the flat relational case incrementality can be simply defined in terms
of the cardinality of the input bag wrt. the cardinality of the update, this is
clearly not an appropriate measure when working with nested values, since 
an update of small cardinality could have arbitrarily large inner bags. 
In order to adequately capture and compare the size of nested values 
we associate to every type $A$ of our calculus a cost domain $\deg{A}$ 
equipped with a partial order and minimum values.
The definition of $\deg{A}$ is designed to preserve the distribution of
cost across the nested structure of $A$ in order to accurately reflect the size
of nested values and how they impact the processing of queries
operating at different nesting levels.
Thus, for every type in $\ipNRC$ we have:
\begin{align*}
\\[-6mm]
\deg{\Base} = \deg{1}
\quad\;\;
\deg{(A_1 {\x} A_2)} = \deg{A_1} \x \deg{A_2}
\quad\;\;
\deg{\Bag(A)} = \bagS{ \deg{A} }{\pbN},
\\[-6mm]
\end{align*}
where
$\deg{1}$ has only the constant cost 1, 
we individually track the cost of each component in a tuple,
and 
$\bagS{ \deg{A} }{\pbN}$ represents the cost of bags
as the pairing between their cardinality and
the least upper-bound cost of their elements\footnote{We use 
$\bagS{ \deg{A}}{\pbN}$ instead of $\pbN \times \deg{A}$ to distinguish it from
the cost domain of tuples.}. 
Additionally, we define a family of
functions $\sz_A: A \arr \deg{A},$ that associate to any value $a:A$ a cost 
proportional to its size:
\begin{align*}
\\[-7mm]
\sz_{\Base}(x) &= 1
\\
\sz_{A_1\x A_2}(\tuple{x_1,x_2}) &= \tuple{\sz_{A_1}(x_1),\sz_{A_2}(x_2)}
\\
\sz_{\Bag(C)}(X) &=  \bagS{ \sup_{x_i \in X}\ \sz_C(x_i) }{\card{X}}, 
\\[-7mm]
\end{align*} 
where the supremum function is defined based on the following
type-indexed partial ordering relation $\prec_A$:
\begin{align*}
\\[-6.5mm]
x &\prec_{\Base} y 
&&= \false
\\
\tuple{x_1,x_2} &\prec_{A_1 \x A_2} \tuple{y_1,y_2} 
&&= x_1 \prec_{A_1} y_1 \text{ and } x_2 \prec_{A_2} y_2
\\
\bagS{x}{n} &\prec_{\Bag(C)} \bagS{y}{m} 
&&= n < m \text{ and } x \preceq_{C} y.
\\[-6.5mm]
\end{align*} 
Finally, the $x \preceq_{A} y$ ordering is defined analogously to $\prec_A$ by
making all the comparisons above non-strict, with the exception of $\Base$
values for which we have $x \preceq_{\Base} y = \true$.
We denote by $\sOne_A$ the bottom element of $(\deg{A},\prec_A)$.

We can now say that an update $\Delta R$ for a nested bag $R$ is 
{\em incremental} if $\sz(\Delta R)~\prec~\sz(R)$.

\vspace{-2mm}
\begin{example}
\label{ex:cost-bag}
$\text{The}\, 
\text{size}\,
\text{of}\,
\text{bag}\,
R{:}\,\Bag(\String {\x} \Bag(\String)),$
\begin{align*}
\\[-6.5mm]
R &= 
\{ \tuple{ \text{Comedy}, 
                   \set{\text{Carnage}} },
           \tuple{ \text{Animation}, 
                   \set{\text{Up}, \text{Shrek}, \text{Cars}} } \} 
\\[-6.5mm]
\end{align*}
is estimated as
$
\sz(R) : \bagS{ \deg{1} \x \bagS{ \deg{1} }{\pbN} }{\pbN}
= \bagS{ \tuple{ 1, \bagS{1}{3} } }{2}.
$
\end{example}
\vspace{-2mm}

{\bf Notation.} Whenever the cardinality estimation of a bag is $1$, we simply 
write $\set{c}$ as opposed to $1 \set{c},$ where $c$ is the cost estimation for
its elements.

Given an $\ipNRC$ expression $e: \Bag(B),$
we derive its cost $\costt{e}{}: \bagS{\deg{B}}{\pbN}$ 
based on the transformation in Figure~\ref{fig:ipnra-cost},
where $\Irhoc$ and $\Vrhoc$ are cost assignments to variables.
The generated costs have two components: one that computes an upper bound 
for the cardinality of the output bag, 
denoted by $\costt{e}{o} : \pbN$, and another returning
the upper bound for the size of its elements 
$\costt{e}{i} : \deg{B}$.
If $B$ is itself a bag type $\Bag(C)$, we also denote the two components of
$\costt{e}{i}$ by $\costt{e}{oi} : \pbN$ and $\costt{e}{ii} : \deg{C}$.

The cost transformation follows the natural semantics of the constructs
in $\ipNRC$. 
For example, in the case of $\for{x}{e_1}\collects e_2$, the cardinality of the
output is estimated as the product of the cardinalities of the bags returned by
$e_1$ and $e_2$,
while the elements in the output
have the same cost as the elements returned by $e_2.$
We note that in computing the cost of $e_2$ 
we assigned to $x$ the estimated cost for the 
elements of $e_1$.  

\begin{figure}[t]
\begin{align*}
\\[-9mm]
&
\costc{R}{}{} = \size(R)
\qquad\qquad\;\;
\costc{\sngvar{x}}{}{\Irhoc;\Vrhoc} = \set{ \Vrhoc(x) }
\\
&
\costc{X}{}{\Irhoc;\Vrhoc} = \Irhoc(X)
\qquad\quad
\costc{\piTC{x}{i}}{}{\Irhoc;\Vrhoc} = \set{ \pi_i(\Vrhoc(x)) }
\\
&
\costt{p(x)}{} = \sOne_{\Bag(1)} 
\qquad\quad\;\,
\costt{\etpl}{} = \sOne_{\Bag(1)}
\\
&
\costt{\uZ}{} = \sOne_{\Bag(B)}
\qquad\qquad\;\,
\costt{\sngr(e)}{} 
= \set{ \costt{e}{} }
\\
&
\costt{\ominus(e)}{} 
= \costt{e}{}
\qquad\qquad\,
\costt{e_1 {\uplus} e_2}{} 
= \sup ( \costt{e_1}{} , \costt{e_2}{} )
\\
&
\costc{\llet{X}{e_1}\; e_2}{}{\Irhoc;\Vrhoc} = 
\costc{e_2}{}{\Irhoc[X \coloneq \costc{e_1}{}{\Irhoc;\Vrhoc}];\Vrhoc}
\\
&
\costt{e_1 \cprod e_2}{} = 
\bagS{ 
	\tuple{\costc{e_1}{i}{} , \costc{e_2}{i}{}}
}{
	\costc{e_1}{o}{} \cdot \costc{e_2}{o}{} 
}
\\
&
\costt{\flt(e)}{} 
= \bagS{ 
	\costc{e}{ii}{}
}{
	\costc{e}{o}{} \cdot \costc{e}{oi}{} 
}
\\
&
\costt{\for{x}{e_1} \collects e_2}{} = 
\\
&
\quad
\bagS{ \costc{e_2}{i}{ \Irhoc;
                        \Vrhoc[ x \coloneq \costc{e_1}{i}{}] } }
     { \costc{e_1}{o}{\Irhoc;\Vrhoc} {\cdot} 
       \costc{e_2}{o}{ \Irhoc;
                        \Vrhoc[ x \coloneq \costc{e_1}{i}{}] } }
\\[-7mm]
\end{align*}

\caption{
The cost transformation 
$\costt{f}{} {=} \bagS{\costt{f}{i}}{\costt{f}{o}} : \bagS{\deg{B}}{\pbN}$
over the constructs of $\ipNRC.$
}
\label{fig:ipnra-cost}
\end{figure}

Finally, we leverage the estimated cost of an expression to obtain an upper
bound on its running time:
\begin{lemma}
\label{lem:run_time}
An $\ipNRC$ expression $h{\,:\,} \Bag(B)$ can be evaluated in
$\Omega( \tcost_{\Bag(B)}(\costt{h}{}))$, 
where $\tcost_A{:} \deg{A} {\arr} \bN$ is defined as:
\begin{align*}
\\[-7mm]
&
\tcost_{Base}(c) = 1
\qquad
\tcost_{\Bag(C)}(\bagS{c}{n}) = n \cdot \tcost_C(c)
\\
&
\tcost_{A_1 \x A_2}(\tuple{c_1,c_2}) = \tcost_{A_1}(c_1) + \tcost_{A_2}(c_2).
\\[-5mm]
\end{align*}
\end{lemma}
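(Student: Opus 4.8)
The plan is to read the stated bound as a worst-case lower bound on the cost of producing the result, and to derive it from the observation that any evaluation of $h$ must at least write down the value $\meane{h}{}$. First I would record the exact relationship between the two size maps. For every value $v:A$, the number of $\Base$- and $1$-typed cells occurring in $v$ is obtained by summing the cell counts over the elements of each bag, whereas $\tcost_A(\sz_A(v))$ replaces each such sum by cardinality times the supremum of the element costs. Hence $\tcost_A(\sz_A(v))$ always over-counts, but it is tight---i.e.\ the true cell count of $v$ is $\Omega(\tcost_A(\sz_A(v)))$---precisely when every bag occurring in $v$ is \emph{uniform}, meaning that its elements share a common size; for then $\card{X}\cdot\tcost_C(\sup_{x_i\in X}\sz_C(x_i)) = \sum_{x_i\in X}\tcost_C(\sz_C(x_i))$ at each nesting level, so cell count and $\tcost_A(\sz_A(v))$ agree up to a constant.

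Second, I would show that the cost transformation of Figure~\ref{fig:ipnra-cost} is attained from below in the worst case. By structural induction on $h$ I would build, for each cost assignment $\Irhoc;\Vrhoc$, a family of inputs (a relation $R$ together with values for the free variables whose sizes match $\Irhoc;\Vrhoc$) on which $\meane{h}{}$ is uniform and satisfies $\tcost_{\Bag(B)}(\sz_{\Bag(B)}(\meane{h}{})) = \Omega(\tcost_{\Bag(B)}(\costt{h}{}))$. The base cases $R$, $\sngvar{x}$, $\piTC{x}{i}$, $\uZ$, $p(x)$, $\etpl$ and $\sngr(e)$ are immediate from the matching semantic and cost clauses. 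For $e_1\cprod e_2$ and $\for{x}{e_1}\collects e_2$ the product of cardinalities in the cost clause is realized exactly, since the Cartesian product and the comprehension emit $\costt{e_1}{o}\cdot\costt{e_2}{o}$ elements, each of the element size stipulated by the clause; for $e_1\uplus e_2$ the supremum is realized by enlarging the smaller side to the same element size; and for $\flt(e)$ the two cardinalities multiply as in the clause. Throughout I would maintain the uniformity invariant, choosing all inner bags at a given level to have the common maximal size, so that the identity of the first step applies to the final output.

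Combining the two steps, on these worst-case inputs $\meane{h}{}$ has $\Omega(\tcost_{\Bag(B)}(\costt{h}{}))$ cells, and since every evaluation strategy must emit this output it performs at least that many steps in the step-counting model, which yields the claimed bound. I expect the inductive tightness argument to be the main obstacle: the clauses for $\forz$, $\cprod$ and $\flt$ collapse a sum of element sizes into cardinality-times-supremum, so the witnessing inputs must make all of these suprema attained simultaneously and introduce no slack at any nesting depth. Propagating the uniformity invariant through $\llet{X}{e_1}\,e_2$ is the delicate point, because there the cost of $e_1$ is substituted for $X$ and may be consumed at several occurrences inside different surrounding contexts; I would handle this by fixing a single uniform witness for $e_1$ once and reusing it at every occurrence of $X$, so that the substituted cost $\costt{e_1}{}$ is genuinely realized wherever $X$ is read.
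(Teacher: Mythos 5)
You have inverted the direction of the claim. Despite the (admittedly misleading) use of $\Omega$ in the statement, this lemma asserts---and the paper's proof establishes, and its downstream uses require---an \emph{upper} bound on evaluation time: the paper's proof opens with ``in order to show that $h$ can be computed \emph{within} $\Omega(\tcost_{\Bag(B)}(\costt{h}{})) = \Omega(\costt{h}{o}\cdot\tcost_B(\costt{h}{i}))$,'' and proceeds constructively. After inlining $\lletz$-bound variables, it exhibits a \emph{lazy} evaluation strategy $\meanl{\cdot}{}$ that emits closures $\lbag_{e,\Vrho}$ in place of inner bags, so the top-level bag is produced in time $O(\costt{h}{o})$, and then expands each element via $\expf$, each expansion costing at most $\tcost_B(\costt{h}{i})$ since $\costt{h}{i}$ upper-bounds element sizes. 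Your argument---that any evaluator must at least write down the output, and that on crafted worst-case inputs the output has $\Omega(\tcost(\costt{h}{}))$ cells---proves a lower bound on unavoidable work for particular inputs. That neither shows that \emph{some} evaluator meets the bound (a naive eager evaluator can exceed it, e.g.\ by materializing inner bags that are later projected away, which is precisely what the paper's laziness avoids), nor supports the lemma's role in Theorem~\ref{prop:ipnra-cost}, where $\tcost(\costt{\inc(h)}{}) < \tcost(\costt{h}{})$ is read as ``the delta can be evaluated more cheaply than the original query.''

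Moreover, the tightness construction at the heart of your proposal fails on its own terms: the cost transformation is deliberately conservative and its estimates are not jointly attainable. Cardinality components live in $\pbN$, so $\costt{\uZ}{}$ and $\costt{p(x)}{}$ have cardinality $1$; hence $\costt{\for{x}{R} \collects \uZ}{}$ has cardinality $\card{R}$ although this expression evaluates to the empty bag on \emph{every} input, so no witnessing input exists and your induction breaks at exactly the case you flagged as unproblematic (the claim that a comprehension ``emits $\costt{e_1}{o}\cdot\costt{e_2}{o}$ elements'' is false whenever the body is $\uZ$ or predicate-guarded). Similarly, $\costt{e_1 \uplus e_2}{o} = \sup(\costt{e_1}{o},\costt{e_2}{o})$ can even undercount the true output cardinality by a factor of two, so the estimate is not an exact output-size bound in either direction; it is a bound on evaluation \emph{time}, which includes iterating over elements that contribute nothing to the output. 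This is why the paper's proof reasons about an evaluation procedure rather than about output sizes, and why your uniformity invariant cannot be propagated through $\forz$ with an output-erasing body.
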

\begin{proof}(Sketch)
\sloppy
In order to show that $h$ can be computed within
$\Omega( \tcost_{\Bag(B)}(\costt{h}{})) = 
\Omega(\costt{h}{o} \cdot \tcost_{B}(\costt{h}{i}))$ we assume that all
$\lletz$-bound variables have been replaced by their definition and we proceed
in two steps.
At first we compute a lazy version of the result $h^L = \meanl{h}{}$,
which instead of inner bags produces lazy bags $\lbag_{e,\Vrho}$, 
i.e.\ closures containing the expression $e$ that would have generated the
inner bag, along with $\Vrho$, the value assignment for $e$'s free variables at
the time of the evaluation.
The lazy evaluation strategy $\meanl{\cdot}{}$ operates similar to the standard
interpretation $\meane{\cdot}{}$, except for the singleton
construct $\meanl{\sng(e)}{\Vrho} = \lbag_{e,\Vrho}$ and for interpreting lazy
values $\meanl{\lbag_{e,\Vrho}}{\Vrho'} = \meanl{e}{\Vrho}$, for which we
replace the current value assignment $\Vrho'$ with the one stored in the
closure.
Considering that producing each element of $h^L$ takes constant time (since
building tuples and closures takes constant time), it follows that this step can
be done in time proportional to the cardinality of the output $O(\costt{h}{o})$.

\fussy
In the second step we expand the lazy values appearing in each element of
$h^L$ in order to obtain the final value of $h$. To do so we use the
following expansion function:
\begin{align*}
\\[-6mm]
&
\expf_{\Base}(x) {=} x,
\;\,
\expf_{A_1{\x}A_2}(\tuple{x_1,x_2}) {=} 
\tuple{\expf_{A_1}(x_1),\expf_{A_2}(x_2)}
\\
&
\quad
\expf_{\Bag(C)}(\lbag_{e,\Vrho}) = \forr{y}{\meanl{e}{\Vrho}}{\sng(\expf_C(y))}. 
\\[-6mm]
\end{align*}

We remark that, by postponing the materialization of inner bags until after the
entire top level bag has been evaluated, we avoid computing the contents of
nested bags that might get projected away in a later stage of the computation
(as might be the case for an eager evaluation strategy). 

Our result then follows from the fact that expanding each element $x:B$ from
$h^L$ takes at most $\tcost_{B}(\costt{h}{i})$, which can be easily shown through
induction over the structure of $B$ and considering that  
$\costt{h}{i}$ represents on upper bound for the size of the elements in the
output bag.
\end{proof}

\vspace{-5mm}
\begin{example}
If we apply the cost transformation to the $\related[M]$ query in
section~\ref{sec:inc_related} we get cost estimate:
\begin{align*}
\\[-6mm]
&
\costt{\related[M]}{} 
= \bagS{ \tuple{1,\bagS{1}{\card{M}}} }{ \card{M} },
\\[-6mm]
\end{align*}  
and an upper bound for its running time as 
$\Omega(\card{M} (1+\card{M}))$, which fits within the expected execution time
for this query.
\end{example}
\vspace{-2mm}

We can now give the main result of this section showing that 
for incremental updates delta-processing is more cost-effective than
recomputation.
\begin{restatable}{theorem}{costthipNRA}
\label{prop:ipnra-cost}
$\ipNRC$ is efficiently incrementalizable, i.e.\ 
for any {\em input-dependent} $\ipNRC$ query $h[R]$
and incremental update $\Delta R$, then: 
\begin{align*}
\\[-6mm]
&
\tcost(\costt{\inc(h)}{}) < \tcost(\costt{h}{}).
\\[-6mm]
\end{align*}
\end{restatable}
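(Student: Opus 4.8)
The plan is to reduce the target inequality $\tcost(\costt{\inc(h)}{}) < \tcost(\costt{h}{})$ to the sharper statement that, in the cost partial order, $\costt{\inc(h)}{} \prec \costt{h}{}$, and then to lift this to running times. The lifting is immediate once one checks that $\tcost$ is strictly monotone with respect to $\prec$: since cardinalities live in $\pbN$ and every base/unit cost contributes at least $1$, a routine induction on types shows $\tcost_A(c) \geq 1$ and that $c \preceq_A c'$ implies $\tcost_A(c) \leq \tcost_A(c')$; combining these, $\bagS{x}{n} \prec_{\Bag(B)} \bagS{y}{m}$ (i.e.\ $n < m$ and $x \preceq y$) gives $n\cdot\tcost(x) \leq n\cdot\tcost(y) < m\cdot\tcost(y)$, hence $\tcost(\bagS{x}{n}) < \tcost(\bagS{y}{m})$. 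Because every $\ipNRC$ expression has bag type, it therefore suffices to establish $\costt{\inc(h)}{} \prec \costt{h}{}$ for every input-dependent $h$.

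Before the induction I would record two normalizations. First, exactly as in the proof of Lemma~\ref{lem:run_time}, I inline all $\lletz$-bindings, replacing each $X$ by its definition; this is semantics-preserving, matches the $\costt{\cdot}{}$ rule for $\lletz$ (which substitutes $\costt{e_1}{}$ for $X$), and reduces $\incR{R}(\cdot)$ to the let-free rules, so that $h$ depends only on the single input $R$ and ``input-independent'' coincides with ``$\incR{R}(h) = \uZ$'' via Lemma~\ref{lem:delta-const}. Second, I would prove the auxiliary monotonicity of the cost transformation in its assignments: if $\Vrhoc \preceq \Vrhoc'$ pointwise then $\costc{e}{}{\Irhoc;\Vrhoc} \preceq \costc{e}{}{\Irhoc;\Vrhoc'}$, by an easy induction on $e$ using that $\cdot$, $\max$, $\sup$ and the rules of Figure~\ref{fig:ipnra-cost} are all monotone. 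This is needed because in the $\forz$ rule the bound variable $x$ is assigned the element-cost of the source bag, and a delta replaces that source by a cheaper one.

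The core is then a structural induction proving $\costt{\inc(h)}{} \prec \costt{h}{}$, with induction hypothesis supplying strict cardinality ($\costt{\inc(e)}{o} < \costt{e}{o}$) and non-strict element cost ($\costt{\inc(e)}{i} \preceq \costt{e}{i}$) for every input-dependent subexpression $e$. The base case $h = R$ is precisely where incrementality enters: $\costt{\inc(R)}{} = \costt{\Delta R}{} = \sz(\Delta R) \prec \sz(R) = \costt{R}{}$. For $e_1 \cprod e_2$ and $\for{x}{e_1}\collects e_2$, the delta is a union of three terms in which at least one factor is a delta; whenever a subexpression is input-independent its delta is $\uZ$ (Lemma~\ref{lem:delta-const}) and I simplify the corresponding term away using $\uZ \cprod e = \uZ$ and $\for{x}{\uZ}\collects e = \for{x}{e}\collects \uZ = \uZ$, exactly the simplifications already used in Example~\ref{ex:filter-delta}. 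In the surviving terms the cardinality is a product/$\max$ in which one factor strictly drops (by the hypothesis or the base case) while the others stay bounded by the originals (using the assignment-monotonicity above and $\costt{\cdot}{o}\geq 1$), so the resulting cardinality is strictly below $\costt{h}{o}$; the element costs are component-wise $\sup$s bounded by $\costt{h}{i}$. The cases $\flt(e)$, $\ominus(e)$ and $e_1 \uplus e_2$ are direct, the last using that $\prec$ followed by $\preceq$ (through $\sup$) is still $\prec$.

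The main obstacle is keeping the cardinality strictly decreasing precisely in the presence of input-independent subexpressions: there the naive cost of a term such as $\inc(e_1)\cprod e_2$ with $\inc(e_1)=\uZ$ would be $1\cdot\costt{e_2}{o}$ (since the bottom cardinality in $\pbN$ is $1$, not $0$), which need not beat $\costt{e_1}{o}\cdot\costt{e_2}{o}$; this is why the $\uZ$-absorption simplifications must be applied before costing, and why the let-free normalization (so that input-independence is detected syntactically through Lemma~\ref{lem:delta-const}) is essential. The remaining routine bookkeeping --- the product/$\max$ estimates and the cost-assignment monotonicity in the $\forz$ case --- then yields $\costt{\inc(h)}{}\prec\costt{h}{}$ and hence the theorem.
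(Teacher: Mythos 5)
Your proposal follows essentially the same route as the paper's proof: a structural induction establishing $\costt{\inc(h)}{} \prec \costt{h}{}$ (case-splitting on input-independence of subexpressions via Lemma~\ref{lem:delta-const} together with the $\uZ$-absorption simplifications), supported by the same assignment-monotonicity lemma (the paper's Lemma~\ref{lem:ipnra-cost-mon}), and then lifted to running times through the strict monotonicity of $\tcost$ with respect to $\prec$. Your two additions --- spelling out the $\tcost$ lifting explicitly and inlining $\lletz$-bindings (a case the paper's appendix proof silently omits) --- are refinements of, not departures from, the paper's argument.
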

\begin{proof}(sketch)
We first show by induction on the structure of $h$ and using the cost
semantics of $\ipNRC$ constructs that 
$\costt{\inc(h)}{} \prec \costt{h}{}.$
Then the result follows immediately from the definition of $\tcost_A(\cdot)$ and
$\prec_A$ (for the extended proof see Appendix \ref{sec:app:cost}).
\end{proof}

\vspace{-2mm}
It can be easily seen that $\filter_p[R]$ is {\em efficiently incrementalizable}
since its delta is $\filter_p[\Delta R]$ and 
$\costt{\filter_p[R]}{} = \costt{R}{},$
therefore 
$\costt{\Delta R}{} \prec \costt{R}{}$
implies
$\costt{\filter_p[\Delta R]}{} \prec \costt{\filter_p[R]}{}.$


\section{Incrementalizing $\pNRC$}
\label{sec:shred}

We now turn to the problem of efficiently incrementalizing $\pNRC$ queries 
that make use of the unrestricted singleton construct.
As showcased in Section~\ref{sec:mot_ex}, 
an efficient delta rule for $\sng(e)$ requires deep updates which 
are not readily expressible in $\pNRC$.
Moreover, deep updates are necessary not only for maintaining the output of a 
$\pNRC$ query, but also for applying local changes to the inner bags of the 
input.
To address both problems  
we propose a shredding transformation that translates any $\pNRC$
query into a collection of efficiently incrementalizable expressions 
whose deltas can be applied via regular bag union. Furthermore, we show that 
our translation generates queries semantically equivalent to the original query,
thus providing the first solution for the efficient delta-processing of
$\pNRC$.

\subsection{The shredding transformation}
\label{sec:shredding_trans}

The essence of the shredding transformation is the replacement of inner bags by 
labels while separately storing their definitions in label dictionaries.  
Accordingly,
we inductively map every type $A$ of $\pNRC$ to a label-based/flat representation 
$\fc{A}$ along with a context component $\gc{A}$ for the corresponding  
label dictionaries:
\begin{align*}
\\[-6.5mm]
\fc{\Base} &= \Base
&
\gc{\Base} &= 1
\\
\fc{(A_1 {\x} A_2)} &= \fc{A_1} {\x} \fc{A_2} \quad
&
\gc{(A_1 {\x} A_2)} &= \gc{A_1} {\x} \gc{A_2}
\\
\fc{\Bag(C)} &= \bL 
&
\gc{\Bag(C)} &= (\bL \mapsto \Bag(\fc{C})) \x \gc{C}
\\[-6.5mm]
\end{align*}
For instance, the flat representation of a bag of type $\Bag(C)$ is a label 
$l:\bL$, whereas its context includes a label dictionary
$\bL{\mapsto}\Bag(\fc{C})$, mapping $l$ to the flattened contents of the bag.

The shredding transformation takes
any $\pNRC$ expression $h[R]{:} \Bag(B)$ to: 
\begin{align*}
\\[-6.5mm]
\fcf{h}[\fc{R}, \gc{R}] : \Bag(\fc{B})
\quad \text{and} \quad
\gcf{h}[\fc{R}, \gc{R}] : \gc{B},
\\[-6.5mm]
\end{align*}
where $\fcf{h}$ computes the flat representation of the output bag,
while the set of queries in $\gcf{h}$ 
define the context, 
i.e.\ the dictionaries corresponding to the labels introduced by $\fcf{h}$.
We note that the shredded expressions depend on the shredded input bag
$\fc{R} = \fcf{R},$ $\gc{R} = \gcf{R}$\footnote{%
We consider a full shredding of the input/output down to flat relations,
although the transformation can be easily fine-tuned in order to expose only
those inner bags that require updates. 
},
and that they make use of several new constructs for working with labels: 
the label constructor $\inL$, 
the dictionary constructor $[l \mapsto e]$, 
and the label union of dictionaries $\dcup$.
We denote by $\lpNRC$ and $\ilpNRC$, the extension with these constructs of
$\pNRC$ and $\ipNRC$, respectively,
but we postpone their formal definition until the following section.
Next, we discuss some of the more interesting cases of the shredding
transformation, for the full definition see Appendix~\ref{app:shred-trans}.

{\bf Notation.} We often shorthand $\fcf{h}$ and $\gcf{h}$ as 
$\fc{h}$ and $\gc{h},$ respectively.
We will also abuse the notation $\VGamma / \Vrho$ representing the type/value
assignment for the free variables of an expression introduced by $\forz$
constructs, to also denote a tuple type/value with one component for each such
free variable.

For the unrestricted singleton construct $\sng(e)$
we tag each of its occurrences in an expression with a unique static index
$\iota$. 
Given the shredding of $e$, 
$\fc{e} : \Bag(\fc{B})$, $\gc{e} : \gc{B},$
we transform $\sng_{\iota}(e)$ as follows: 
we first replace the inner bag $\fc{e}$ in its output
with a label $\tuple{\iota,\Vrho}$ using the label constructor
$\inL_{\iota,\VGamma},$ 
where $\Vrho:\VGamma$ represents 
the value assignment for all the free variables in $\fc{e}$. 
Since $\fc{e}$ operates only over shredded bags, it follows that $\Vrho$ is a
tuple of either primitive values or labels.
Then we extend the context $\gc{e}$ with a dictionary 
$[(\iota,\VGamma) \mapsto \fc{e}]$ mapping 
labels $\tuple{\iota,\Vrho}$ to their definition $\fc{e}$: 
\begin{align*}
\\[-6mm]
\fcf{ \sng_\iota(e) }: \Bag(\bL)\qqquad\;\;\;\; 
&= \inL_{\iota,\VGamma}(\Vrho)
\\
\gcf{ \sng_\iota(e) }: \bL \mapsto \Bag(\fc{B}) \x \gc{B} &= 
	        \tuple{ [(\iota,\VGamma) \mapsto \fc{e}],
				    \gc{e}
	        }. 
\\[-6mm]
\end{align*}
We incorporate the value assignment $\Vrho$ within labels as it allows us to
discuss the creation of labels independently from their defining dictionary.
Also, since the value assignment $\Vrho$ uniquely determines the definition of
a label $\tuple{\iota,\Vrho}$, this also ensures that we do not generate
redundant label definitions.
Since our results hold independently from a particular indexing scheme, we do
not explore possible alternatives, although they
can be found in the literature~\cite{CheneyLW14}.

For the shredding of $\flt(e), e:\Bag(\Bag(B))$, 
we simply expand the labels returned by 
$\fc{e}:\Bag(\bL)$, based on the corresponding
definitions stored in the first component of the context
$\gc{e}:  \bL \mapsto \Bag(\fc{B}) \x \gc{B}$:
\begin{align*}
\\[-7mm]
\fcf{ \flt(e) }: \Bag(\fc{B}) 
&= 
\for{l}{\fc{e}} \collects \ggc{e}{1}(l), 
\\[-7mm]
\end{align*}
where we denote by $\ggc{e}{1}/\ggc{e}{2}$ the first/second component of
$\gc{e}$.

Finally, for adding two queries in shredded form via $\uplus$, we add
their flat components, but we label union their contexts, i.e.\ their label
dictionaries:
\begin{align*}
\\[-7mm]
\fcf{ e_1 \uplus e_2 } &= \fc{e_1} \uplus \fc{e_2}
&&&
\gcf{ e_1 \uplus e_2 } &= \gc{e_1} \dcup  \gc{e_2}. 
\\[-7mm]
\end{align*}
 
To complete the shredding transformation we also inductively define 
$\shF_A : A \arr \Bag(\fc{A}) \text{ and } \shG_A : \gc{A},$ for shredding 
input bags $R : \Bag(A)$, as well as $\nst_{A}[\gc{a}]: \fc{A} \arr \Bag(A)$ 
for converting them back to nested form, as in:
\begin{align*}
\\[-6.5mm]
\fc{R} 
&= 
\for{r}{R} \collects
\sh^F_A(r)
\qquad
\gc{R} = \shG_A
\\
R &= 
\for{\fc{r}}{\fc{R}}
\collects
\nst_A[\gc{R}](\fc{r})
. 
\end{align*}
Shredding primitive values leaves them unchanged and produces no dictionary
($\gc{\Base} = 1$), while tuples get shredded and nested back component-wise.
For shredding inner bag values we rely on an
association between every bag value $v$ in the database and a label $l$, as
given via mappings $\gdic_C, \gdic^{-1}_C$:
\begin{align*}
\\[-6.5mm]
&
\gdic_C : \Bag(C) \arr \Bag(\bL)
&&
\gdic_C(v) = \sngv{l}
\\
&
\gdic^{-1}_C : \bL \darr \Bag(C) 
&&
\gdic^{-1}_C(l) = v.
\\[-6.5mm]
\end{align*} 
The shredding context for these labels is then obtained by 
mapping each label $l$ from the dictionary $\gdic^{-1}_C$
to a shredded version of its original value $v$.
The full details for the definition of $\sh^F, \sh^\gG$ and $\nst$ can be found
in Appendix~\ref{app:shred-trans}.

\subsection{Working with labels}
\label{sec:work_labels}

In the following we detail the semantics of $\ilpNRC$'s constructs for operating
on dictionaries and we show that
$\ilpNRC$ is indeed efficiently incrementalizable.

Given an expression $e : \Bag(B)$ with 
a value assignment for its free variables $\Vrho: \VGamma$, 
we define a label dictionary $[(\iota,\VGamma) \mapsto e]: \bL \darr \Bag(B)$, 
i.e.\ a mapping between labels $l = \tuple{\iota,\Vrho}$ and bag values
$e:\Bag(B)$, as:
\begin{align*}
\\[-6mm]
[(\iota,\VGamma) \mapsto e](\tuple{\iota',\Vrho}) 
&= 
\text{ if }(\iota == \iota')\ \rename_{\Vrho}(e)
\text{ else }\eptybag
\\[-6mm]
\end{align*} 
where $\rename_{\Vrho}(e)$ replaces each free variable from $e$ with its
corresponding projection from $\Vrho$.
A priori, such dictionaries have infinite domain, i.e.\ they produce a bag for
each possible value assignment $\Vrho$.
However, when materializing them as part of a shredding context we need only
compute the definitions of the labels produced by the flat version of the query.
\vspace{0mm}
\sloppy
\begin{example}
Given $\relB(m){:}\,\Bag(String)$, the query from the motivating example in 
section~\ref{sec:mot_ex},  
dictionary
$d\;{=}\;[(\iota,\Movie)\;{\mapsto}\;\relB(m)]$ of type 
$\bL\;{\darr}\;\Bag(String)$ builds a mapping between labels 
$l\;{=}\;\tuple{\iota,m}$ and the bag of related movies computed by $\relB(m)$,
where $l$ need only range over the labels produced by $\fc{\related}$.
\vspace{-2mm}

\fussy
\end{example}

{\bf Notation.}
We will often abuse notation and use $l$ to refer to both the kind of a label
$(\iota,\VGamma)$, as well as an instance of a label $\tuple{\iota,\Vrho}$.

In order to distinguish between an empty definition, $[] = \uZ$, 
and a definition that maps its label to the empty bag, $[l \mapsto \uZ]$, 
we attach support sets to label definitions such that $\supp([]) = \emptyset$ and 
$\supp([l \mapsto e])=\set{ l }.$

For combining dictionaries of labels,
i.e.\ $d = [l_1 {\mapsto} e_1, \cdots, l_n {\mapsto} e_n] : \bL \darr \Bag(B)$,
with $\supp(d) = \set{l_1, \cdots, l_n}$,
we define
the addition of dictionaries 
$(d_1 \uplus d_2)(l) = d_1(l) \uplus d_2(l)$
as well as the {\em label union} of dictionaries 
$d_1 \dcup d_2$,
where $ d_1, d_2: \bL {\darr} \Bag(B)$, 
 $\supp(d_1 \dcup d_2) = \supp(d_1) \cup \supp(d_2)$ and:
\begin{align*}
\\[-5mm]
(d_1 \cup d_2)(l) &= d_1(l), \text{ if } l \in \supp(d_1) {\setminus} \supp(d_2)
\\
(d_1 \cup d_2)(l) &= d_2(l), \text{ if } l \in \supp(d_2) {\setminus} \supp(d_1)
\\
(d_1 \cup d_2)(l) &= d_1(l), \text{ if } l \in \supp(d_1) {\cap} \supp(d_2) 
                             \;\&\;d_1(l)\;\!{=}\;\!d_2(l) 
\\
(d_1 \cup d_2)(l) &= \text{error}, 
                     \text{ if } l \in \supp(d_1) {\cap} \supp(d_2) 
                     \;{\&}\;d_1(l)\;\!{\neq}\;\!d_2(l) 
\\[-5mm]
\end{align*}

We ensure the well definedness of the label union operation 
by requiring that the definitions of labels found in both input dictionaries 
must agree,
i.e.\ for any $l \in \supp(d_1) \cap \supp(d_2)$ we must have
$d_1(l) = d_2(l)$. 
If this condition is not met the evaluation of $\dcup$ will result in an error.
We remark that $\cup$ cannot modify a label definition, only $\uplus$ can 
(for an example contrasting their semantics see 
Appendix~\ref{app:shred:ex-lab-dic}).
Moreover, we formalize the notion of consistent shredded values, 
i.e.\ values that do not contain undefined labels 
or definitions that conflict
and we show that shredding produces 
consistent values and that given consistent inputs, shredded $\pNRC$
expressions also produce consistent outputs
(Appendix~\ref{app:shred-val-cons}). 
This is especially important for guaranteeing
that the union of dictionaries performed by the
shredded version of bag addition cannot change the expansion of any label.

Finally, we introduce the delta rules and the degree and
cost interpretations for the new
label-related constructs:
\begin{align*}
\\[-6mm]
&
\inc( [l \mapsto e] ) = [l \mapsto \inc(e)]
\;\;\;
\inc( \inL_l ) = \uZ
\;\;\;
\inc( e_1 {\dcup} e_2 )\,{=}\,\inc(e_1)\,{\dcup}\,\inc(e_2)
\\
&
\dg([l \mapsto e]) = \dg(e)
\qquad
\qqquad
\dg(\inL_l) = 0
\quad
\\
&
\dg( e_1 \dcup e_2 ) = \max(\dg(e_1), \dg(e_2))
\\
&
\costt{[l \mapsto e](l')}{} = \costt{e}{}
\qquad
\qqquad
\costt{\inL_l(a)}{} = \set{1}
\\
&
\costt{(e_1 {\dcup} e_2)(l)}{} = \sup( \costt{e_1(l)}{}, \costt{e_2(l)}{} )
,
\\[-6mm]
\end{align*}
where the cost domains for labels is $\deg{1}$.
Based on these definitions we prove the following result:

\vspace{-2mm}
\begin{restatable}{theorem}{deltathilpNRA}
\label{th:ilpNRA-delta}
$\ilpNRC$ is recursively and efficiently incrementalizable,
i.e.\ given any input-dependent $\ilpNRC$ query $h[R],$
and incremental update $\Delta R$ then:
\begin{align*}
\\[-5mm]
&
h[R\uplus\Delta R]= 
h[R] \uplus \inc(h)[R,\Delta R],
\quad 
\dg(\inc(h)) = \dg(h) - 1
\\
&
\qquad\text{\ \ and\ \ }\qquad
\tcost(\costt{\inc(h)}{}) < \tcost(\costt{h}{}).
\\[-5mm]
\end{align*}
\end{restatable}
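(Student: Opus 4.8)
The statement bundles three assertions — delta correctness, the degree identity $\dg(\inc(h)) = \dg(h)-1$, and the cost inequality $\tcost(\costt{\inc(h)}{}) < \tcost(\costt{h}{})$ — that are exactly the $\ilpNRC$ analogues of Proposition~\ref{prop:ipnra-delta}, Theorem~\ref{prop:delta-dec-deg}, and Theorem~\ref{prop:ipnra-cost} for $\ipNRC$. My plan is therefore not to redo those proofs but to \emph{extend} each of them: all three proceed by structural induction on $h$, and since $\ilpNRC$ adjoins to $\ipNRC$ only the label constructor $\inL_l$, the dictionary former $[l \mapsto e]$, and the label union $\dcup$, every pre-existing inductive case carries over verbatim. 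The work is confined to three new cases in each of the three inductions. As a preliminary I would first establish the $\ilpNRC$ analogue of Lemma~\ref{lem:delta-const} (input-independent expressions have delta $\uZ$): this is immediate for $\inL_l$ from $\inc(\inL_l)=\uZ$ and follows for $[l\mapsto e]$ and $e_1 \dcup e_2$ by the same induction, and it is needed to discharge the mixed subcases in the degree and cost arguments.

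For \textbf{delta correctness} the three new cases split cleanly. The constructor $\inL_l$ is input-independent (degree $0$), so $\inc(\inL_l) = \uZ$ and the identity holds trivially. For $[l \mapsto e]$ I would use that dictionary addition is pointwise, $(d_1 \uplus d_2)(l') = d_1(l')\uplus d_2(l')$, and that $[l\mapsto e](l')$ either renames $e$ via $\rename_\Vrho$ or returns the empty bag independently of the database; the identity then reduces on each label to the induction hypothesis for $e$, since $\rename_\Vrho$ commutes with the update of $R$. The genuinely delicate case is $e_1 \dcup e_2$.

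The \textbf{main obstacle} is verifying $(e_1 \dcup e_2)[R\uplus\Delta R] = (e_1 \dcup e_2)[R] \uplus (\inc(e_1)\dcup\inc(e_2))[R,\Delta R]$, because $\dcup$ is not linear: on shared labels it takes a \emph{single} common value rather than summing, and it may in principle error. Here I would fix a label $l$ and reduce, via the pointwise hypotheses $e_i[R\uplus\Delta R](l) = e_i[R](l) \uplus \inc(e_i)[R,\Delta R](l)$, to a case analysis on membership of $l$ in the various support sets $\supp(\cdot)$. The key enabling fact is the consistency invariant of Appendix~\ref{app:shred-val-cons}: because a label $\tuple{\iota,\Vrho}$ determines its definition $\rename_\Vrho(\cdot)$ uniquely from its static index and value assignment, any two dictionaries arising in a shredded expression — and, crucially, their deltas — agree on every shared label. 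Hence $\dcup$ never errors on consistent inputs, its value on $l$ is just the unique definition carried by whichever dictionaries contain $l$, and the support grows monotonically under the update; each case of the analysis then collapses to the pointwise identity already supplied by induction. I expect the support bookkeeping, and the appeal to consistency for the \emph{delta} dictionaries rather than only the base ones, to be where most care is required.

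Finally, the \textbf{degree} and \textbf{cost} extensions are routine refinements of the existing $\ipNRC$ arguments. For the degree identity the new equations $\dg([l\mapsto e]) = \dg(e)$ and $\dg(e_1\dcup e_2)=\max(\dg(e_1),\dg(e_2))$ mirror the linear and $\uplus$ cases: $[l\mapsto\cdot]$ preserves degree, so $\dg(\inc([l\mapsto e])) = \dg([l\mapsto\inc(e)]) = \dg(e)-1$, and for $\dcup$ the same $\max$ manipulation as in Theorem~\ref{prop:delta-dec-deg} handles the subcase where one argument is input-independent (its delta being $\uZ$ of degree $0$). For the cost inequality I would again first show $\costt{\inc(h)}{} \prec \costt{h}{}$ by induction and then conclude by monotonicity of $\tcost$ with respect to $\prec$; the new cases use $\costt{[l\mapsto e](l')}{} = \costt{e}{}$ and $\costt{(e_1\dcup e_2)(l)}{} = \sup(\costt{e_1(l)}{},\costt{e_2(l)}{})$, so the strict decrease propagates through $\sup$ exactly as through the $\uplus$ case of Theorem~\ref{prop:ipnra-cost}, the label cost domain being the trivial $\deg{1}$ so that $\inL_l$ contributes no new difficulty.
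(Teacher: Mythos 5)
Your proposal is correct and follows essentially the same route as the paper: structural induction that inherits the $\ipNRC$ cases from Proposition~\ref{prop:ipnra-delta}, Theorem~\ref{prop:delta-dec-deg} and Theorem~\ref{prop:ipnra-cost}, handles $e_1 \dcup e_2$ by a pointwise case analysis on label membership in the four support sets using the consistency-of-updates invariant (the paper's Lemma~\ref{lem:cons-upd-delta}), and dispatches the degree and cost claims as analogues of the $\forz$/$\uplus$ cases of the earlier theorems. The only cosmetic difference is the dictionary case: the paper reduces $[l \mapsto e](l')$ to a $\forz$ over the input-independent $\outL_l(l')$ and reuses the $\forz$ delta rule, whereas you argue directly that evaluation at a label commutes with the induction hypothesis for $e$.
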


Theorem~\ref{th:ilpNRA-delta} implies that we can
efficiently incrementalize any $\pNRC$ query by 
incrementalizing the $\ilpNRC$ queries resulting from its shredding.
The output of these queries faithfully represents the expected nested
value as we demonstrate in section~\ref{sec:shred-corr}.

\subsection{Correctness}
\label{sec:shred-corr}
In order to prove the correctness of the shredding transformation, 
we show that for any $\pNRC$ query $h[R]: \Bag(B)$,  
shredding the input bag $R : \Bag(A)$, 
evaluating $\fc{h}, \gc{h}$,
and converting the output back to nested form produces the same result as 
$h[R]$, that is: 
\begin{align}
\nonumber
\\[-6mm]
\nonumber
h[R] &= 
\lletBin{ \fc{R} }{ \for{r}{R} \collects \shF(r) }{ \gc{R} }{ \shG }
\\
\label{eq:shred-corr}
&
\quad\,
\for{ \fc{x} }{ \fc{h} } \collects
\nst[ \gc{h} ]( \fc{x} ),
\\[-6mm]
\nonumber
\end{align}
where $\shF(r)$ shreds each tuple in $R$ to its flat representation,
$\shG$ returns the dictionaries corresponding to the labels generated by
$\shF(r)$, and $\nst[ \gc{h} ]( \fc{x} )$ places each tuple from $\fc{h}$ back
in nested form using the dictionaries in $\gc{h}$.  

We proceed with the proof in two steps. We first show that shredding a value
and then nesting the result returns back the original value 
(Lemma~\ref{nst_sh_prop}). 
Then, we show that applying the shredded version of a function over a 
shredded value and then nesting the result is equivalent to first nesting the
input and then applying the original function 
(Lemma~\ref{nst_nat_transf_prop}).
The main result then follows immediately (Theorem~\ref{cor:main_result}).

\vspace{-1mm}
\begin{restatable}{lemma}{nstShProp}
\label{nst_sh_prop}
The nesting function $\nst$ is left inverse wrt.\ the shredding functions
$\shF, \shG$, i.e.\ for nested value $a:A$ we have
$\for{ \fc{a} }{ \shF_A(a) } \collects \nst_A[ \shG_A ](\fc{a}) 
= \sng(a)$.
\end{restatable}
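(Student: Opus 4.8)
The plan is to prove the identity by structural induction on the type $A$, unfolding at each step the inductive definitions of $\shF_A$, $\shG_A$, and $\nst_A$ and discharging the resulting comprehensions with the standard equivalence laws of the NRC~\cite{buneman-kleisli:95}, chiefly the singleton law $\for{x}{\sng(a)}\collects e(x) = e(a)$, the identity law $\for{c}{v}\collects\sng(c) = v$, and comprehension associativity $\for{y}{(\for{x}{e_1}\collects e_2)}\collects e_3 = \for{x}{e_1}\collects\for{y}{e_2}\collects e_3$.

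For the base case $A = \Base$, shredding leaves the value untouched and yields the trivial context ($\gc{\Base}=1$); both $\shF_{\Base}(a)$ and $\nst_{\Base}[\shG_{\Base}]$ reduce to singleton constructors, so the left-hand side is $\for{\fc{a}}{\sng(a)}\collects\sng(\fc{a})$, which collapses to $\sng(a)$ by the singleton law. For the product case $A = A_1{\x}A_2$, shredding and nesting both act component-wise, so after unfolding, the left-hand side factors into a double comprehension over $\shF_{A_1}(a_1)$ and $\shF_{A_2}(a_2)$; applying the two induction hypotheses reduces each factor to a singleton and the tuple laws recombine them into $\sng(\tuple{a_1,a_2})$.

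The main case, and the one carrying the real content, is $A = \Bag(C)$. Here $\shF_{\Bag(C)}(v) = \gdic_C(v) = \sngv{l_v}$ associates to the bag value $v$ a single label $l_v$, while $\shG_{\Bag(C)}$ carries a dictionary $D$ with $D(l_v) = \for{c}{v}\collects\shF_C(c)$ paired with the sub-context $\shG_C$. Unfolding $\nst_{\Bag(C)}$ on $l_v$ produces
\[
\sng\bigl(\for{\fc{c}}{D(l_v)}\collects\nst_C[\shG_C](\fc{c})\bigr).
\]
Substituting the definition of $D(l_v)$ and applying comprehension associativity pulls the iteration over $v$ outward, leaving the inner comprehension $\for{\fc{c}}{\shF_C(c)}\collects\nst_C[\shG_C](\fc{c})$, which is exactly the induction hypothesis at type $C$ and equals $\sng(c)$. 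The expression then collapses to $\sng(\for{c}{v}\collects\sng(c)) = \sng(v)$ by the identity law.

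I expect the main obstacle to be bookkeeping rather than calculation: I must verify that the sub-context extracted from $\shG_{\Bag(C)}$ and threaded into the recursive $\nst_C$ call is literally $\shG_C$, so that the induction hypothesis applies verbatim, and that the lookup $D(l_v)$ returns precisely the shredded contents $\for{c}{v}\collects\shF_C(c)$ of $v$. The latter depends on $\gdic_C$ being a well-defined association between bag values and labels; the consistency of shredded values established in Appendix~\ref{app:shred-val-cons} rules out undefined or conflicting labels and so guarantees the lookup behaves as intended.
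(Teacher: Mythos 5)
Your proof is correct and follows essentially the same route as the paper's: structural induction on the type $A$, with the base and product cases handled by the singleton and tuple laws, and the $\Bag(C)$ case resolved by unfolding the dictionary lookup $\shGG{1}_{\Bag(C)}(l) = \forr{c}{\gdic^{-1}_C(l)}{\shF_C(c)}$, applying comprehension associativity, invoking the induction hypothesis at type $C$, and collapsing with the identity law. The only cosmetic difference is that the paper works directly with $\gdic^{-1}_C(l)$ where you write $v$, and it treats the well-definedness of the label association as definitional rather than appealing to the consistency lemma.
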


\vspace{-4mm}
\begin{restatable}{lemma}{nstNatTransfProp}
\label{nst_nat_transf_prop}
For any $\pNRC$ query $h[R]: \Bag(B)$ and consistent 
shredded bag $\fc{R},\gc{R}$: 
\begin{align*}
\\[-7mm]
&
\llet{ R }{ \for{ \fc{r} }{ \fc{R} } \collects \nst[\gc{R}]( \fc{r} ) }\ 
h[ R ] 
\\
&
\ =\ 
\for{ \fc{x} }{ \fc{h} } \collects
\nst[ \gc{h} ]( \fc{x} )  
      .
\\[-6mm]
\end{align*}
\end{restatable}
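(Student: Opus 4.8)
The plan is to prove the statement by structural induction on the $\pNRC$ expression $h$. Since $h$ may contain subexpressions occurring inside $\forz$ comprehensions that depend on for-bound variables, I first strengthen the induction hypothesis to open expressions: for any subexpression $e:\Bag(B)$ of $h$ in variable context $\VGamma$, and any consistent shredded assignment $\fc{\Vrho},\gc{\Vrho}$ whose nesting recovers $\Vrho$, evaluating $\fc{e}$ under $\fc{\Vrho}$ and then nesting its output via $\gc{e}$ together with $\gc{\Vrho}$ yields exactly $e$ evaluated under $\Vrho$. The lemma as stated is the special case in which the only variable is the top-level relation $R$ and the shredded assignment is $\fc{R},\gc{R}$; note that the two sides agree in direction (nest-then-apply-$h$ on the left, apply-$\fc{h}$-then-nest on the right).

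The base cases are direct. For the relation variable $R$ the claim is Lemma~\ref{nst_sh_prop}, extended from single values to bags by summing over the elements of $R$. For $\uZ$, $p(x)$ and $\etpl$ the flat output carries no nontrivial labels and the context is empty, so nesting acts as the identity. For $\sng(x)$ and $\piTC{x}{i}$ with $x$ of primitive or tuple type the result follows because the shredded assignment for $x$ nests back to $\Vrho(x)$.

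The linear and distributive constructs commute with nesting by the induction hypothesis together with the standard NRC equivalence laws of~\cite{buneman-kleisli:95}. For $e_1 \uplus e_2$ I use that the shredded addition adds the flat parts and label-unions the contexts; here the consistency assumption is essential, since it guarantees that $\dcup$ never raises an error and leaves each label's expansion unchanged, so that nesting the combined output equals the $\uplus$ of the separately nested outputs. For $\ominus(e)$, $e_1 \cprod e_2$, and $\for{x}{e_1}\collects e_2$ I push $\nst$ through the comprehension/product, invoking the strengthened hypothesis on the body under the extended shredded assignment that binds $x$ to each iterated (shredded) element of $e_1$; $\llet{X}{e_1}\ e_2$ is handled by substitution together with the hypothesis on $e_1$ and $e_2$.

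The main obstacle is the unrestricted singleton $\sng_\iota(e)$. Its flat part returns the singleton bag containing the label $\tuple{\iota,\Vrho}$, while its context extends $\gc{e}$ with the dictionary $[(\iota,\VGamma)\mapsto \fc{e}]$. To close this case I must show that nesting that single label reproduces the inner bag $\meane{e}{\Vrho}$. By the dictionary semantics, looking up $\tuple{\iota,\Vrho}$ yields $\rename_{\Vrho}(\fc{e})$, i.e.\ $\fc{e}$ with its free variables substituted according to $\Vrho$, which is exactly $\fc{e}$ evaluated under the shredded assignment nesting back to $\Vrho$. Applying the induction hypothesis to $e$ then shows that nesting this bag via $\gc{e}$ recovers $e$ under $\Vrho$, and wrapping in a singleton gives $\meane{\sng(e)}{\Vrho}$. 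Freezing $\Vrho$ into the label and recovering it via $\rename_{\Vrho}$ at nesting time is precisely what lets the argument go through for open $e$. The $\flt(e)$ case is analogous: the shredded flatten iterates over the labels in $\fc{e}$ and expands them through $\ggc{e}{1}$, and one checks that expanding-then-nesting coincides with nesting-then-flattening, again relying on consistency so that every label occurring in $\fc{e}$ has a definition in $\gc{e}$. Collecting all cases completes the induction.
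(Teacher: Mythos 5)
Your proof is correct and follows essentially the same route as the paper's: structural induction on $h$, where the only substantive cases are $\sng_\iota(e)$ and $\flt(e)$, resolved by unfolding the label/dictionary semantics and invoking the induction hypothesis on the subexpression $e$, with the remaining constructs commuting with nesting via the standard NRC equivalence laws and with consistency guaranteeing that $\dcup$ and label lookups are well-behaved. Your explicit strengthening of the induction hypothesis to open expressions under consistent shredded assignments is a careful spelling-out of what the paper's calculation leaves implicit when it carries the value assignment inside labels via $\inL_{\iota,\fc{A}}(\fc{a})$ and recovers it through $\rename_{\Vrho}$.
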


\begin{theorem}
\label{cor:main_result}
For any $\pNRC$ 
query
property (\ref{eq:shred-corr}) holds.
\end{theorem}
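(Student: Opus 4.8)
The plan is to derive property~(\ref{eq:shred-corr}) by combining Lemma~\ref{nst_sh_prop} and Lemma~\ref{nst_nat_transf_prop} with the standard NRC equivalence laws of~\cite{buneman-kleisli:95} and the consistency result for shredded inputs from Appendix~\ref{app:shred-val-cons}. The right-hand side of~(\ref{eq:shred-corr}) is obtained from the body $\for{\fc{x}}{\fc{h}}\collects\nst[\gc{h}](\fc{x})$ by prepending the bindings $\fc{R}\coloneq\for{r}{R}\collects\shF(r)$ and $\gc{R}\coloneq\shG$, so it suffices to evaluate this body under those bindings and show that the result is $h[R]$.

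First I would observe that, by the consistency result, the pair $\fc{R} = \for{r}{R}\collects\shF(r)$ and $\gc{R} = \shG$ is a consistent shredded representation of $R$, so Lemma~\ref{nst_nat_transf_prop} applies and rewrites the body as
\begin{align*}
\for{\fc{x}}{\fc{h}}\collects\nst[\gc{h}](\fc{x})
\;=\;
\llet{R'}{\for{\fc{r}}{\fc{R}}\collects\nst[\gc{R}](\fc{r})}\ h[R'],
\end{align*}
where I rename the let-bound variable to $R'$ to keep it apart from the database relation $R$ occurring in the bindings.

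Next I would reduce the value bound to $R'$ to $R$ itself. Substituting the bindings and applying the for-comprehension associativity law, then Lemma~\ref{nst_sh_prop} pointwise (with element type $A$, so that $\for{\fc{r}}{\shF_A(r)}\collects\nst_A[\shG_A](\fc{r}) = \sng(r)$), and finally the singleton identity law $\for{r}{R}\collects\sng(r) = R$, I obtain
\begin{align*}
\for{\fc{r}}{\left(\for{r}{R}\collects\shF(r)\right)}\collects\nst[\shG](\fc{r})
&= \for{r}{R}\collects\for{\fc{r}}{\shF(r)}\collects\nst[\shG](\fc{r})
\\
&= \for{r}{R}\collects\sng(r)
\\
&= R.
\end{align*}
Hence $R'$ is bound to $R$, and the body collapses to $\llet{R'}{R}\ h[R'] = h[R]$, which is exactly the left-hand side of~(\ref{eq:shred-corr}).

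The combination is direct once the two lemmas are in hand; the genuine work lies in the lemmas themselves, so the theorem indeed follows essentially immediately. Within this gluing step the one point requiring care is discharging the consistency hypothesis of Lemma~\ref{nst_nat_transf_prop}: I must invoke the fact (Appendix~\ref{app:shred-val-cons}) that shredding an input relation never produces undefined or conflicting labels, since otherwise $\nst[\gc{R}]$ would fail to faithfully reconstruct the inner bags. The remaining manipulations are routine applications of the NRC laws, with the only subtlety being the renaming needed to separate the local reconstruction variable $R'$ from the database relation $R$.
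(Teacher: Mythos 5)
Your proof is correct and follows essentially the same route as the paper's: instantiate Lemma~\ref{nst_nat_transf_prop} with the shredding of $R$ as the (consistent) input, then collapse the reconstructed relation back to $R$ via Lemma~\ref{nst_sh_prop} and the comprehension laws. Your version merely makes explicit the steps the paper leaves implicit, namely discharging the consistency hypothesis via Lemma~\ref{sh_consist_prop} and the associativity/singleton identities used to simplify the let-bound value.
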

\begin{proof}	
The result follows from Lemma~\ref{nst_nat_transf_prop},
if we consider the shredding of $R$ as input, 
and then apply Lemma~\ref{nst_sh_prop}.
\end{proof}

\subsection{Complexity class separation}
\label{sec:complexity}

In terms of data complexity, $\NRC$ belongs to \TCz\ 
\cite{ST1994,Koch05}, the class of languages recognizable by \logspace-uniform
families of circuits of polynomial size and constant depth using and-, or- and 
majority-gates of unbounded fan-in.
The positive fragment of $\NRC$ is in the same complexity class
since just the flatten operation on bag semantics requires the power to compute
the sum of integers, which is in \TCz.
In the following, we show that incrementalizing $\pNRC$ queries in shredded form 
fits within the strictly lower complexity class of \NCz, 
which is a better model for real hardware since,
in contrast to \TCz, it uses only gates with bounded fan-in.
To obtain this result we require that multiplicities are 
represented by fixed size integers of $k$ bits, and thus their value is computed 
modulo $2^k.$ 

Assume that, for the following circuit complexity proof,
shredded values are available as a bit sequence, with
$k$ bits (representing a multiplicity modulo $2^k$)
for each possible tuple constructible from the active domain of
the shredded views and their schema, in some canonical ordering.
For $k=1$, this is the standard representation for circuit complexity proofs
for relational queries with set semantics.
Note that the active domain of a shredded view consists of the active domain
of the nested value it is constructed from, the delimiters
$``\la",``\ra",``,",``\{",``\}"$, as well as an additional
linearly-sized label set. We consider this the {\em natural} bit sequence
representation of shredded values.

It may be worth pointing out that shredding only creates polynomial blow-up compared to a string representation of a complex value (e.g.\ in XML or JSON). This further justifies our representation. Generalizing the classical bit representation
 of relational databases (which has polynomial blow-up) to non-first normal form relations (with, for the simplest possible type $\{\la\{\Base\}\ra\}$, one bit for every possible subset of the active domain) has exponential blow-up.

\vspace{-2mm}
\begin{restatable}{theorem}{apNRANCz}
\label{complex:main_result}
Materialized views of $\pNRC$ queries with multiplicities modulo $2^k$
in shredded form are incrementally
maintainable in $\NCz$ wrt.\ constant size updates.
\end{restatable}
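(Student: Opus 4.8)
The plan is to reduce the claim to the incremental maintenance of the $\ilpNRC$ queries produced by shredding, and then to analyze that maintenance at the circuit level. By Theorem~\ref{cor:main_result}, shredding a $\pNRC$ query $h$ yields $\ilpNRC$ expressions $\fc{h}$ and $\gc{h}$ whose materializations (the flat view together with its label dictionaries) faithfully represent the nested output. Since maintaining the nested view in shredded form amounts to maintaining these $\ilpNRC$ views, it suffices to show that every $\ilpNRC$ view, together with the auxiliary views needed for its incremental maintenance, can be kept up to date by an $\NCz$ circuit on a constant-size update $\Delta R$.

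First I would fix the recursive-IVM materialization. By Theorem~\ref{th:ilpNRA-delta}, every $\ilpNRC$ query $h$ is recursively incrementalizable, with $\dg(\inc(h)) = \dg(h) - 1$; iterating this gives $\dg(\inc^{k}(h)) = \dg(h) - k$, so $\inc^{\dg(h)}(h)$ is input-independent and the chain $\inc^{0}(h), \ldots, \inc^{\dg(h)-1}(h)$ of input-dependent deltas is finite. I would materialize exactly these, partially evaluated with respect to the database, as views $H_0, \ldots, H_{\dg(h)-1}$, observing that their number $\dg(h)$ depends only on the query and not on the data. On an update $U$, maintenance is the composition of the steps $H_i := H_i \uplus \inc^{i+1}(h)[\ldots, U]$, together with the $O(1)$-size contribution of the input-independent top delta $\inc^{\dg(h)}(h)$ evaluated at $U$. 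Because $\dg(h)$ is a constant, this is a constant-length composition, so it is enough to show that each single step lies in $\NCz$.

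Then I would carry out the circuit analysis over the natural bit-sequence representation, with $k$ bits per possible shredded tuple. Each maintenance step updates, in parallel over all possible tuples $t$, the multiplicity $m_t$ to $m_t + (\Delta_i)_t \pmod{2^k}$, where $\Delta_i$ is the specialized delta. Two points make each per-tuple computation a bounded-depth, bounded-fan-in circuit: (i) since multiplicities are $k$-bit with $k$ constant, addition, negation and multiplication modulo $2^k$ are all fixed-size Boolean functions, hence computable by constant-size circuits; and (ii) recursive IVM has pushed all iterated aggregation over database-sized collections --- in particular the summation inside $\flt$, the only construct needing $\TCz$ power --- into the offline phase, so that online each output multiplicity $(\Delta_i)_t$ is a product or sum of only boundedly many already-materialized multiplicities and the $O(1)$-many multiplicities of $U$. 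The label constructs are handled analogously: $\inL_l$ contributes unit multiplicities, $[l \mapsto e]$ and $\dcup$ have deltas $[l \mapsto \inc(e)]$ and $\inc(e_1) \dcup \inc(e_2)$, and on consistent inputs (guaranteed by the consistency invariant of the shredding transformation) $\dcup$ reduces to a per-label multiplexer on support membership, again in $\NCz$. Uniformity follows because the circuit is a fixed template --- determined by the query and its finitely many deltas --- instantiated identically for each possible tuple, while the active domain together with the linearly-sized label set is enumerable in \logspace.

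The main obstacle I expect is step (ii): rigorously establishing that, after recursive-IVM partial evaluation, no online step performs an unbounded summation or count over a database-sized collection. This requires a structural argument over all $\ilpNRC$ constructs showing that every subexpression producing a database-sized cardinality is either materialized and merely read (never recomputed) or combined via $\cprod$/$\uplus$ only against the constant-size update, so that each output multiplicity depends on at most boundedly many input multiplicities. The secondary difficulty is the dictionary machinery: I must verify that maintaining $\gc{h}$ --- extending support sets with the $O(1)$ new labels created by $\inc(\fc{h})$ and merging definitions through $\dcup$ --- never forces a database-wide equality test between bag definitions, which is precisely what the consistency invariant lets me avoid by reducing $\dcup$ to a selection rather than a comparison.
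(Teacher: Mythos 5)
Your proposal follows essentially the same route as the paper's proof: reduce to the shredded views via Theorem~\ref{cor:main_result}, materialize the finite stack of partially evaluated higher-order deltas guaranteed by the degree argument, and observe that each refresh is per-tuple addition modulo $2^k$ plus evaluation of a delta on the constant-size update, both bounded fan-in and LOGSPACE-uniform. Your explicit handling of the label constructs via Theorem~\ref{th:ilpNRA-delta} and the consistency invariant is a sound (arguably more precise) refinement of the paper's appeal to the $\ipNRC$ results (Proposition~\ref{prop:ipnra-delta}, Theorem~\ref{prop:delta-dec-deg}), and the ``obstacle'' you flag in step~(ii) is exactly the point the paper dispatches by noting that, once all input-dependent deltas are materialized offline, the only online computations are $\uplus$ and queries on constantly many input bits, so no unbounded summation ever occurs during maintenance.
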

\vspace*{-4mm}
\begin{proof}
We will refer to the database and the update by $d$ and $\Delta d$,
respectively.
By Theorem~\ref{cor:main_result}, every $\pNRC$ query can be simulated by
a fixed number of $\ipNRC$ queries on the shredding of the input.
By Proposition~\ref{prop:ipnra-delta}, for every $\ipNRC$ query $h$, there
is an $\ipNRC$ query $\delta_d(h)$
such that $h(d \uplus \Delta d) = h(d) \uplus \delta_d(h)(d)(\Delta d)$.
We partially evaluate and materialize such delta queries as views
$h' \coloneq\ \delta_d(h)(d)$ which then allow lookup of $h'(\Delta d)$.
By Theorem~\ref{prop:delta-dec-deg}, given an $\ipNRC$ query $h$,
there is a a finite stack of higher-order delta queries 
$h_{0}, \cdots, h_{k}$ 
(with $h_{i} = \delta_d^{(i)}(h)(d)$, $0 \le i \le k$,
and $\delta_d^{(0)}(h)(d) = h(d)$) 
such that $h_{k}$ is input-independent (only depends on $\Delta d$).
Thus, $h_i$ can be refreshed as $h_i := h_i \uplus h_{i+1}(\Delta d)$
for $i < k$.
We can incrementally maintain overall query $h$ on a group of
views in shredded representation using just the $\uplus$ operations
and the operations of $\ipNRC$ on a constant-size input 
(executing queries $h_i$ on the update). 
This is all the work that needs to be done, for an update,
to refresh all the views.

It is easy to verify that in natural bit sequence representation of the 
shredded views, both $\uplus$ (on the full input representations) and
$\ipNRC$ on constantly many input bits can be modeled using \NCz\ circuit
families, one for each meaningful size of input bit sequence.
For $\ipNRC$ on constant-size inputs, this is obvious, since all
Boolean functions over constantly many input bits can the captured
by constant-size bounded fan-in circuits, and since there is really only
one circuit, it can also be output in LOGSPACE.
For $\uplus$, remember that we represent multiplicities modulo $2^k$, i.e.\
by a fixed $k$ bits. Since addition modulo $2^k$ is in 
\NCz, so is $\uplus$:
The view contains aggregate multiplicities, each of which only needs to be 
combined with one multiplicity from the respective delta view.
The overall circuit for an input size is a straightforward composition of these
building blocks.
\end{proof}

In contrast, even when multiplicities are modeled modulo $2^k$ and the input
is presented in flattened form, $\pNRC$ is not in $\NCz$ since multiplicities
of projections (or $\flt$) depend on an unbounded number of input bits.

In Appendix~\ref{app:complex}, we show that shredding 
(for the initial materialization of the views) itself is in $\TCz$; 
it follows immediately that shredding constant-size updates -- the only
shredding necessary during IVM -- is in $\NCz$.

\section{Related Work}
\label{sec:related}

{\bf Delta derivation} was originally proposed
for datalog programs~\cite{DBLP:conf/deductive/GuptaKM92,gupta-dred:93} 
but it is even more natural for algebraic query languages such as
the relational algebra on bags~\cite{GriffinL95,
DBLP:conf/sigmod/BlakeleyLT86mix,
DBLP:conf/vldb/CeriW91, 
roussopoulos-tods:91mix,
dimitra98}, 
simply because the algebraic structure of a group is
the necessary and sufficient environment in which deltas live.
In many cases the derived deltas are asymptotically faster than the original 
queries and the resulting speedups prompted a wide adoption of such 
techniques in commercial database systems.
Our work is an attempt to develop similarly powerful static incrementalization 
tools for languages on nested collections and comes in the context of 
advances in the complexity class separation between recomputation and
IVM~\cite{koch_ring,schwentick14}.
Compared to~\cite{koch_ring} which discusses the recursive incrementalization of
a flat query language, we address the challenges raised by a nested data
model, i.e.\ we design a {\em closed} delta transformation for $\ipNRC$'s constructs 
and a semantics-preserving shredding transformation for 
implementing `deep' updates.
Furthermore, we provide cost domains and a cost interpretations for
 $\ipNRC$'s constructs, according to which
we define the notion of an {\em incremental} nested update and we show that the
deltas we generate have lower upper-bound time estimates than re-evaluation.

The {\bf nested data model} has been thoroughly studied in the literature over 
multiple decades and has enjoyed a wide adoption in industry in the form of data format
standards like XML or JSON. 
However,
solutions to the problem of incremental maintenance for nested queries
 either focus only
on the fragment of the language that does not generate changes to inner
collections~\cite{gluche97}, 
or propagate those changes based on auxiliary data structures designed to
track the lineage of tuples in the 
view~\cite{foster08, dimitrova03, kawaguchi97, nakamura01}.
The use of dedicated data-structures as well as custom update
languages make it extremely difficult to further apply recursive IVM on top of
these techniques.
In contrast, our approach is fully algebraic and both the given query as well
as the generated deltas belong to the same language and thus they can
be further incrementalized via delta processing.

The related topic of {\bf incremental computation} has also received
considerable attention within the programming languages community,
with proposals being divided between dynamic and static approaches.
The dynamic solutions, such as {\em self-adjusting computation} 
\cite{
Acar02, 
Acar08, 
Acar10}, 
record at runtime the dependency-graph of the computation.
Then, upon updates, 
one can easily identify the intermediate results affected and trigger
their re-evaluation.    
As this technique makes few assumptions about its target language,
it is applicable to a variety of languages ranging from Standard ML to C. 
Nonetheless, its generality comes at the price of significant runtime
overheads for building the dependency graph.
Moreover,
while static solutions derive deltas that can be further
optimized via global transformations, such an opportunity is mostly
missed by dynamic approaches
.

Delta derivation has also been proposed in the context of incremental 
computation,
initially only for first-order languages~\cite{Paige82}, and more recently
it has been extended to higher-order languages~\cite{CaiGRO13}.
However, 
these approaches offer no guarantees wrt.\ the efficiency of the generated 
deltas,
whereas in our work we introduce cost interpretations and discuss the
requirements for cost-efficient delta processing.

The challenge of {\bf shredding} nested queries has been previously addressed by
Paredaens et al.~\cite{Paredaens92}, who propose a translation taking
flat-to-flat nested relational algebra expressions into flat relational algebra.
Van den Bussche \cite{Bussche} also showed that it is possible to 
evaluate nested queries over sets via multiple flat queries, 
but his solution may produce results that are quadratically larger than 
needed \cite{CheneyLW14}.

Shredding transformations have been studied more recently in the context of
language integrated querying systems such as Links \cite{Links} and Ferry 
\cite{Ferry}.
In order to efficiently evaluate a nested query,
it is first converted to a series of flat queries 
which are then sent to the database engine for execution.
While these transformations also replace inner collections with flat values, 
they are geared towards generating SQL queries and thus they make assumptions 
that are not applicable to our goal of efficiently incrementalizing
any nested-to-nested expressions.
For example, 
Ferry makes extensive use of On-Line Analytic Processing (OLAP) features of 
SQL:1999, such as \texttt{ROW\_NUMBER} and \texttt{DENSE\_RANK}~\cite{Grust2010}, 
while Links 
relies on a normalization phase and
handles only flat-to-nested expressions \cite{CheneyLW14}. 
More importantly, none of the existing proposals translate $\pNRC$ queries to 
an efficiently incrementalizable language. 

\begin{footnotesize}
\newcommand{\angstrom}{\mbox{\normalfont\AA}}
\bibliography{mix,main,bibtex,bibtex05}{}
\bibliographystyle{plain}
\end{footnotesize}

\newpage

\onecolumn
\appendix
\section{Delta processing}

\subsection{The flat relational case}
\label{app:flat-delta}

We recall how delta processing works for queries expressed in
the \emph{positive relational algebra}. 
Delta rules were originally defined
for datalog programs~\cite{DBLP:conf/deductive/GuptaKM92,gupta-dred:93} 
but they are even more natural for algebraic query languages such as
the relational algebra on bags~\cite{GriffinL95,koch_ring}, 
simply because the algebraic structure of a group is
the necessary and sufficient environment in which deltas live.

Consider relational algebra expressions
built from table names $R_1,\ldots,R_n$ from some schema and the operators for
selection, projection, Cartesian product, and union, where we denote the 
last one by $\uplus$ to remind us that we assume bag semantics in this paper.

The delta rules constitute an inductive definition of a
transformation that maps every algebra expression
$e$ over table names
$R_i$ 
into another algebra expression 
$\inc(e)$ over table names
$R_i$ and $\Delta R_i, i=1..n$.
The names of the form $\Delta R_i$ designate an update: 
tables that contain tuples  to be added to those in $R_i$ 
(for the moment we focus only on insertions).
The rules are:
\vspace*{-2mm}
\begin{align*}
\inc(R_i)             &= \Delta R_i~~i=1..n \qqquad
&
\inc(\sigma_{p} e)    &= \sigma_{p} \inc(e)
\\
\inc(e_1 \uplus e_2)  &= \inc(e_1) \uplus \inc(e_2)
&
\inc(\Pi_{\bar{i}} e) &= \Pi_{\bar{i}} \inc(e) 
\\
\inc(e_1\times e_2)   &
\multicolumn{4}{l}{$\ =
                    \inc(e_1)\times e_2 
                   ~\uplus~ e_1\times \inc(e_2)
                   ~\uplus~ \inc(e_1) \times \inc(e_2) $}
\end{align*}
We remark that the rule for join is the same as the one for Cartesian product.

The delta rules satisfy the following property\footnote{
This is due to the
commutativity and associativity of bag union as well as the 
distributivity of selection, projection and Cartesian product,
over bag union.},
which also suggests how the incremental computation proceeds:
\vspace*{-2mm}
\begin{align}
&
e[R_1\uplus\Delta R_1,\cdots,R_n\uplus\Delta R_n]~=~ 
e[R_1,\cdots,R_n] ~\uplus~   
\inc(e)[R_1,\cdots,R_n,\Delta R_1,\cdots,\Delta R_n]
\label{prop:alg-delta}
\end{align}
In the statement above we abuse, as usual, the notation by using
the $R_i$'s for both table names and corresponding table
instances and we denote by $e[\oR]$ the table that results from
evaluating the algebra expression $e$ on a database $\oR$, where
$\oR$ stands for $R_1,\ldots,R_n$.
Equation~(\ref{prop:alg-delta}) captures the incremental maintenance
of the query result. Given updates $\oDR$ to the database,
we just compute $\inc(e)[\oR,\oDR]$ and use it
to update the previously materialized answer $e[\oR]$
.

\begin{example}
For a concrete example of incrementalizing a relational algebra query,
we consider a bag of movies $\movs(movie,genre)$, 
a bag containing their showtimes
$\shows(movie,loc,time)$ and the query $\exQ$ returning all the dramas 
playing in Oz:
\begin{equation}
\nonumber
\exQ ~~\equiv~~
\Pi_{\mathit{movie}}(\sigma_{\mathit{loc} = \mathrm{Oz}} \shows
\bowtie
\sigma_{\mathit{genre} = \mathrm{Drama}} \movs
).
\end{equation}
Now suppose that the updates $\Delta\shows$ and $\Delta\movs$
are applied to $\shows$ and $\movs,$ respectively. 
By Equation~\ref{prop:alg-delta} and the delta rules,
the updated $\exQ$ can be computed by $\uplus$-ing
\begin{align*}
&
\Pi_{\mathit{movie}}(
     \sigma_{\mathit{loc} = \mathrm{Oz}} \Delta\shows
     \bowtie
     \sigma_{\mathit{genre} = \mathrm{Drama}} \movs
     \ \uplus\ 
     \sigma_{\mathit{loc} = \mathrm{Oz}} \shows
     \bowtie
     \sigma_{\mathit{genre} = \mathrm{Drama}} \Delta\movs
     \ \uplus\  
     \sigma_{\mathit{loc} = \mathrm{Oz}} \Delta\shows
     \bowtie
     \sigma_{\mathit{genre} = \mathrm{Drama}} \Delta\movs)
\end{align*}
to the previously materialized answer to $\exQ$.
If $\Delta\shows$ and $\Delta\movs$ are much smaller than $\shows$,
respectively $\movs$, this is typically
computationally much cheaper than recomputing the query after updating
the base tables: this is what makes incremental view maintenance worthwhile.
\end{example}

Under reasonable assumptions about the cost of query evaluation algorithms 
and considering small updates compared to the size of the database,
this is better than recomputing the query on the updated database
$e[\overline{R \uplus \Delta R}]$.
For instance,
a query $R \bowtie S$ can have size (and evaluation cost) quadratic in the input 
database.
Assuming $\Delta R$ and $\Delta S$ consist of a constant number of tuples, 
incrementally maintaining the query via
$\delta(R \bowtie S) = 
(\Delta R) \bowtie S
~\uplus~R \bowtie (\Delta S)
~\uplus~(\Delta R) \bowtie (\Delta S)$ has linear size and cost, 
while recomputing it
(as $(R \uplus \Delta R) \bowtie (S \uplus \Delta S)$) has quadratic cost.

As shown by Gupta et al.~\cite{DBLP:conf/deductive/GuptaKM92}, 
the same delta rules
can also be used to propagate deletions if we extend the bag semantics
to allow for \emph{negative} multiplicities: the table 
$\Delta R_i$ associates negative multiplicities to 
the tuples to be deleted from $R_i$.

\subsection{Challenges for efficient incrementalization}
\label{app:challenges}

In the following we discuss the challenges in deriving
a delta query  which is cheaper than full re-evaluation 
for any expression in a language.

Informally, we say that the delta $\inc(e)[R, \Delta R]$ of a query $e[R]$ 
is more {\em efficient} than full recomputation 
(or simply {\em efficient}), if for any update $\Delta R$ s.t.\
$\size(\Delta R) \ll \size(R)$, 
evaluating $\inc(e)[R,\Delta R]$ and applying
it to the output of $e$ is less expensive than re-evaluating $e$ from scratch, 
i.e.:  
\begin{align*}
&
\pcost(\inc(e)[R,\Delta R]) \ll \pcost(e[R \uplus \Delta R]) 
\quad \text{ and }
\\
&
\size(\inc(e)[R,\Delta R]) \ll \size(e[R \uplus \Delta R]),
\end{align*}
where the second equation ensures that applying the update is cheaper than
re-computation 
considering that the cost of applying an update is proportional to its size
and that the cost of evaluating an expression is lowerbounded by the size of
its output ($\size(e[R \uplus \Delta R]) \le \pcost(e[R \uplus \Delta R])$). 
 
One can guarantee that the delta of any expression in a language is  
efficient by requiring that every construct $\pp(e)[R]$ of the language
satisfies the property above, 
i.e. $\size(\inc(e)[R,\Delta R]) \ll \size(e[R])$ implies:
\begin{align}
&
\pcost(\inc(\pp(e))[R,\Delta R]) \ll \pcost(\pp(e)[R \uplus \Delta R]) 
\ \text{ and }
\nonumber
\\
&
\size(\inc(\pp(e))[R,\Delta R]) \ll \size(\pp(e)[R \uplus \Delta R])
\label{eq:delta_eff}
\end{align}

Unfortunately, this property does not hold for constructs $\pp(e)[R]$ which 
take linear time in their inputs $e[R]$
(i.e. $\pcost(\pp(e)[R]) = \size(e[R])$ ) 
and whose delta $\inc(\pp(e))[R,\Delta R]$ depends on the original
input $e[R]$ (therefore $\pcost(e[R]) < \pcost(\inc(\pp(e))[R,\Delta R])$), 
as it leads to the following contradiction:
\begin{align*}
&
\size(e[R]) \le \pcost(e[R]) < \pcost(\inc(\pp(e))[R,\Delta R]) \ll 
\\
&
\ll \pcost(\pp(e)[R \uplus \Delta R]) = \size(e[R \uplus \Delta R])
\approx \size(e[R]),
\end{align*}
where the last approximation follows from the fact that:
\begin{align*}
&
e[R \uplus \Delta R] = 
e[R]\ \uplus\ 
\inc(e)[R,\Delta R]
\quad\text{ and }
\\
&
\size(\inc(e)[R,\Delta R]) \ll \size(e[R]).
\end{align*} 
 
An example of such a construct is bag subtraction 
$(e_1 \setminus e_2)[R]$, that associates to every element $v_i$ in $e_1[R]$
the multiplicity $\max(0, m_1 - m_2)$, where $m_1, m_2$ are $v_i$'s
multiplicities in $e_1[R]$ and $e_2[R]$, respectively.
Indeed, the cost of evaluating bag subtraction is proportional to its
input (i.e. $\pcost(e_1 \setminus e_2)[R] = \size(e_1[R])$,
assuming $e_1[R]$ and $e_2[R]$ have similar sizes)
and the result of $(e_1 \setminus e_2)[R]$ can be maintained when 
$e_2[R]$ changes, only
if the initial value of $e_1[R]$ is known at the time of the update.
The singleton constructor or the
emptiness test over bags also exhibit similar characteristics.   
By contrast, constructs that take time linear in their input, but whose
delta rule depends only on the update do not present this issue (eg. $\flt$).

This problem can be addressed by materializing the result of the
subquery $e[R]$,
such that one does not need to pay its cost again 
when evaluating $\inc(\pp(e))[R,\Delta R]$.
However, this only solves half of the problem, as we also need to make
sure that the outcome of $\inc(\pp(e))[R,\Delta R]$ 
can be efficiently propagated through outer queries $e'$ that may use 
$\pp(e)[R \uplus \Delta R]$ as a subquery.  
Solving this issue requires handcrafted solutions that take into
consideration the particularities of $\pp$ and the ways it can be used.
For example, in our solution for efficiently incrementalizing $\sng(\cdot)$
we take advantage of the fact that the only way of accessing the contents of a
inner bag is via $\flt(\cdot)$.

Finally, 
for constructs $\pp$ with boolean as output domain 
(eg. testing whether a bag is empty), it no longer makes sense to
distinguish between small and large values, and therefore,
the condition (\ref{eq:delta_eff}) can never be satisfied.
This problem extends to a class of primitives that includes 
bag equality, negation, and membership testing, and restricts our
solution for efficient incrementalization to only the positive fragment of 
nested relational calculus $\pNRC$.

\section{Incrementalizing $\ipNRC$}

\subsection{The delta transformation}
\label{sec:app:delta}

\deltathipNRA*
\begin{proof}
The proof follows by structural induction on $h$ and from the
semantics of $\ipNRC$ constructs.

\begin{itemize}
\item
For $h = R$, the result follows immediately.

\item For $h \in \set{\uZ, p, \sngvar{x}, \piTC{x}{i}, \etpl, \sngr(e)}$ 
as the query does not depend on the input bag $R$
we have $h[R \uplus \Delta R] = h[R]$
 and the result follows
immediately.

\item
For $h = \for{x}{e_1} \collects e_2$:
\begin{align*}
&
\meane{ (\for{x}{e_1} \collects e_2)[R \uplus \Delta R] }{\Irho;\Vrho}
=
\\
&
= {\biguplus}_{ v \in \meane{e_1[R \uplus \Delta R]}{\Irho;\Vrho}} 
     \meane{e_2[R \uplus \Delta R]}{\Irho;\Vrho[x\coloneq v]}
\\
&
= {\biguplus}_{ v \in \meane{e_1[R]}{\Irho;\Vrho} \uplus
                      \meane{\inc(e_1)[R, \Delta R] }{\Irho;\Vrho} } 
     \meane{e_2[R \uplus \Delta R]}{\Irho;\Vrho[x\coloneq v]}
\\
&
= [ {\biguplus}_{ v \in \meane{e_1[R]}{\Irho;\Vrho} } 
     \meane{e_2[R \uplus \Delta R]}{\Irho;\Vrho[x\coloneq v]} ]
  \ \uplus\ 
  [ {\biguplus}_{ v \in \meane{\inc(e_1)[R, \Delta R] }{\Irho;\Vrho} } 
     \meane{e_2[R \uplus \Delta R]}{\Irho;\Vrho[x\coloneq v]} ]
\\
&
= [ {\biguplus}_{ v \in \meane{e_1[R]}{\Irho;\Vrho} } 
     \meane{e_2[R]}{\Irho;\Vrho[x\coloneq v]} \uplus
     \meane{\inc(e_2)[R, \Delta R]}{\Irho;\Vrho[x\coloneq v]} ]
  \ \uplus\ 
  [ {\biguplus}_{ v \in \meane{\inc(e_1)[R, \Delta R] }{\Irho;\Vrho} } 
     \meane{e_2[R]}{\Irho;\Vrho[x\coloneq v]} \uplus
     \meane{\inc(e_2)[R, \Delta R]}{\Irho;\Vrho[x\coloneq v]} ]
\\
&
= [ {\biguplus}_{ v \in \meane{e_1[R]}{\Irho;\Vrho} } 
     \meane{e_2[R]}{\Irho;\Vrho[x\coloneq v]} ]
  \ \uplus\ 
  [ {\biguplus}_{ v \in \meane{e_1[R]}{\Irho;\Vrho} } 
     \meane{\inc(e_2)[R, \Delta R]}{\Irho;\Vrho[x\coloneq v]} ]
  \quad\uplus
\\
&
\quad
  [ {\biguplus}_{ v \in \meane{\inc(e_1)[R, \Delta R] }{\Irho;\Vrho} } 
     \meane{e_2[R]}{\Irho;\Vrho[x\coloneq v]} ]
  \ \uplus\ 
  [ {\biguplus}_{ v \in \meane{\inc(e_1)[R, \Delta R] }{\Irho;\Vrho} } 
     \meane{\inc(e_2)[R, \Delta R]}{\Irho;\Vrho[x\coloneq v]} ]
\\
&
= \meane{(\for{x}{e_1} \collects e_2)[R]}{\Irho;\Vrho}     
  \ \uplus\ 
  \meane{(\for{x}{e_1} \collects \inc(e_2))[R, \Delta R] }{\Irho;\Vrho} 
  \quad\uplus\ 
\\
&
\quad
  \meane{(\for{x}{\inc(e_1)} \collects e_2)[R, \Delta R] }{\Irho;\Vrho} 
  \ \uplus\ 
  \meane{(\for{x}{\inc(e_1)} \collects \inc(e_2))[R, \Delta R] }{\Irho;\Vrho} 
\\
&
= \meane{(\for{x}{e_1} \collects e_2)[R]}{\Irho;\Vrho} \ \uplus\ 
  \meane{\inc(\for{x}{e_1} \collects e_2)[R, \Delta R] }{\Irho;\Vrho}
\\
& 
= \meane{(\for{x}{e_1} \collects e_2)[R] \uplus 
\inc(\for{x}{e_1} \collects e_2)[R, \Delta R] }{\Irho;\Vrho} 
\end{align*}

\item
For $h = e_1 \cprod e_2$ the reasoning is similar to the case of 
$h = \for{x}{e_1} \collects e_2$.

\item
For $h = e_1 \uplus e_2$ the result follows from the associativity and 
commutativity of $\uplus$.

\item
For $h = \ominus(e)$ the result follows from the associativity and
commutativity of $\uplus$
and the fact that $\ominus$ is the inverse operation wrt. $\uplus$.

\item For $h = \flt(e)$:
\begin{align*}
&
\meane{\flt(e)[R \uplus \Delta R]}{}
=
{\biguplus}_{v \in \meane{e[R \uplus \Delta R]}{} } v
=
{\biguplus}_{v \in \meane{e[R]}{} \uplus 
                   \meane{\inc(e)[R,\Delta R]}{} } v
=
{\biguplus}_{v \in \meane{e[R]}{} } v \quad\uplus\quad 
{\biguplus}_{v \in \meane{\inc(e)[R,\Delta R]}{} } v
=
\\
&
=
\meane{\flt(e)[R]}{} \uplus \meane{\flt(\inc(e))[R, \Delta R]}{}
=
\meane{\flt(e)[R] \uplus \flt(\inc(e))[R, \Delta R]}{}
=
\\
&
=
\meane{\flt(e)[R] \uplus \inc(\flt(e))[R, \Delta R]}{}
\end{align*}

\item For $h = \llet{X}{e_1}\ e_2$
\begin{align*}
&
\meane{(\llet{X}{e_1}\ e_2)[R \uplus \Delta R]}{\Irho;\Vrho}
=
\meane{e_2[R \uplus \Delta R,X]}
{\Irho;\Vrho[X\coloneq\meane{e_1[R \uplus \Delta R]}{\Irho;\Vrho}]} =
\\
&
=
\meane{e_2[R,X]}
{\Irho;\Vrho[X\coloneq\meane{e_1[R \uplus \Delta R]}{\Irho;\Vrho}]}
\uplus
\meane{\incR{R}(e_2)[R, X, \Delta R]}
{\Irho;\Vrho[X\coloneq\meane{e_1[R \uplus \Delta R]}{\Irho;\Vrho}]}
\\
&
=
\meane{e_2[R,X]}
{\Irho;\Vrho[X\coloneq\meane{e_1[R]}{\Irho;\Vrho} \uplus
                      \meane{\inc(e_1)[R, \Delta R]}{\Irho;\Vrho}]}
\uplus
\meane{\incR{R}(e_2)[R, X, \Delta R]}
{\Irho;\Vrho[X\coloneq\meane{e_1[R]}{\Irho;\Vrho} \uplus
                      \meane{\inc(e_1)[R, \Delta R]}{\Irho;\Vrho}]}
\\
&
=
\meane{e_2[R,X \uplus \Delta X]}
{\Irho;\Vrho[X\coloneq\meane{e_1[R]}{\Irho;\Vrho},
             \Delta X \coloneq \meane{\inc(e_1)[R, \Delta R]}{\Irho;\Vrho}]}
\uplus
\meane{\incR{R}(e_2)[R, X \uplus \Delta X, \Delta R]}
{\Irho;\Vrho[X\coloneq\meane{e_1[R]}{\Irho;\Vrho},
             \Delta X \coloneq \meane{\inc(e_1)[R, \Delta R]}{\Irho;\Vrho}]}
\\
&
=
\meane{e_2[R,X \uplus \Delta X] \uplus
       \incR{R}(e_2)[R, X \uplus \Delta X, \Delta R]
}
{\Irho;\Vrho[X\coloneq\meane{e_1[R]}{\Irho;\Vrho},
             \Delta X \coloneq \meane{\inc(e_1)[R, \Delta R]}{\Irho;\Vrho}]}
\\
&
=
\meane{e_2[R,X] \uplus
       \incR{X}(e_2)[R, X, \Delta X] \uplus
       \incR{R}(e_2)[R, X, \Delta R] \uplus
       \incR{X}(\incR{R}(e_2))[R, X, \Delta X, \Delta R]
}
{\Irho;\Vrho[X\coloneq\meane{e_1[R]}{\Irho;\Vrho},
             \Delta X \coloneq \meane{\inc(e_1)[R, \Delta R]}{\Irho;\Vrho}]}
\\
&
= \lletBin{X}{e_1[R]}{\Delta X}{\inc(e_1)[R, \Delta R]}
\\
&\quad
  (e_2[R,X] \uplus
       \incR{X}(e_2)[R, X, \Delta X] \uplus
       \incR{R}(e_2)[R, X, \Delta R] \uplus
       \incR{X}(\incR{R}(e_2))[R, X, \Delta X, \Delta R])
\end{align*}

\end{itemize}
\end{proof}

\deltalemipNRAconst*
\begin{proof}
We do a case by case analysis on $h.$
\begin{itemize}
\item 
For $h \in \set{\uZ, p, \etpl, \sngvar{x}, \piTC{x}{i}, \sngr(e)}$ 
we have from the definition of
$\inc(\cdot)$ that $\inc(h) = \uZ.$

\item 
For $h = \forx{e_1}{e_2},$ we have by the induction hypothesis that
$\inc(e_1) = \uZ,$
$\inc(e_2) = \uZ,$
therefore 
$\inc(\forx{e_1}{e_2}) 
= (\forx{\uZ}{e_2}) \uplus (\forx{e_1}{\uZ}) \uplus (\forx{\uZ}{\uZ})
= \uZ.$

\item 
For $h = e_1 \cprod e_2$ the reasoning is similar to the case of 
$h = \for{x}{e_1} \collects e_2$.

\item 
For $h = e_1 \uplus e_2,$ we have by the induction hypothesis that
$\inc(e_1) = \uZ,$
$\inc(e_2) = \uZ,$
therefore
$\inc(e_1 \uplus e_2) 
= \uZ \uplus \uZ
= \uZ.$

\item
For $h = \ominus(e),$ we have by the induction hypothesis that
$\inc(e) = \uZ,$
therefore
$\inc(\ominus(e)) = \ominus(\uZ) = \uZ.$

\item
For $h = \flt(e),$ we have by the induction hypothesis that
$\inc(e) = \uZ,$
therefore
$\inc(\flt(e)) = \flt(\uZ) = \uZ.$

\item 
For $h = \llet{X}{e_1}\ e_2,$ we have by the induction hypothesis that
$\incR{R}(e_2) = \uZ,$
$\Delta X = \incR{R}(e_1) = \uZ,$
and the result follows from the fact the $\incR{X}(e_2)[X,\uZ] = \uZ$.

\end{itemize}
\end{proof}

\subsection{Higher-order delta derivation}
\label{sec:app:higher-delta}

\PropDecreasingDegree*
\begin{proof}
The proof follows via structural induction on $h$ and from the definition
of $\inc(\cdot)$ and $\dg(\cdot)$.
For subexpressions of $h$ which are {\em input-independent} we use the fact that
$\inc(e) = \uZ$ and $\dg(e) = \dg(\inc(e)) = 0.$

\begin{itemize}
  \item For $h = R$ we have:
  $\dg(\inc(R)) = \dg(\Delta R) = 0 = 1-1 = \dg(R)-1$
  
  \item For $h = \forx{e_1}{e_2}$ we have the following cases:
  \\
  Case 1: $\dg(\inc(e_1)) = \dg(e_1) -1$
          and $g$ is {\em input-independent}:  
  \begin{align*}
  &
  \dg(\inc(\forx{e_1}{e_2})) 
  = \dg(\forx{\inc(e_1)}{e_2}) 
  = \dg(e_2) + \dg(\inc(e_1)) 
  = \dg(e_2) + \dg(e_1) -1
  = \dg(\forx{e_1}{e_2}) -1.
  \end{align*}
  
  Case 2: $\dg(\inc(e_2)) = \dg(e_2) -1$ 
          and $f$ is {\em input-independent}:
          Analogous to Case 1.

  Case 3: $\dg(\inc(e_2)) = \dg(e_2) -1$ and 
          $\dg(\inc(e_1)) = \dg(e_1) -1$:
  \begin{align*}
  &
   \dg(\inc(\forx{e_1}{e_2})) 
   = \dg((\forx{\inc(e_1)}{e_2}) \uplus 
         (\forx{e_1}{\inc(e_2)}) \uplus 
         (\forx{\inc(e_1)}{\inc(e_2)}))
   \\
   & \qqquad\qquad
   = \max( \dg(\forx{\inc(e_1)}{e_2}), 
           \dg(\forx{e_1}{\inc(e_2)}), 
           \dg(\forx{\inc(e_1)}{\inc(e_2)}) )
   \\
   & \qqquad\qquad
   = \max( \dg(e_2) + \dg(\inc(e_1)), \dg(\inc(e_2)) + \dg(e_1), 
           \dg(\inc(e_2)) + \dg(\inc(e_1)) )
   \\
   & \qqquad\qquad
   = \max( \dg(e_2) + \dg(e_1)-1, \dg(e_2)-1 + \dg(e_1), 
           \dg(e_2)-1 + \dg(e_1)-1 )
   \\
   & \qqquad\qquad
   = \dg(e_2) + \dg(e_1) -1
   = \dg(\forx{e_1}{e_2}) -1.
  \end{align*}

  \item For $h = e_1 \cprod e_2$ the proof is similar to the one for 
        $\forx{e_1}{e_2}$ as the definitions of $\inc(h)$ and $\dg(h)$ are 
        similar.

  \item For $h = e_1 \uplus e_2$ we have the following cases:
  \\
  Case 1: $\dg(\inc(e_1)) = \dg(e_1) -1$
          and $e_2$ is {\em input-independent}:
  \begin{align*}
  &
  \dg(\inc(e_1 \uplus e_2)) 
  = \max(\dg( \inc(e_1) ), 0)
  = \dg( \inc(e_1) )
  = \dg( e_1 ) -1
  = \max(\dg( e_1 ), 0) -1
  = \dg(e_1 \uplus e_2) -1. 
  \end{align*}

  Case 2: $\dg(\inc(e_2)) = \dg(e_2) -1$
          and $e_1$ is {\em input-independent}:
          Analogous to Case 1.
 
  Case 3: $\dg(\inc(e_1)) = \dg(e_1) -1$ and
          $\dg(\inc(e_2)) = \dg(e_2) -1$:
  \begin{align*}
  &
  \dg(\inc(e_1 \uplus e_2)) 
  = \dg(\inc(e_1) \uplus \inc(e_2)) 
  = \max(\dg( \inc(e_1) ), \dg( \inc(e_2) ))
  = \max(\dg( e_1 )-1, \dg( e_2 )-1)
  \\
  & \qqquad\qquad\;\;\;\,
  = \max(\dg( e_1 ), \dg( e_2 ))-1
  = \dg(e_1 \uplus e_2) -1. 
  \end{align*}

  \item For $h = \ominus(e)$ we have that
  $\dg(\inc(e)) = \dg(e)-1$, therefore
  $\dg(\inc(\ominus(e))) 
   = \dg(\ominus(\inc(e)))
   = \dg(\inc(e))
   = \dg(e)-1
   = \dg(\ominus(e))-1.$

  \item For $h = \flt(e)$ the proof is similar to the one for 
        $\ominus(e)$ as the definitions of $\inc(h)$ and $\dg(h)$ are 
        similar.
\end{itemize}
\end{proof}

\subsection{The cost transformation}
\label{sec:app:cost}

\begin{lemma}
\label{lem:ipnra-cost-mon}
For any $\ipNRC$ expression $\IGamma;\VGamma,x:C \vdash h:\Bag(A)$, the cost
interpretation $\costt{h}{}$ is monotonic, i.e.\ $\forall c_1, c_2 \in \deg{C}$ s.t.\ 
$c_1 \preceq c_2$ then 
$\costc{h}{}{\Irhoc;\Vrhoc[x\coloneq c_1]} \preceq 
 \costc{h}{}{\Irhoc;\Vrhoc[x\coloneq c_2]}$.
\end{lemma}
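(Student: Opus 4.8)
The plan is to prove a slightly stronger statement by structural induction, namely that the cost interpretation is \emph{jointly} monotone in all of its free-variable cost assignments: if $\Irhoc \preceq \Irhoc'$ and $\Vrhoc \preceq \Vrhoc'$ hold componentwise (using the orders $\preceq_{\Bag(C_i)}$ on the $\Irhoc$-slots and $\preceq_{A_j}$ on the $\Vrhoc$-slots), then $\costc{h}{}{\Irhoc;\Vrhoc} \preceq \costc{h}{}{\Irhoc';\Vrhoc'}$. The lemma as stated is the special case where $\Irhoc = \Irhoc'$, where $\Vrhoc$ and $\Vrhoc'$ agree except on $x$, and where $c_1 \preceq_C c_2$. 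Strengthening to this joint form is what makes the induction self-supporting, since the binding constructs substitute the cost of a bound variable by the cost of a subexpression, and one must therefore track how a change in a free variable propagates both directly and through the bound variables.

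Before the induction I would record that every elementary operation used by the cost transformation of Figure~\ref{fig:ipnra-cost} is monotone with respect to the relevant orders: the supremum $\sup$ (monotone by the construction of $\prec_A$), natural-number multiplication $\cdot$ and $\max$ on $\pbN$, the pairing $\tuple{-,-}$ on product cost domains, and the extraction of the nested components $\costt{e}{o}, \costt{e}{i}, \costt{e}{oi}, \costt{e}{ii}$ of a bag or tuple cost. The last of these is monotone because $\bagS{x}{n} \preceq_{\Bag(C)} \bagS{y}{m}$ unfolds, by the definition of $\preceq$, to $n \le m$ together with $x \preceq_C y$, so each nested component of a cost value inherits the ordering. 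These are all routine consequences of the definitions of $\prec_A$ and $\preceq_A$.

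The induction on $h$ is then immediate in the base cases and in the non-binding constructs. The constants $\uZ$, $p(x)$, $\etpl$, $\sngr(e)$ have cost independent of the assignment; the lookups $\costc{X}{}{\Irhoc;\Vrhoc} = \Irhoc(X)$, $\costc{\sngvar{x}}{}{\Irhoc;\Vrhoc} = \set{\Vrhoc(x)}$, and $\costc{\piTC{x}{i}}{}{\Irhoc;\Vrhoc} = \set{\pi_i(\Vrhoc(x))}$ are monotone directly from the hypotheses together with the singleton and product clauses of $\preceq$. For $e_1 \uplus e_2$, $\ominus(e)$, $e_1 \cprod e_2$, and $\flt(e)$, I would apply the induction hypothesis to the immediate subexpressions and then recombine the results with the monotone operations recorded above.

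The main obstacle is the binding case $\for{x'}{e_1}\collects e_2$ (and, analogously, $\lletz$). Here the cost of the body $e_2$ is evaluated at the extended assignment $\Vrhoc[x' \coloneq \costc{e_1}{i}{\Irhoc;\Vrhoc}]$, so raising a free variable increases the cost in two coupled ways: directly inside $e_2$, and indirectly by raising $\costc{e_1}{i}{}$, the cost fed into the bound slot $x'$. The induction hypothesis on $e_1$ gives $\costc{e_1}{i}{\Irhoc;\Vrhoc} \preceq \costc{e_1}{i}{\Irhoc';\Vrhoc'}$ and $\costc{e_1}{o}{\Irhoc;\Vrhoc} \le \costc{e_1}{o}{\Irhoc';\Vrhoc'}$; substituting the former into the bound slot and invoking the \emph{joint} induction hypothesis on $e_2$ — now with both the free variables and $x'$ moved upward — yields monotonicity of both $\costc{e_2}{i}{}$ and $\costc{e_2}{o}{}$ at the extended assignments. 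Multiplying the two cardinality bounds (monotone on $\pbN$) and pairing the element costs then gives $\costc{\for{x'}{e_1}\collects e_2}{}{\Irhoc;\Vrhoc} \preceq \costc{\for{x'}{e_1}\collects e_2}{}{\Irhoc';\Vrhoc'}$. Because the free and bound variables move in the same direction, the joint form of the hypothesis closes the argument; this coupling is exactly why the single-variable statement is not directly inductive and must be strengthened.
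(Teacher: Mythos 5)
Your proof is correct, and at its core it takes the same route as the paper's: structural induction on $h$, with the base cases read off directly and the compound cases closed by monotonicity of the operations ($\sup$, product of cardinalities, pairing, component extraction) that the cost transformation of Figure~\ref{fig:ipnra-cost} uses to combine sub-costs. The genuine difference is your strengthening of the induction hypothesis to \emph{joint} monotonicity in all variable assignments. The paper proves only the single-variable statement, and in its $\forz$ case it asserts, as ``the induction hypothesis,'' inequalities of the shape
$\costcy{e_2}{i}{\costcx{e_1}{i}{c_1}} \preceq \costcy{e_2}{i}{\costcx{e_1}{i}{c_2}}$,
in which the assignment fed to the bound variable changes while any direct occurrences of $x$ in $e_2$ are silently ignored --- exactly the coupling you identify. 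That step is justified either by your joint form, or by chaining two single-variable applications (move $x$ with the bound variable fixed, then the bound variable with $x$ fixed) and invoking transitivity of $\preceq$; the paper does neither explicitly. So your strengthening buys rigor where the paper is loose, and it also covers the $\lletz$ case (monotonicity in $\Irhoc$-variables), which the paper's proof omits entirely. Conversely, your closing claim that the single-variable statement ``is not directly inductive and must be strengthened'' is slightly overstated: the chaining-plus-transitivity workaround does rescue it, so strengthening is the cleaner fix rather than a necessary one.

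One small slip: you group $\sngr(e)$ with the assignment-independent constants. It is not one. In $\ipNRC$, $e$ must be \emph{input}-independent, but it may still mention $\forz$-bound variables (e.g.\ $\sngr(\sngvar{x})$), so $\costt{\sngr(e)}{} = \set{\costt{e}{}}$ can depend on $\Vrhoc$. The paper treats it as an ordinary non-binding compound case: apply the induction hypothesis to $e$ and use monotonicity of the singleton cost constructor. Your own framework does precisely this for the other non-binding constructs, so the repair is immediate, but as written the case analysis misclassifies it.
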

\begin{proof}
The result follows via structural induction on $h$ and from the 
fact that the cost functions of the $\ipNRC$ constructs are themselves
monotonic.

We do a case by case analysis on $h:$
\begin{itemize}
\item For $h \in \set{R, p, \uZ, \etpl}$ the result follows from the fact that
$\forall c_1,c_2.\ 
 \costc{h}{}{\Irhoc;\Vrhoc[x\coloneq c_1]} =
 \costc{h}{}{\Irhoc;\Vrhoc[x\coloneq c_2]}$.

\item For $h = \sngvar{x}:
\costc{\sngvar{x}}{}{\Irhoc;\Vrhoc[x\coloneq c_1]} 
= \set{c_1} \preceq \set{c_2} 
= \costc{\sngvar{x}}{}{\Irhoc;\Vrhoc[x\coloneq c_2]}
$

\item For $h = \piTC{i}{x}:
  \costc{\piTC{i}{x}}{}{\Irhoc;\Vrhoc[x\coloneq \tuple{c_{11},c_{12}}]} 
= \set{c_{1i}} \preceq \set{c_{2i}} 
= \costc{\piTC{i}{x}}{}{\Irhoc;\Vrhoc[x\coloneq \tuple{c_{21},c_{22}}]}
$

\item For $h = \fory{e_1}{e_2},$ we have from the induction hypothesis that:
\begin{align*}
&
\costcx{e_1}{i}{c_1} \preceq 
\costcx{e_1}{i}{c_2}
&
 \costcy{e_2}{i}{\costcx{e_1}{i}{c_1} }
 \preceq 
 \costcy{e_2}{i}{\costcx{e_1}{i}{c_2} ]}
\\
&
\costcx{e_1}{o}{c_1} \le 
 \costcx{e_1}{o}{c_2}
&
 \costcy{e_2}{o}{\costcx{e_1}{i}{c_1} ]} 
 \le 
 \costcy{e_2}{o}{\costcx{e_1}{i}{c_2} ]},
\end{align*}
therefore:
\begin{align*}
&
\costcx{\forx{e_1}{e_2}}{}{c_1}
=
\bagS{ \costcy{e_2}{i}{ \costcx{e_1}{i}{c_1} } }
{ \costcy{e_2}{o}{ \costcx{e_1}{i}{c_1} } \cdot
  \costcx{e_1}{o}{c_1} }
\\
&
\preceq
\bagS{ \costcy{e_2}{i}{ \costcx{e_1}{i}{c_2} } }
{ \costcy{e_2}{o}{ \costcx{e_1}{i}{c_2} } \cdot
  \costcx{e_1}{o}{c_2} } 
\\
&
=
\costcx{\forx{e_1}{e_2}}{}{c_2}
\end{align*}

\item For $h = e_1 \cprod e_2,$ we have from the induction hypothesis that
\begin{align*}
&
\costcx{e_1}{i}{c_1} \preceq \costcx{e_1}{i}{c_2}
&
\costcx{e_2}{i}{c_1} \preceq \costcx{e_2}{i}{c_2}
\\
&
\costcx{e_1}{o}{c_1} \le \costcx{e_1}{o}{c_2}
&
\costcx{e_2}{o}{c_1} \le \costcx{e_2}{o}{c_2}, 
\end{align*}
therefore:
\begin{align*}
&
\costcx{e_1 \cprod e_2}{}{c_1}
=
\bagS{\tuple{\costcx{e_1}{i}{c_1},\costcx{e_2}{i}{c_1}}}
    {\costcx{e_1}{o}{c_1} \cdot \costcx{e_2}{o}{c_1}}
\\
&
\qqquad\qquad\quad
\preceq
\bagS{\tuple{\costcx{e_1}{i}{c_2},\costcx{e_2}{i}{c_2}}}
    {\costcx{e_1}{o}{c_2} \cdot \costcx{e_2}{o}{c_2}}
=
\costcx{e_1 \cprod e_2}{}{c_2}
\end{align*}

\item For $h = e_1 \uplus e_2,$ we have from the induction hypothesis that
$\costcx{e_1}{}{c_1} \preceq \costcx{e_1}{}{c_2}$ and
$\costcx{e_2}{}{c_1} \preceq \costcx{e_2}{}{c_2},$ therefore:
\begin{align*}
&
\costcx{e_1 \uplus e_2}{}{c_1}
= \sup( \costcx{e_1}{}{c_1}, \costcx{e_2}{}{c_1} )
\\
&
\quad
\preceq \sup( \costcx{e_1}{}{c_2}, \costcx{e_2}{}{c_2} )
= \costcx{e_1 \uplus e_2}{}{c_2}
\end{align*}

\item For $h = \ominus(e),$ we have from the induction hypothesis that
$\costcx{e}{}{c_1} \preceq \costcx{e}{}{c_2},$ therefore:
\begin{align*}
&
\costcx{ \ominus(e) }{}{c_1}
= \costcx{e}{}{c_1} \preceq \costcx{e}{}{c_2}
= \costcx{\ominus(e)}{}{c_2}
\end{align*}

\item For $h = \flt(e),$ we have from the induction hypothesis that
$\costcx{e}{o}{c_1} \le \costcx{e}{o}{c_2},$
$\costcx{e}{io}{c_1} \le \costcx{f}{io}{c_2}$ and 
$\costcx{e}{ii}{c_1} \preceq \costcx{e}{ii}{c_2},$ therefore:
\begin{align*}
&
  \cost{ \flt(e) }{}{c_1}
= \bagS{ \costcx{e}{ii}{c_1} }{\costcx{e}{o}{c_1} \cdot \costcx{e}{io}{c_1}}
\\
&
\quad
\preceq 
  \bagS{ \costcx{e}{ii}{c_2} }{\costcx{e}{o}{c_2} \cdot \costcx{e}{io}{c_2}}
= \cost{ \flt(e) }{}{c_2}
\end{align*}

\item For $h = \sngr(e),$ we have from the induction hypothesis that
$\costcx{e}{}{c_1} \preceq \costcx{f}{}{c_2},$ therefore:
\[
\costcx{ \sngr(e) }{}{c_1}
= \bagS{\costcx{e}{}{c_1}}{} \preceq \bagS{\costcx{e}{}{c_2}}{}
= \costcx{\sngr(e)}{}{c_2}
\]
\end{itemize}
\end{proof}

\costthipNRA*

\begin{proof}
The proof follows via structural induction on $h$ and from the cost
semantics of $\ipNRC$ constructs
as well as the monotonicity of $\tcost(\cdot)$.

\begin{itemize}
  \item For $h = R$ we have:
  $\costt{\inc(R)}{} = \costt{\Delta R}{} = \sz(\Delta R) 
   \prec \sz(R) = \costt{R}{}$
  
  \item For $h = \forx{e_1}{e_2}$ we need to show that:
  \begin{align*}
  &
  \costt{\inc(\forx{e_1}{e_2})}{}
  = \costt{(\forx{\inc(e_1)}{e_2}) \uplus 
           (\forx{e_1}{\inc(e_2)}) \uplus 
           (\forx{\inc(e_1)}{\inc(e_2)}) }{}
  \\
  & 
  = \sup( \costt{\forx{\inc(e_1)}{e_2}}{}, 
          \costt{\forx{e_1}{\inc(e_2)}}{},
          \costt{\forx{\inc(e_1)}{\inc(e_2)}}{} )
  \\
  &
  \prec \costc{\forx{e_1}{e_2}}{}{}
  \end{align*}
 
  Case 1:  $\costt{\inc(e_1)}{} \prec \costt{e_1}{}$
           and $e_2$ is {\em input-independent}, therefore 
           $\inc(e_2) = \uZ$ (from Lemma~\ref{lem:delta-const}):  
  \begin{align*}
  &
  \costt{\inc(\forx{e_1}{e_2})}{} = \costt{\forx{\inc(e_1)}{e_2}}{}
  \\
  &
  = \bagS{ \costcx{e_2}{i}{ \costt{\inc(e_1)}{i} } }
         { \costcx{e_2}{o}{ \costt{\inc(e_1)}{i} } \cdot 
           \costt{\inc(e_1)}{o} }
  \\
  &
  \prec 
  \bagS{ \costcx{e_2}{i}{ \costt{e_1}{i} } }
       { \costcx{e_2}{o}{ \costt{e_1}{i} } \cdot 
         \costt{e_1}{o} }
  = \costt{\forx{e_1}{e_2}}{},
  \end{align*}
  where we used the fact that 
  $\costt{\inc(e_1)}{o} < \costt{e_1}{o}$ and
  $ \costcx{e_2}{}{ \costt{\inc(e_1)}{i} } \preceq 
    \costcx{e_2}{}{ \costt{e_1}{i} }$ (from Lemma~\ref{lem:ipnra-cost-mon}).

  Case 2: $\costt{\inc(e_2)}{} \prec \costt{e_2}{}$ 
          and $e_1$ is {\em input-independent}, therefore 
          $\inc(e_1) = \uZ$ (from Lemma~\ref{lem:delta-const}):
  \begin{align*}
  &
  \costt{\inc(\forx{e_1}{e_2})}{} = \costt{\forx{e_1}{\inc(e_2)}}{} 
  \\
  &
  = \bagS{ \costcx{\inc(e_2)}{i}{ \costt{e_1}{i} } }
         { \costcx{\inc(e_2)}{o}{ \costt{e_1}{i} } \cdot 
           \costt{e_1}{o} }
  \\
  &
  \prec 
  \bagS{ \costcx{e_2}{i}{ \costt{e_1}{i} } }
       { \costcx{e_2}{o}{ \costt{e_1}{i} } \cdot 
         \costt{e_1}{o} } 
  =
  \costt{\forx{e_1}{e_2}}{}, 
  \end{align*}
  where we used the fact that 
  $\costcx{\inc(e_2)}{i}{ \costt{e_1}{i} } \preceq 
   \costcx{e_2}{i}{ \costt{e_1}{i} }$
  and
  $\costcx{\inc(e_2)}{o}{ \costt{e_1}{i} } < 
   \costcx{e_2}{o}{ \costt{e_1}{i} }$.

  Case 3: $\costt{\inc(e_2)}{} \prec \costt{e_2}{}$ and 
          $\costt{\inc(e_1)}{} \prec \costt{e_1}{}.$ 
          We show that each term of            
          the $\sup$ function is less than the cost of the original function:
  \begin{align*}
  &
  \costt{\forx{\inc(e_1)}{e_2}}{}
  \prec \costt{\forx{e_1}{e_2}}{}, \text{ see Case 1.}
  \\
  &
  \costt{\forx{e_1}{\inc(e_2)}}{} 
  \prec \costt{\forx{e_1}{e_2}}{}, \text{ see Case 2.}
  \\
  &
  \cost{\forx{\inc(e_1)}{\inc(e_2)}}{}{c}
  =  \bagS{ \costcx{\inc(e_2)}{i}{ \costt{\inc(e_1)}{i} } }
          { \costcx{\inc(e_2)}{o}{ \costt{\inc(e_1)}{i} } \cdot
            \costt{\inc(e_1)}{o} }
  \\
  & 
  \prec \bagS{ \costcx{e_2}{i}{ \costt{\inc(e_1)}{i} } }
             { \costcx{e_2}{o}{ \costt{\inc(e_1)}{i} } \cdot 
               \costt{\inc(e_1)}{o} }
  = \costt{ \forx{\inc(e_1)}{e_2} }{} 
  \prec \costt{\forx{e_1}{e_2}}{}. 
  \end{align*}

  \item For $h = e_1 \cprod e_2$ we need to show that:
  \begin{align*}
  &
  \costt{\inc(e_1 \cprod e_2)}{} 
  = \costt{e_1 \cprod \inc(e_2) \uplus \inc(e_1) \cprod e_2 \uplus 
          \inc(e_1) \cprod \inc(e_2) }{}
  \\
  &
  \qqquad\qquad\quad\;\,
  = \sup( \costt{e_1 \cprod \inc(e_2)}{}, \costt{\inc(e_1) \cprod e_2}{},
          \costt{\inc(e_1) \cprod \inc(e_2)}{} )
  \prec
  \costt{e_1 \cprod e_2}{} 
  \end{align*}
  Case 1: $\costt{\inc(e_2)}{} \prec \costt{e_2}{}$
          and $e_1$ is {\em input-independent}, therefore
          $\inc(e_1) = \uZ$ (from Lemma~\ref{lem:delta-const}):
  \begin{align*}
  &
  \costt{\inc(e_1 \cprod e_2)}{} 
  = \costt{e_1 \cprod \inc(e_2)}{}
  = \bagS{ \tuple{ \costt{e_1}{i}, \costt{\inc(e_2)}{i} } }
        { \costt{e_1}{o} \cdot \costt{\inc(e_2)}{o} }
  \\
  &
  \qqqquad\qqqquad
  \prec
    \bagS{ \tuple{ \costt{e_1}{i}, \costt{e_2}{i} } }
        { \costt{e_1}{o} \cdot \costt{e_2}{o} }
  = \costt{e_1 \cprod e_2}{}
  \end{align*}

  Case 2: $\costt{\inc(e_1)}{} \prec \costt{e_1}{}$ 
          and $e_2$ is {\em input-independent}:
          Analogous to Case 1.
 
  Case 3: $\costt{\inc(e_1)}{} \prec \costt{e_1}{}$ and
          $\costt{\inc(e_2)}{} \prec \costt{e_2}{}.$ We show that each term 
          of the $\sup$ function is less than the cost of the original function: 
  \begin{align*}
  &
  \costt{e_1 \cprod \inc(e_2)}{} \prec \costt{e_1 \cprod e_2}{},
  \text{ see Case 2.}
  \\
  &
  \costt{\inc(e_1) \cprod e_2}{} \prec \costt{e_1 \cprod e_2}{},
  \text{ see Case 3.}
  \\
  &
  \costt{\inc(e_1) \cprod \inc(e_2)}{}
  = \bagS{ \tuple{ \costt{\inc(e_1)}{i}, \costt{\inc(e_2)}{i} } }
        { \costt{\inc(e_1)}{o} \cdot \costt{\inc(e_2)}{o} }
  \\
  &
  \qqqquad\;\;\;
  \prec
    \bagS{ \tuple{ \costt{e_1}{i}, \costt{\inc(e_2)}{i} } }
        { \costt{e_1}{o} \cdot \costt{\inc(e_2)}{o} }
  = \costt{e_1 \cprod \inc(e_2)}{} \prec
    \costt{e_1 \cprod e_2}{}.
  \end{align*}

  \item For $h = e_1 \uplus e_2$ we have the following cases:
  \\
  Case 1: $\costt{\inc(e_2)}{} \prec \costt{e_2}{}$
          and $e_1$ is {\em input-independent}, therefore
          $\inc(e_1) = \uZ$ (from Lemma~\ref{lem:delta-const}):
  \begin{align*}
  &
  \costt{\inc(e_1 \uplus e_2)}{} 
  = \costt{ \inc(e_2) }{}
    \prec \sup( \costt{e_1}{},\costt{e_2}{} )
  = \costt{e_1 \uplus e_2}{}. 
  \end{align*}

  Case 2: $\costt{\inc(e_1)}{} \prec \costt{e_1}{}$ 
          and $e_2$ is {\em input-independent}:
          Analogous to Case 1.
 
  Case 3: $\costt{\inc(e_1)}{} \prec \costt{e_1}{}$ and
          $\costt{\inc(e_2)}{} \prec \costt{e_2}{}:$
  \begin{align*}
  &
  \costt{\inc(e_1 \uplus e_2)}{} 
  = \costt{\inc(e_1) \uplus \inc(e_2) }{}
  = \sup( \costt{\inc(e_1)}{},\costt{\inc(e_2)}{} ) \prec
    \sup( \costt{e_1}{},\costt{e_2}{} )
  = \costt{e_1 \uplus e_2}{}. 
  \end{align*}

  \item For $h = \ominus(e)$ we have that
         $\costt{\inc(\ominus(e))}{} 
           = \costt{\ominus(\inc(e))}{}
           = \costt{\inc(e)}{}
           \prec \costt{e}{} 
           = \costt{\ominus(e)}{}. 
          $

  \item For $h = \flt(e)$ we have that
        $\costt{\inc(e)}{} \prec \costt{e}{}$, therefore:
          \begin{align*}
          &  
             \costt{\inc(\flt(e))}{} 
           = \costt{\flt(\inc(e))}{}
           = \bagS{ \costt{\inc(e)}{ii} }
                 { \costt{\inc(e)}{o} \cdot \costt{\inc(e)}{oi} }
          \\
          &
          \qqqquad\qqqquad\qquad\quad\;\;\;
           \prec 
             \bagS{ \costt{e}{ii} }
                 { \costt{e}{o} \cdot \costt{e}{oi} } 
           = \costt{\flt(e)}{}, 
          \end{align*}
  where we used the fact that $\costt{\inc(e)}{o} < \costt{e}{o}$ and
  $\costt{\inc(e)}{i} \preceq \costt{e}{i}.$
\end{itemize}
\end{proof}

\section{Shredding $\pNRC$}
\label{app:shred}

\subsection{The shredding transformation}
\label{app:shred-trans}

The full definition of the shredding transformation 
for the constructs of $\pNRC$ can be found in Figure \ref{fig:shred}.
We remark that it produces expressions that no longer
make use of the singleton combinator $\sng(e)$, 
thus their deltas do not generate any deep updates. 

In addition,
we note that only the shreddings of $\sng(e)$ and $\flt(e)$
fundamentally change the contexts, whereas the shreddings of most of the other
operators modify only the flat component of the output
(see $\shred(e_1 \cprod e_2)$, $\shred(\ominus(e))$). 
In fact, if we interpret the output context $\gc{B}$ as a tree, having the same
structure as the nested type $B$, we can see that $\gcf{\sng(e)}$ /
$\gcf{\flt(e)}$ are the only ones able to add / remove a level from the tree.

\begin{figure*}[t!]

{\small
\begin{tabular}{lll}
{$\!
\begin{aligned}
\fcf{ \sngvar{x} } &: \Bag(A^F)
\\
\fcf{ \sngvar{x} } &= \sngvar{\fc{x}}
\\
\gcf{ \sngvar{x} } &: \gc{A}
\\
\gcf{ \sngvar{x} } &= \gc{x}
\end{aligned}
$}
&
{$\!
\begin{aligned}
\fcf{\forx{e_1}{e_2}} &: \Bag(B^F)
\\
\fcf{\forx{e_1}{e_2}} &= \llet{\gc{x}}{\gc{e_1}}\ 
                         \for{\fc{x}}{\fc{e_1}}\collects
                         \fc{e_2} 
\\
\gcf{\forx{e_1}{e_2}} &: B^\gG
\\
\gcf{\forx{e_1}{e_2}} &= \llet{\gc{x}}{\gc{e_1}}\ \gc{e_2} 
\end{aligned}
$}
&
{$\!
\begin{aligned}
\fcf{ \piTC{x}{i} } &: \Bag(A_i^F) 
\\
\fcf{ \piTC{x}{i} } &= \piTC{\fc{x}}{i}
\\
\gcf{ \piTC{x}{i} } &: A_i^\gG 
\\
\gcf{ \piTC{x}{i} } &= \ggc{x}{i}
\end{aligned}
$}
\\[35pt]
\end{tabular}

\begin{tabular}{lll}
{$\!
\begin{aligned}
\fcf{ \etpl } &: \Bag(1)\qqqquad
\\
\fcf{ \etpl } &= \etpl
\\
\gcf{ \etpl } &: 1
\\
\gcf{ \etpl } &= \tuple{}
\end{aligned}
$}
&
{$\!
\begin{aligned}
\fcf{e_1 \cprod e_2} &: \Bag(A_1^F \x A_2^F)
\\
\fcf{e_1 \cprod e_2} &= \fc{e_1} \cprod \fc{e_2}
\\
\gcf{e_1 \cprod e_2} &: A_1^\gG \x A_2^\gG 
\\
\gcf{e_1 \cprod e_2} &= \tuple{\gc{e_1}, \gc{e_2}} 
\end{aligned}
$}
&
{$\!
\begin{aligned}
\fcf{ R } &: \Bag(A^F)
\\
\fcf{ R } &= \forr{r}{R}{ \sh^F_A(r) }
\\
\gcf{ R } &: A^\gG 
\\
\gcf{ R } &= \sh^\gG_A
\end{aligned}
$}
\\[35pt]
{$\!
\begin{aligned}
\fcf{ e_1 \uplus e_2 } &: \Bag(B^F)
\\
\fcf{ e_1 \uplus e_2 } &= \fc{e_1} \uplus \fc{e_2}
\\
\gcf{ e_1 \uplus e_2 } &: B^\gG
\\
\gcf{ e_1 \uplus e_2 } &= \gc{e_1} \dcup \gc{e_2} 
\end{aligned}
$}
&
{$\!
\begin{aligned}
\fcf{ \sng_\iota(e) } &: \Bag(\bL) 
\\
\fcf{ \sng_\iota(e) } &= 
	\inL_{\iota,\VGamma}(\Vrho)
\\
\gcf{ \sng_\iota(e) } &:  ( \bL \arr \Bag(\fc{B}) ) \x \gc{B}
\\
\gcf{ \sng_\iota(e) } &= 
	        \tuple{ [(\iota,\VGamma) \mapsto \fc{e}],
				    \gc{e}
	        } \quad\;\;\, 
\end{aligned}
$}
&
{$\!
\begin{aligned}
\fcf{ \uZ } &: \Bag(B^F)
\\
\fcf{ \uZ } &= \uZ
\\
\gcf{ \uZ } &: B^\gG 
\\
\gcf{ \uZ } &= \uZ_{B^\gG}
\end{aligned}
$}
\\[35pt]
{$\!
\begin{aligned}
\fcf{ \ominus( e ) } &: \Bag(\fc{B})
\\
\fcf{ \ominus( e ) } &=  \ominus( \fc{e} ) 
\\
\gcf{ \ominus( e ) } &:  \gc{B}
\\
\gcf{ \ominus( e ) } &= \gc{e} 
\end{aligned}
$}
&
{$\!
\begin{aligned}
\fcf{ \flt(e) } &: \Bag(\fc{B})
\\
\fcf{ \flt(e) } &= \forr{l}{\fc{e}}{\ggc{e}{1}(l)}  \qquad 
\\
\gcf{ \flt(e) } &: \gc{B}
\\
\gcf{ \flt(e) } &= \ggc{e}{2} 
\end{aligned}
$}
&
{$\!
\begin{aligned}
\fcf{ p(x) } &: \Bag(1)
\\
\fcf{ p(x) } &= p(x)
\\
\gcf{ p(x) } &: 1
\\
\gcf{ p(x) } &= \tuple{}
\end{aligned}
$}
\\[35pt]
\end{tabular}

\begin{tabular}{ll}
{$\!
\begin{aligned}
\fcf{\llet{X}{e_1} e_2} &: \Bag(B^F)
\qquad\quad\;
\\
\gcf{\llet{X}{e_1} e_2} &: B^\gG 
\end{aligned}
$}
&
{$\!
\begin{aligned}
\fcf{\llet{X}{e_1}\ e_2} &= 
\lletBin{\fc{X}}{\fc{e_1}}{\gc{X}}{\gc{e_2}}\ \fc{e_2}
\\
\gcf{\llet{X}{e_1}\ e_2} &= 
\lletBin{\fc{X}}{\fc{e_1}}{\gc{X}}{\gc{e_2}}\ \gc{e_2} 
\end{aligned}
$}
\end{tabular}
}

\caption{The shredding transformation, where $\sh^F_A$ and $\sh^\gG_A$
are described in Figure~\ref{fig:shk}.}
\label{fig:shred}
\end{figure*}

\begin{figure*}[t!]

\begin{subfigure}{\textwidth}

\begin{align*}
&
\shF_{\Base} : \Base \arr \Bag(\Base)
&&
\shF_{A_1 \x A_2} :	(A_1 \x A_2) \arr \Bag(A_1^F \x A_2^F)
&&
\shF_{\Bag(C)} : \Bag(C) \arr \Bag(\bL) 
\\
&
\shF_{\Base}(a) = \sngvar{a}
&&
\shF_{A_1 \x A_2}(a) 
= \forr{\tuple{a_1,a_2}}{\sngvar{a}}{}  
&&
\shF_{\Bag(C)}(v) =  \gdic_C(v)
\\
&&&
\qqquad\quad\;\; 
\fc{\sh}_{A_1}(a_1) \cprod \fc{\sh}_{A_2}(a_2)
\\[-25pt]
\end{align*}
\begin{align*}
&
\shG_{\Base} : 1
&&
\shG_{A_1 \x A_2} :	A_1^\gG \x A_2^\gG
&&
\shG_{\Bag(C)} :	( \bL \darr  \Bag(C^F) ) \x C^\gG 
\\
&
\shG_{\Base} = \tuple{}
&&
\shG_{A_1 \x A_2}
= \tuple{\sh^\gG_{A_1}, \sh^\gG_{A_2}} 
&&
\shGG{1}_{\Bag(C)} = \for{l}{\supp(\gdic^{-1}_C)} \collects
&&
\shGG{2}_{\Bag(C)} 
=	      \shG_C 
\\
&&&&& 
\qqquad\;\;
  [l \mapsto \forr{c}{\gdic^{-1}_C(l)}{\shF_C(c)}] 
\end{align*}

\caption{
$\shF_A : A \arr \Bag(\fc{A}), \shG_A : \gc{A}$
}
\label{fig:shk}

\end{subfigure}
~
\begin{subfigure}{\textwidth}

\begin{align*}
&
\nst_{\Base}[\tuple{}] : \Base \arr \Bag(\Base)
&&
\nst_{\Base}[\tuple{}](\fc{a}) = \sngvar{\fc{a}}
\\
&
\nst_{A_1{\x}A_2}[\gc{a}] : A_1^F {\x} A_2^F \arr \Bag(A_1{\x}A_2)
&&
\nst_{A_1 \x A_2}[\gc{a}](\fc{a}) = 
  \forr{\tuple{\fc{a_1},\fc{a_2}}}{\sngvar{\fc{a}}}{} 
  \nst_{A_1}[\ggc{a}{1}](\fc{a_1}) \cprod
   \nst_{A_2}[\ggc{a}{2}](\fc{a_2})
\\
&
\nst_{\Bag(C)}[\gc{a}] : \bL \arr \Bag(\Bag(C))
&&
\nst_{\Bag(C)}[\gc{a}](l) = 
	\sng( \forr{\fc{c}}{\ggc{a}{1}(l)}{ \nst_C[\ggc{a}{2}](\fc{c}) } )
\end{align*}

\caption{$\nst_A[\gc{a}]  : \fc{A} \arr \Bag(A)$}
\label{fig:nstk}

\end{subfigure}

\caption{Shredding and nesting bag values.}
\label{fig:nst_sh}

\end{figure*}

We define 
$\shF_A : A \arr \Bag(\fc{A}) \text{ and } \shG_A : \gc{A},$ for shredding bag 
values $R : \Bag(A)$, as well as $\nst_{A}[\gc{a}]: \fc{A} \arr \Bag(A)$ 
for converting them back to nested form:
\begin{align*}
\\[-7mm]
&
\fc{R} = \forr{a}{R}{\sh^F_A(a)}  
\qquad
\gc{R} = \shG_A
\qqqquad
R = \forr{\fc{a}}{\fc{R}}{\nst_A[\gc{R}](\fc{a})}, 
\\[-7mm]
\end{align*}
where $\sh^F_A, \sh^\gG_A$ and $\nst_A$
are presented in Figure~\ref{fig:nst_sh}.

When shredding a bag value $R: \Bag(A)$, 
the flat component $R^F : \Bag(A^F)$ is generated
by replacing every nested bag $v:\Bag(C)$ from $R$, with a 
label $l = \tuple{\iota_{v},\tuple{}}$. 
The association between every bag $v : \Bag(C)$, occurring nested somewhere 
inside $R$, and the label $l$ is given via $\gdic_C$ and $\gdic^{-1}_C$:
\begin{align*}
&
\gdic_C : \Bag(C) \arr \Bag(\bL)
&&
\gdic_C(v) = \sngv{l}
\qqqquad 
&&
\gdic^{-1}_C : \bL \darr \Bag(C) 
&&
\gdic^{-1}_C(l) = v.
\end{align*} 

For each label $l$ introduced by $\gdic_C$, $\shG_{\Bag(C)}$ constructs
a dictionary, mapping $l$ to the flat component of the shredded version of 
$v$.
This is done by first using the dictionary $\gdic^{-1}_C$, to obtain $v$
and applying $\shF_C$ to shred its contents.

Converting a shredded bag 
$\fc{R} : \Bag(A^F), \gc{R}: A^\gG$, back to nested form 
can be done via $\forr{x}{\fc{R}}{\nst_A[\gc{R}](x)}$, which replaces
the labels in $\fc{R}$ by their definitions from the context $\gc{R}$, 
as computed by $\nst_A[\gc{a}]$ (Figure \ref{fig:nstk}).
We note that the definitions themselves also have to be recursively 
turned to nested form, which is done in $\nst_{\Bag(C)}$.

\subsection{Example: Label dictionaries}
\label{app:shred:ex-lab-dic}

We give a couple of examples where we contrast the outcome
of label unioning dictionaries with that of applying bag addition on them
(we use $x^n$ to denote $n$ copies of $x$).
\begin{align*}
[l_1 \mapsto \set{b_1}, l_2 \mapsto \set{b_2,b_3}]
\dcup
[l_2 \mapsto \set{b_2, b_3}, l_3 \mapsto \set{b_4}]
&=
[l_1 \mapsto \set{b_1}, 
 l_2 \mapsto \set{b_2,b_3}, 
 l_3 \mapsto \set{b_4}]
\\
[l_1 \mapsto \set{b_1}, l_2 \mapsto \set{b_2,b_3}]
\uplus
[l_2 \mapsto \set{b_2, b_3}, l_3 \mapsto \set{b_4}]
&=
[l_1 \mapsto \set{b_1}, 
 l_2 \mapsto \set{b_2^2,b_3^2}, 
 l_3 \mapsto \set{b_4}]
\\[10pt]
[l_1 \mapsto \set{b_1}, l_2 \mapsto \set{b_2,b_3}]
\dcup
[l_2 \mapsto \set{b_5}, l_3 \mapsto \set{b_4}]
&=
\text{error}
\qquad\;
\\
[l_1 \mapsto \set{b_1}, l_2 \mapsto \set{b_2,b_3}]
\uplus
[l_2 \mapsto \set{b_5}, l_3 \mapsto \set{b_4}]
&=
[l_1 \mapsto \set{b_1}, 
 l_2 \mapsto \set{b_2,b_3,b_5}, 
 l_3 \mapsto \set{b_4}]
\end{align*}
As we can see from these examples, bag addition allows us to modify the
label definitions stored inside the dictionaries, which is otherwise not
possible via label unioning.

\subsection{Consistency of shredded values}
\label{app:shred-val-cons}

Given an input bag $R : \Bag(A)$, its shredding version consists of a flat
component $\fc{R} : \Bag(\fc{A})$ and a context component $\gc{R} : \gc{A}$,
which is essentially a tuple of dictionaries 
$d_k : \bL \arr \Bag(\fc{C})$ such that the definition of any label $l$ in $d_k$
corresponds to a inner bag of type $\Bag(C)$ from $R$.

In order to be able to manipulate shredded values in a consistent manner we
must guarantee that 
i) the union of label dictionaries is always well defined and that 
ii) each label occurring in the flat component of a shredded 
value has a corresponding definition in the associated context component.
More formally:

\begin{definition}
We say that shredded bags 
$\tuple{ \fc{R_i}, \gc{R_i}} : \Bag(\fc{A_i}) \x \gc{A_i}$ are consistent if
the union operation over dictionaries is well-defined between any two
compatible dictionaries in $\gc{R_i}, 1\leq i\leq n$ and if all the elements of 
$\fc{R_i}$ are $\kcons$ wrt.\ their context $\gc{R_i}$, where
$v:\fc{A}$ is $\kcons$ wrt.\ $\gc{v} : \gc{A}$, if:
\begin{itemize}
\item $A = \Base$ or
\item $A = A_1 \x A_2$, $v = \tuple{v_1, v_2}$, 
$v^\gG = \tuple{v^\gG_1, v^\gG_2}$ and
$v_1,v_2$ are $\kcons$ wrt.\ $\gc{v_1}$ and $\gc{v_2}$, respectively,
or
\item $A = \Bag(C)$, $v = l : \bL$, 
$v^\gG = \tuple{v^D, c^\gG} : (\bL \arr \Bag(C^F)) \x C^\gG$, 
there exists a definition for $l$ in $v^D$ (i.e.\ $l \in \supp(v^D)$)
and for every element $c_j$
of the definition $v^D(l) = \biguplus_j \set{c_j}$, 
$c_j$ is $\kcons$ wrt.\ $c^\gG$.
\end{itemize}

\end{definition}

Regarding the first requirement, we note that
the union of label dictionaries $d_1 \dcup d_2$ results in an error when a label 
$l$ is defined in both $d_1$ and $d_2$
(i.e. $l \in \supp(d_1) \cap \supp(d_2)$) 
but the definitions do not match.
Therefore, in order to avoid this scenario a label $l$ must have the same definitions in all dictionaries where it appears.
This is true for shredded input bags, since the shredding function introduces
a fresh label for every inner bag encountered in the process.
Furthermore, this property continues to be true after evaluating 
the shredding of query 
$h[R_i] : \Bag(B)$ :
\begin{align*}
\fcf{h}[\fc{R_i},\gc{R_i}] : \Bag(\fc{B})
\qqquad
\gcf{h}[\fc{R_i},\gc{R_i}] : \gc{B}
\end{align*}
over shredded input bags $\fc{R_i} : \Bag(\fc{A}), \gc{R_i} : \gc{A}$
because
a) the labels introduced by the query 
(corresponding to the shredding of $\sng(f)$ constructs) are  
guaranteed to be fresh
and 
b) within the queries $\fcf{h}$ and $\gcf{h}$ 
dictionaries are combined only via 
label union which doesn't modify label definitions  
(i.e.\ we never apply bag union over dictionaries).

\begin{restatable}{lemma}{shConsistProp}
\label{sh_consist_prop}
Shredding produces consistent values, i.e.\ for any input bags $R_i$, 
their shredding
$\fc{R_i} = \forr{r}{R_i}{ \shF_{A_i}(r) }, \gc{R_i}= \shG_{A_i}$ is consistent.
\end{restatable}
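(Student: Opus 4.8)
The plan is to prove the lemma by structural induction on the type $A_i$, after strengthening it to a statement about arbitrary values so that the induction hypothesis is available at every nesting level. Concretely, I would prove the auxiliary claim that \emph{for every type $A$ and every value $a:A$, each element of $\shF_A(a)$ is $\kcons$ wrt.\ the context $\shG_A$}. Since $\fc{R_i} = \forr{r}{R_i}{\shF_{A_i}(r)}$ is the bag union over the elements $r$ of $R_i$, every element of $\fc{R_i}$ lies in some $\shF_{A_i}(r)$, so this auxiliary claim directly discharges the second requirement of consistency --- that each element of $\fc{R_i}$ is $\kcons$ wrt.\ $\gc{R_i} = \shG_{A_i}$.

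First I would run the induction. The base case $A = \Base$ is immediate, since $\shF_{\Base}(a) = \sngvar{a}$ and base values are $\kcons$ by definition. For $A = A_1 {\x} A_2$, every element of $\shF_{A_1 \x A_2}(a)$ has the form $\tuple{v_1,v_2}$ with $v_j \in \shF_{A_j}(a_j)$, while the context is $\shG_{A_1 \x A_2} = \tuple{\shG_{A_1},\shG_{A_2}}$; the induction hypothesis yields that $v_1,v_2$ are $\kcons$ wrt.\ $\shG_{A_1},\shG_{A_2}$, so $\tuple{v_1,v_2}$ is $\kcons$ by the product clause. The crucial case is $A = \Bag(C)$, where $\shF_{\Bag(C)}(v) = \gdic_C(v) = \sngv{l}$ is a single label and the relevant context component is $\shGG{1}_{\Bag(C)}$. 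Here I would verify the two parts of the bag clause of $\kcons$: (i) a definition for $l$ exists, because $\gdic_C(v) = \sngv{l}$ forces $\gdic^{-1}_C(l) = v$ and hence $l \in \supp(\gdic^{-1}_C) = \supp(\shGG{1}_{\Bag(C)})$; and (ii) every element of the definition $\shGG{1}_{\Bag(C)}(l) = \forr{c}{\gdic^{-1}_C(l)}{\shF_C(c)}$ belongs to some $\shF_C(c)$ with $c \in v$, so by the induction hypothesis at the strictly smaller type $C$ it is $\kcons$ wrt.\ $\shG_C = \shGG{2}_{\Bag(C)}$.

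It then remains to establish the first requirement: that the label union $\dcup$ is well-defined between any two compatible dictionaries occurring in the contexts $\gc{R_1},\dots,\gc{R_n}$. The key observation is that every dictionary sitting at a position of type $\Bag(C)$ inside any $\shG_{A_i}$ is exactly $\shGG{1}_{\Bag(C)}$, which is determined \emph{solely} by $C$ together with the global association $\gdic_C,\gdic^{-1}_C$: it ranges over all of $\supp(\gdic^{-1}_C)$ and maps each label $l$ to $\forr{c}{\gdic^{-1}_C(l)}{\shF_C(c)}$. Since the inner bag $\gdic^{-1}_C(l)$, and hence the entire definition, is a function of $l$ alone, any two compatible dictionaries are literally identical and thus trivially agree on every shared label, so $\dcup$ can never raise an error. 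Equivalently, the freshness of the labels introduced by $\gdic_C$ guarantees that a label carries the same definition everywhere it appears. Together with the induction, this gives consistency of the shredded input bags.

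The step I expect to be the main obstacle is pinning down the ``global'' status of $\gdic_C,\gdic^{-1}_C$ that underlies requirement (a): one must make precise that $\gdic$ is a single fixed association shared across all relations and all occurrences of type-$C$ bags, so that a label's definition depends only on the label and not on where it occurs. Once this is formalized, both requirements follow cleanly, and the rest is the routine structural induction sketched above (with the product and bag cases being the only non-trivial ones).
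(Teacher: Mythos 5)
Your proof is correct, but it is worth noting that the paper itself never writes out a proof of this lemma: the statement appears with no proof environment, backed only by the informal paragraph preceding it, which argues that the union requirement holds because shredding introduces a fresh label for every inner bag, so a label has the same definition in every dictionary where it appears. Your argument is the rigorous completion of that sketch, and it supplies the part the paper's discussion skips entirely. For the first requirement you sharpen the freshness remark into the observation that every dictionary occurring in any $\shG_{A_i}$ is literally $\shGG{1}_{\Bag(C)}$ for some $C$, determined only by $C$ and the global association $\gdic_C,\gdic^{-1}_C$; hence compatible dictionaries either coincide (same $C$) or, for distinct nested types sharing a flat representation, have disjoint label supports, and in both cases $\dcup$ cannot raise an error. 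For the second requirement, which the paper's discussion does not address at all, your strengthened invariant --- every element of $\shF_A(a)$ is $\kcons$ wrt.\ $\shG_A$, for every value $a:A$ --- is the right induction hypothesis, and your bag case correctly combines $\supp(\shGG{1}_{\Bag(C)})=\supp(\gdic^{-1}_C)$ (existence of a definition for the label $\gdic_C(v)$) with the induction hypothesis at the structurally smaller type $C$ (consistency of the definition's elements wrt.\ $\shGG{2}_{\Bag(C)}=\shG_C$). The caveat you flag is real but is precisely the paper's implicit assumption: $\gdic_C$ is a single fixed association shared by all relations and all occurrences of type-$C$ bags, so a label's definition depends on the label alone; under the alternative per-occurrence ``fresh label'' reading, your disjoint-support observation still carries the argument through.
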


\begin{restatable}{lemma}{shExpConsistProp}
\label{sh_exp_consist_prop}
Shredded $\pNRC$ queries preserve consistency of shredded bags, 
i.e.\ for any $\pNRC$
query $h[R_i]$, 
the output of
$\tuple{ \fc{h}, \gc{h} }[\fc{R_i},\gc{R_i}]$
over consistent shredded bags 
$\tuple{\fc{R_i},\gc{R_i}}$,   
is also consistent.
\end{restatable}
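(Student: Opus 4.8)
The plan is to prove Lemma~\ref{sh_exp_consist_prop} by structural induction on the $\pNRC$ query $h$, strengthening the induction hypothesis so that it also covers subexpressions that still contain free $\forz$-bound variables. Concretely, I would show that if $e$ is any subexpression and $\Vrho$ is a value assignment binding each free variable of $\fc{e}$ to a shredded value that is $\kcons$ with respect to the corresponding component of the shredding context, then $\langle \fc{e}, \gc{e} \rangle$ evaluates without a $\dcup$-error and yields a consistent pair. Unfolding the definition of consistency, at each step I must establish (i) that every label appearing in the flat component $\fc{e}$ has a definition in the context $\gc{e}$ whose elements are themselves $\kcons$, and (ii) that every label union $\dcup$ triggered during evaluation is well-defined. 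The invariant that makes (ii) tractable is that a label $\tuple{\iota,\Vrho}$ uniquely determines its definition $\rename_{\Vrho}(\fc{e})$, so I would carry this ``a label determines its definition'' property through the induction alongside consistency itself.

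The base cases are immediate: for a relation $R$ the claim is Lemma~\ref{sh_consist_prop}; for $p(x)$, $\etpl$ and $\uZ$ the flat component contains no labels and the context is trivial; and for $\sngvar{x}$ and $\piTC{x}{i}$ the flat component only reuses labels already present in the consistent binding for $x$, whose definitions are supplied by $\gc{x}$. For $e_1 \cprod e_2$ and $\ominus(e)$ the context is merely copied or reshaped ($\tuple{\gc{e_1},\gc{e_2}}$, respectively $\gc{e}$), no dictionary is combined, and every output label traces back to one of the consistent operands, so consistency follows directly from the induction hypothesis. For the binding constructs $\forx{e_1}{e_2}$ and $\llet{X}{e_1}\ e_2$, the crucial observation is that consistency of $\langle \fc{e_1}, \gc{e_1}\rangle$ means each element $\fc{x}$ ranged over by the comprehension is $\kcons$ with respect to $\gc{x} \coloneq \gc{e_1}$; this is exactly the hypothesis needed to invoke the induction hypothesis on $e_2$ under the extended assignment, after which consistency of the output follows.

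Two cases carry the real content. For $\flt(e)$, the output $\forr{l}{\fc{e}}{\ggc{e}{1}(l)}$ expands each label $l$ of $\fc{e}$ using the dictionary $\ggc{e}{1}$; by the induction hypothesis each such $l$ is defined and the elements of $\ggc{e}{1}(l)$ are $\kcons$ with respect to $\ggc{e}{2}$, which is precisely the retained context, so the removed nesting level is handled correctly and condition (i) is preserved. For $\sng_\iota(e)$, the fresh static index $\iota$ guarantees that the newly minted label $\tuple{\iota,\Vrho}$ does not collide with any label already in scope, and its definition $\fc{e}$ is $\kcons$ with respect to $\gc{e}$ by the induction hypothesis; this is the step that both adds a nesting level to the context and preserves the ``a label determines its definition'' invariant.

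I expect the \textbf{main obstacle} to be the $e_1 \uplus e_2$ case, where the contexts are combined by label union $\gc{e_1} \dcup \gc{e_2}$ and one must rule out a $\dcup$-error. This is where the strengthened invariant pays off: a label present in both $\supp(\gc{e_1})$ and $\supp(\gc{e_2})$ is either an input label propagated unchanged, or a label $\tuple{\iota,\Vrho}$ generated by the same syntactic occurrence of $\sng_\iota$ under the same assignment $\Vrho$, and in either case its definition is forced to coincide; hence $\dcup$ never sees conflicting definitions, and $\supp(\gc{e_1} \dcup \gc{e_2}) = \supp(\gc{e_1}) \cup \supp(\gc{e_2})$ gives condition (ii). The delicate point, which I would argue carefully rather than by a one-line appeal to freshness, is that this uniqueness of definitions is genuinely \emph{global} across all dictionaries produced during evaluation --- including those arising from the shredded inputs $\gc{R_i}$, whose labels are fresh by construction --- so that no union introduced by any enclosing operator can ever conflict.
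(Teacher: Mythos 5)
Your proposal is correct and follows essentially the same argument the paper gives: the paper justifies this lemma only by the informal discussion preceding it, namely that (a) labels introduced by the query's $\sng_\iota$ constructs are fresh and their value assignment $\Vrho$ uniquely determines their definition $\rename_{\Vrho}(\fc{e})$, and (b) shredded queries combine dictionaries only via label union $\dcup$ (never $\uplus$), which cannot alter definitions, so no conflict can arise. Your structural induction, with the strengthened hypothesis on free variables and the global ``a label determines its definition'' invariant discharging the $e_1 \uplus e_2$ case, is exactly the natural formalization of that argument (the only nitpick being that for the base case of a relation reference and for input labels you should appeal to the assumed consistency of the given shredded inputs $\tuple{\fc{R_i},\gc{R_i}}$ rather than to Lemma~\ref{sh_consist_prop} or to freshness ``by construction'').
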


When discussing the update of shredded bags 
$\tuple{ \fc{R_i}, \gc{R_i} }$ by pointwise bag union with
$\tuple{ \Delta \fc{R_i}, \Delta \gc{R_i} }$ we require that both shreddings
are independently consistent. 
Nonetheless, the definition of a label $l$ from
$\gc{R_i}$ will most likely differ from its definition in 
$\Delta \gc{R_i}$ since the first one contains the initial value of the bag
denoted by $l$, while the second one represents its update.
We remark that this does not create a problem wrt. label union of dictionaries
since we only union two dictionaries which are both from $\gc{R_i}$ or
$\Delta \gc{R_i}$, but we never label union a dictionary from $\gc{R_i}$
with a dictionary from $\Delta \gc{R_i}$.  

The definitions provided by $\Delta \gc{R_i}$ can be split in two categories:
i) update definitions for bags that have been initially defined in $\gc{R_i}$;
and ii) fresh definitions corresponding to new labels introduced in the delta
update. 
We require that if a label $l \in \supp(\gc{R_i})$ has an update definition in 
$\Delta \gc{R_i}$, then that label must have the same update definition
in every $\Delta \gc{R_k}, k = 1..n,$ for which $l \in \supp(\gc{R_k}).$ 
This is necessary in order to make sure that the resulting shredded value 
$\tuple{ \fc{R_i} \uplus \Delta \fc{R_i}, \gc{R_i} \uplus \Delta \gc{R_i} }$ 
is also consistent.   
For the fresh definitions we require that their labels are distinct from
any label introduced by $\gc{R_k}, k=1..n$. More formally:

\begin{definition}
\label{def:cons-upd}
We say that update 
$\tuple{ \fc{\Delta R_i}, \gc{\Delta R_i} }$
is consistent wrt. shredded bags
$\tuple{ \fc{R_i}, \gc{R_i} }$
if both $\tuple{ \fc{\Delta R_i}, \gc{\Delta R_i} }$ and
$\tuple{ \fc{R_i}, \gc{R_i} }$ are consistent and 
\begin{itemize}
\item
for every label 
$l \in \supp(\Delta \gc{R_i}) \cap \supp(\gc{R_i}) \cap \supp(\gc{R_k})$ 
then $l \in \supp(\Delta \gc{R_k}), k =1..n.$

\item
for every label $l \in \supp(\Delta \gc{R_i}) \setminus \supp(\gc{R_i})$ 
then $l \notin \supp(\gc{R_k}), k =1..n.$
\end{itemize}
\end{definition}

\begin{lemma}
\label{lem:cons-upd-delta}
Deltas of shredded $\pNRC$ queries preserve consistency of updates, 
i.e. for any $\pNRC$ query $h[R_i]$
and shredded update $\tuple{ \fc{\Delta R_i}, \gc{\Delta R_i} }$
consistent wrt. shredded input $\tuple{ \fc{R_i}, \gc{R_i} }$, 
then the output update
$\tuple{ \inc(\fc{h}), \inc(\gc{h}) }
[\fc{R_i},\gc{R_i},\Delta \fc{R_i},\Delta \gc{R_i}]$
is also consistent wrt. output 
$\tuple{ \fc{h}, \gc{h} }
[\fc{R_i},\gc{R_i}]$.
\end{lemma}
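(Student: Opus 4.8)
The plan is to prove Lemma~\ref{lem:cons-upd-delta} by structural induction on the $\pNRC$ query $h$, following the same case analysis as the shredding transformation (Figure~\ref{fig:shred}) together with the delta and label rules, and to verify at each node that the output update $\tuple{\inc(\fc{h}),\inc(\gc{h})}$ meets the two conditions of Definition~\ref{def:cons-upd} relative to the output $\tuple{\fc{h},\gc{h}}$. The static half of consistency --- that both $\tuple{\fc{h},\gc{h}}$ and the combined value $\tuple{\fc{h}\uplus\inc(\fc{h}),\,\gc{h}\uplus\inc(\gc{h})}$ are themselves consistent shredded values --- I would obtain essentially for free: the former is Lemma~\ref{sh_exp_consist_prop} applied to the consistent input, and the latter is Lemma~\ref{sh_exp_consist_prop} applied to the combined input $\tuple{\fc{R_i}\uplus\Delta\fc{R_i},\,\gc{R_i}\uplus\Delta\gc{R_i}}$, which is itself consistent because the update is consistent wrt.\ the input (here I would first record the auxiliary fact that pointwise $\uplus$ of a consistent update onto a consistent value again yields a consistent value). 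By the delta-correctness of $\ilpNRC$ (Theorem~\ref{th:ilpNRA-delta}, specialising Proposition~\ref{prop:ipnra-delta}) the combined value does decompose as output $\uplus$ update, so it remains only to establish the label-disjointness and label-agreement conditions of Definition~\ref{def:cons-upd}.

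The inductive step is routine for every construct whose delta merely distributes over the shredding. For $\for{x}{e_1}\collects e_2$, $e_1\cprod e_2$, $e_1\uplus e_2$, $\ominus(e)$ and $\flt(e)$ the delta rules create no new labels: every label appearing in $\inc(\fc{h})$ or $\inc(\gc{h})$ already occurs among the labels of some $\inc(\fc{e_j})$ or $\inc(\gc{e_j})$, so the two conditions follow from the induction hypotheses together with the fact that the internal label-unions $\dcup$ of contexts never merge a delta definition with an original one.

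The decisive case is the singleton $\sng_\iota(e)$, the unique source of fresh labels. Its shredding emits $\fcf{\sng_\iota(e)}=\inL_{\iota,\VGamma}(\Vrho)$ and $\gcf{\sng_\iota(e)}=\tuple{[(\iota,\VGamma)\mapsto\fc{e}],\,\gc{e}}$, with $\inc(\inL_{\iota,\VGamma}(\Vrho))=\uZ$ and $\inc([(\iota,\VGamma)\mapsto\fc{e}])=[(\iota,\VGamma)\mapsto\inc(\fc{e})]$. The labels that actually appear in the delta are exactly those $\tuple{\iota,\Vrho}$ whose context $\Vrho$ is contributed by a delta element through the surrounding $\inc(\for{\cdots}{\cdots})$ expansions. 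The argument hinges on the design choice (Section~\ref{sec:shredding_trans}) that a label carries its full defining context $\Vrho$ and that $\Vrho$ uniquely determines the definition $\rename_\Vrho(\fc{e})$: two occurrences of the same label therefore always carry identical definitions, making the agreement condition automatic, while a label appearing for the first time in the delta must arise from a context value drawn from $\Delta R$, which by consistency of the input update is disjoint from the contexts of the original shredded bags --- yielding freshness.

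I expect the main obstacle to be precisely this bookkeeping at the freshly created labels, in particular checking Definition~\ref{def:cons-upd} across the several dictionary components (and several relations $R_k$) in which one label may occur: one must show that whenever such a label is updated it is updated uniformly wherever it appears, and that genuinely new labels are globally fresh. Both reduce to the determinism of label definitions and to the freshness guarantee inherited from the consistency of the input update, so the closing step is to verify that these two properties are preserved by $\inc(\cdot)$ and by the $\dcup$ / pointwise-$\uplus$ discipline; this then yields the two conditions of Definition~\ref{def:cons-upd} and completes the induction.
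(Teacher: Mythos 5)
Your treatment of the two conditions of Definition~\ref{def:cons-upd} is, at its core, the paper's own argument, just wrapped in an explicit structural induction rather than stated globally: the paper proves the lemma by directly tracking label supports, observing that the delta rules create no new labels, so that
$\supp(\inc(\gc{h}_j)) \subseteq \supp_h \cup \supp(\gc{R_i}) \cup \supp(\Delta\gc{R_i})$
(with $\supp_h$ the labels introduced by $h$'s own singleton constructs). From this, a label freshly defined in $\inc(\gc{h}_j)$ must lie in $\supp(\Delta\gc{R_i})\setminus\supp(\gc{R_i})$ and is globally fresh by the second condition on the input update, while a shared label updated in one dictionary is updated in every dictionary defining it --- for query-introduced labels because $\inc([l\mapsto e]) = [l\mapsto \inc(e)]$ preserves dictionary supports, and for input labels by the first condition on the input update. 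Your singleton-case analysis and your closing reduction to "determinism of definitions plus freshness" are exactly these mechanisms.

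There is, however, one step that fails as written: you invoke Theorem~\ref{th:ilpNRA-delta} (delta correctness of $\ilpNRC$) inside this proof to get that the combined value decomposes as output $\uplus$ update. That is circular. The paper's proof of Theorem~\ref{th:ilpNRA-delta} --- specifically the case $h = e_1 \dcup e_2$ in Appendix~\ref{sec:app:delta-label} --- explicitly appeals to Lemma~\ref{lem:cons-upd-delta} to know that the intermediate updates $\tuple{\inc(e_1),\inc(e_2)}$ are consistent; that consistency is precisely what rules out the conflicting branches in its case analysis on $l$. (Proposition~\ref{prop:ipnra-delta}, which is proved independently, does not help here, since it covers only $\ipNRC$ and not the dictionary constructs, and $\dcup$ is the one that matters.) This is why the paper proves the present lemma purely combinatorially on supports, with no appeal to delta correctness, and only then proves Theorem~\ref{th:ilpNRA-delta}. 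Relatedly, your "static half" aims at the wrong target: Definition~\ref{def:cons-upd} requires the update $\tuple{\inc(\fc{h}),\inc(\gc{h})}$ itself (together with the output) to be consistent, not the combined value; if you want that part spelled out, it must be shown directly by an induction in the style of Lemma~\ref{sh_exp_consist_prop}, not deduced from Theorem~\ref{th:ilpNRA-delta}. Once the appeal to delta correctness is excised and the support conditions are argued directly, your proof coincides with the paper's.
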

\begin{proof}
The first requirement of Definition~\ref{def:cons-upd} follows from the fact
that if 
$l \in \supp(\inc(\gc{h}_j)) \cap \supp(\gc{h}_j) \cap \supp(\gc{h}_k),$
where 
$\gc{h}_j/\gc{h}_k$ stands for the $j$'th/$k$'th dictionary in $\gc{h},$
then taking delta over $\gc{h}_k$ will also produce a definition for 
$l$ in $\inc(\gc{h}_k)$.

As the delta transformation does not add any new labels we have that: 
\begin{align*}
\supp(\gc{h}_j) 
&\subseteq \supp_h \cup \supp(\gc{R_i})
&
\supp(\inc(\gc{h}_j)) 
&\subseteq \supp_h \cup \supp(\gc{R_i}) \cup \supp(\Delta \gc{R_i}),
\end{align*}
where 
$\supp_h$ represents the labels introduced by the query $h$ itself via
singleton constructs $\sng_\iota(e)$.

For the second requirement of Definition~\ref{def:cons-upd} we note that
if $l \in \supp(\inc(\gc{h}_j)) \setminus \supp(\gc{h}_j)$,   
then $l \in \supp(\Delta \gc{R_i}) \setminus \gc{R_i}$.
Therefore, $l \notin \supp(\gc{h}_k)$ for any dictionary in $\gc{h}.$
\end{proof}

\subsection{Delta transformation for $\ilpNRC$}
\label{sec:app:delta-label}

\deltathilpNRA*
\begin{proof}
The proof follows by structural induction on $h$ and from the
semantics of $\ilpNRC$ constructs.

\begin{itemize}

\item For $h = [l \mapsto e](l') = \forr{x}{\outL_l(l')}{e}$, 
the result follows from the delta of $\forz$ and from the fact that
$\outL_l(l')$ does not depend on the input bags, therefore its delta is $\uZ$.

\item For $h = e_1 \dcup e_2, e_1,e_2 : \bL \arr \Bag(A),$ 
we need to show that for any $l \in \bL$:
\newcommand{\new}[1]{#1^{\text{new}}}
\newcommand{\old}[1]{#1^{\text{old}}}
\newcommand{\ince}[1]{#1^{\Delta}}
\begin{align*}
\mean{ ( \old{e_1} \uplus \ince{e_1} ) \dcup 
       ( \old{e_2} \uplus \ince{e_2}) }{}{l} = 
\mean{ (\old{e_1} \dcup \old{e_2}) \uplus
       (\ince{e_1} \dcup \ince{e_2}) }{}{l},
\end{align*}
where:
$\old{e_k} = e_k[\fc{R_i}, \gc{R_i}],$ and
$\ince{e_k} = \inc(e_k)[\fc{R_i},\gc{R_i}, 
                         \fc{\Delta R_i},\gc{\Delta R_i}],$
with $k = 1,2.$

We assume that update 
$\tuple{\Delta \fc{R_i}, \Delta \gc{R_i}}$ 
is consistent wrt.
input bags $\tuple{\fc{R_i}, \gc{R_i}}$
and from Lemma~\ref{lem:cons-upd-delta} we conclude that
update $\tuple{\ince{e_1},\ince{e_2}}$ is also consistent wrt.
$\tuple{\old{e_1},\old{e_2}}$.

We do a case analysis on $l$ (there are 16 possibilities):
\begin{itemize}
\item 
$l \notin \supp(\old{e_1})$,
$l \notin \supp(\ince{e_1})$,
$l \notin \supp(\old{e_2})$,
$l \notin \supp(\ince{e_2})$.
Trivial.

\item 
$l \in \supp(\old{e_1})$,
$l \in \supp(\ince{e_1})$,
$l \in \supp(\old{e_2})$,
$l \in \supp(\ince{e_2})$.
From the consistency of $\tuple{\old{e_1},\old{e_2}}$
we have that $\old{e_1}(l) = \old{e_2}(l)$.
Similarly, we get that $\ince{e_1}(l) = \ince{e_2}(l)$.
Therefore, we have that:
$(\old{e_1} \uplus \ince{e_1})(l) = 
 (\old{e_2} \uplus \ince{e_2})(l)$ and
$((\old{e_1} \uplus \ince{e_1}) \dcup (\old{e_2} \uplus \ince{e_2}))(l) = 
  (\old{e_1} \uplus \ince{e_1})(l) =
 ((\old{e_1} \dcup \old{e_2}) \uplus (\ince{e_1} \dcup \ince{e_2}))(l)$

\item Two cases lead to a contradiction with the first requirement of a consistent update value, since the label $l$ is defined in both 
$\old{e_1}$ and $\old{e_2}$, but is updated by only one of 
$\ince{e_1}/\ince{e_2}.$
\begin{itemize}
\item 
$l \in \supp(\old{e_1})$,
$l \notin \supp(\ince{e_1})$,
$l \in \supp(\old{e_2})$,
$l \in \supp(\ince{e_2})$.

\item 
$l \in \supp(\old{e_1})$,
$l \in \supp(\ince{e_1})$,
$l \in \supp(\old{e_2})$,
$l \notin \supp(\ince{e_2})$.
\end{itemize}

\item Four cases lead to a contradiction with the second requirement of a 
consistent update value since $\ince{e_1} / \ince{e_2}$ introduce a fresh
definition for a label that already appears in $\old{e_2} / \old{e_1}.$ 

\begin{itemize}
\item 
$l \notin \supp(\old{e_1})$,
$l \in \supp(\ince{e_1})$,
$l \in \supp(\old{e_2})$,
$l \notin \supp(\ince{e_2})$.

\item 
$l \notin \supp(\old{e_1})$,
$l \in \supp(\ince{e_1})$,
$l \in \supp(\old{e_2})$,
$l \in \supp(\ince{e_2})$.

\item 
$l \in \supp(\old{e_1})$,
$l \notin \supp(\ince{e_1})$,
$l \notin \supp(\old{e_2})$,
$l \in \supp(\ince{e_2})$.

\item 
$l \in \supp(\old{e_1})$,
$l \in \supp(\ince{e_1})$,
$l \notin \supp(\old{e_2})$,
$l \in \supp(\ince{e_2})$.
\end{itemize}

\item Two cases follow from the fact that $l$ only appears in 
$\old{e_1}, \old{e_2}$, or $\ince{e_1}, \ince{e_2}$,
which are consistent.
\begin{itemize}
\item 
$l \in \supp(\old{e_1})$,
$l \notin \supp(\ince{e_1})$,
$l \in \supp(\old{e_2})$,
$l \notin \supp(\ince{e_2})$.

\item 
$l \notin \supp(\old{e_1})$,
$l \in \supp(\ince{e_1})$,
$l \notin \supp(\old{e_2})$,
$l \in \supp(\ince{e_2})$.
\end{itemize}

\item The final six cases follow immediately as $l$ appears in dictionaries 
only on the left or only on the right hand side of label union. 
\begin{itemize}
\item 
$l \in \supp(\old{e_1})$,
$l \notin \supp(\ince{e_1})$,
$l \notin \supp(\old{e_2})$,
$l \notin \supp(\ince{e_2})$.

\item 
$l \notin \supp(\old{e_1})$,
$l \in \supp(\ince{e_1})$,
$l \notin \supp(\old{e_2})$,
$l \notin \supp(\ince{e_2})$.

\item 
$l \in \supp(\old{e_1})$,
$l \in \supp(\ince{e_1})$,
$l \notin \supp(\old{e_2})$,
$l \notin \supp(\ince{e_2})$.

\item 
$l \notin \supp(\old{e_1})$,
$l \notin \supp(\ince{e_1})$,
$l \in \supp(\old{e_2})$,
$l \notin \supp(\ince{e_2})$.

\item 
$l \notin \supp(\old{e_1})$,
$l \notin \supp(\ince{e_1})$,
$l \notin \supp(\old{e_2})$,
$l \in \supp(\ince{e_2})$.

\item 
$l \notin \supp(\old{e_1})$,
$l \notin \supp(\ince{e_1})$,
$l \in \supp(\old{e_2})$,
$l \in \supp(\ince{e_2})$.
\end{itemize}

\end{itemize}

\end{itemize}

For the second part, relating the cost and degree of the delta to the cost and
degree of the original query, the proofs are analogous to the cases from
Theorem~\ref{prop:delta-dec-deg} and
Theorem~\ref{prop:ipnra-cost}, when $h = \forx{e_1}{e_2}$ and $e_1$ is
input-independent, and $h = e_1 \uplus e_2$, respectively.

\end{proof}

\subsection{Correctness}
\label{sec:app:shred-corr}

\nstShProp*
\begin{proof}
We do a case by case analysis on the type being shredded:
\begin{itemize}
\item $A=\Base$: 
$
\forr{\fc{a}}{\shF_{\Base}(a)}{\nst_{\Base}[ \tuple{} ](\fc{a})}  
= \forr{\fc{a}}{\sngvar{a}}{\sngvar{\fc{a}}}   
= \sngvar{a}
$

\item $A=A_1 \x A_2$
\begin{align*}
&
\forr{\fc{a}}{\shF_{A_1 \x A_2}(a)}
     {\nst_{A_1 \x A_2}[ \shG_{A_1 \x A_2} ](\fc{a})} =
\\
&
= 
\forr{\fc{a}}
     {(\forr{\tuple{a_1,a_2}}{\sngvar{a}}
            { \fc{\sh}_{A_1}(a_1) \cprod \fc{\sh}_{A_2}(a_2) }) }
     {}
\\
&
\quad
     {(\forr{\tuple{\fc{a_1},\fc{a_2}}}{\sngvar{\fc{a}}}
            { \nst_{A_1}[\gc{a_1}](\fc{a_1}) \cprod
              \nst_{A_2}[\gc{a_2}](\fc{a_2}) }) }     
\\
&
= 
\forr{\tuple{a_1,a_2}}{\sngvar{a}}{}
\forr{\tuple{\fc{a_1},\fc{a_2}}}
     { \fc{\sh}_{A_1}(a_1) \cprod \fc{\sh}_{A_2}(a_2) }
     { \nst_{A_1}[\gc{a_1}](\fc{a_1}) \cprod
       \nst_{A_2}[\gc{a_2}](\fc{a_2}) }
\\
&
= 
\forr{\tuple{a_1,a_2}}{\sngvar{a}}{}
( \forr{\fc{a_1}}
     { \fc{\sh}_{A_1}(a_1) }
     { \nst_{A_1}[\gc{a_1}](\fc{a_1}) } )
\cprod
( \forr{\fc{a_2}}
     { \fc{\sh}_{A_2}(a_2) }
     { \nst_{A_2}[\gc{a_2}](\fc{a_2}) } )
\\
&
= 
\forr{\tuple{a_1,a_2}}{\sngvar{a}}{}
(\sngvar{a_1} \cprod \sngvar{a_2})
= \sngvar{a}
\end{align*}

\item $A=\Bag(C)$
\begin{align*}
&
\forr{l}{ \shF_{\Bag(C)}(a) }
     { \nst_{\Bag(C)}[ \shGG{1}_{\Bag(C)},\shGG{2}_{\Bag(C)} ](l) } =
\\
&
= \forr{l}
       { \gdic_C(a) }
       { \sng( \forr{\fc{c}}{\shGG{1}_{\Bag(C)}(l)}
                    { \nst_C[\shGG{2}_{\Bag(C)}](\fc{c}) } ) }
\\
&
= \forr{l}
       { \gdic_C(a) }
       { \sng( \forr{\fc{c}}
                    { ( \forr{c}{\gdic^{-1}_C(l)}{\shF_C(c)} ) }
                    { \nst_C[\shG_C](\fc{c}) } ) }
\\
&
= \forr{l}
       { \gdic_C(a) }
       { \sng( \forr{c}{\gdic^{-1}_C(l)}
                    { \forr{\fc{c}}{\shF_C(c)} 
                           { \nst_C[\shG_C](\fc{c}) }} ) }
\\
&
= \forr{l}
       { \gdic_C(a) }
       { \sng( \forr{c}{\gdic^{-1}_C(l)}
                    { \sngvar{c} } ) }
= \forr{l}
       { \gdic_C(a) }
       { \sng( \gdic^{-1}_C(l) ) }
= \sngvar{a}
\end{align*}
\end{itemize} 
\end{proof}

\nstNatTransfProp*
\begin{proof}
The proof consists of a case by case analysis on the structure of $h$.
We detail for $h \in \set{\sng(e), \flt(e)},$ as the rest of the cases follow in 
a similar fashion.
\begin{itemize}
\item $h = \sng(e)$
\begin{align*}
&
\llet{ R }{ \forr{ \fc{r} }{ \fc{R} }{ \nst[\gc{R}]( \fc{r} ) } }\ \sng(e)
=
\sng(\llet{ R }{ \forr{ \fc{r} }{ \fc{R} }{ \nst[\gc{R}]( \fc{r} ) } }\ e)
=
\\
&
=
\sng( \forr{ \fc{x} }{ \fc{e} }{ \nst_B[\gc{e}]( \fc{x} ) } )
\\
&
=
\sng( \forr{ \fc{x} }
           {( \forr{l}{ \inL_{\iota,\fc{A}}(\fc{a}) }
                   { [(\iota,\fc{A}) \mapsto \fc{e} ](l) } )}
           { \nst_B[\gc{e}]( \fc{x} ) } )
\\
&
=
\forr{l}{ \inL_{\iota,\fc{A}}(\fc{a}) }{
\sng( \forr{ \fc{x} }
           { [(\iota,\fc{A}) \mapsto \fc{e} ](l) }
           { \nst_B[\gc{e}]( \fc{x} ) } )
}
\\
&
=
\forr{l}{ \inL_{\iota,\fc{A}}(\fc{a}) }{
\nst_{\Bag(B)}[ [(\iota,\fc{A}) \mapsto \fc{e} ], \gc{e} ](l)
}
=
\forr{l}{ \fcf{\sng(e)} }{
\nst_{\Bag(B)}[ \gcf{\sng(e)} ](l)
}
\end{align*}

\item $h = \flt(e)$
\begin{align*}
&
\llet{ R }{ \forr{ \fc{r} }{ \fc{R} }{ \nst[\gc{R}]( \fc{r} ) } }\ \flt(e) 
=
\flt(\llet{ R }{ \forr{ \fc{r} }{ \fc{R} }{ \nst[\gc{R}]( \fc{r} ) } }\ e)
=
\\
&
= 
\flt( \forr{ l }{ \fc{e} }{ \nst_{\Bag(B)}[\gc{e}]( l ) } )
\\
&
= 
\flt( \forr{ l }{ \fc{e} }
           { \sng( \forr{\fc{x}}{\ggc{e}{1}(l)}
                        { \nst_B[\ggc{e}{2}](\fc{x}) } )
           } )
\\
&
= 
\forr{ l }{ \fc{e} }
     {( \forr{\fc{x}}{\ggc{e}{1}(l)}
             { \nst_B[\ggc{e}{2}](\fc{x}) } )}
= 
\forr{\fc{x}}
     {( \forr{ l }{ \fc{e} }{\ggc{e}{1}(l)} )}
     { \nst_B[\ggc{e}{2}](\fc{x}) }
\\
&
= 
\forr{\fc{x}}
     { \fcf{\flt(e)} }
     { \nst_B[\gcf{\flt(e)}](\fc{x}) }
\end{align*}

\end{itemize}
\end{proof}

\subsection{Complexity of Shredding}
\label{app:complex}

In this section we show that 
shredding nested bags can be done in $\TCz$.
By $\NCz$ we refer to the class of languages recognizable by LOGSPACE-uniform
families of circuits of polynomial size and constant depth using and- and or-gates
of bounded fan-in. 
The related complexity class \ACz\ differs from \NCz\
by allowing gates to have unbounded fan-in, while \TCz\ extends \ACz\ by
further permitting so-called majority-gates, that compute ``true'' iff more than
half of their inputs are true.
For details on circuit complexity and the notion of uniformity we refer to
\cite{Joh90}.

We recall that the standard way of representing flat relations when processing
them via circuits is the unary representation, i.e.\ as a collection of bits,
one for each possible tuple that can be constructed from the active domain and the schema, in some canonical ordering, where a bit
being turned on or off signals whether the corresponding tuple is in the
relation or not.
In such a representation (denote by $\FSet$ below), if the active
domain has size $\sigma$, then the number of bits required for
encoding a relation whose schema has $n_f$ fields is $\sigma ^ {n_f}$.
For example, for a relation with a single field, we need $\sigma$ bits to
encode which values from the active domain are present or not. 
We also assume a total order among the elements of the active domain, and that
the bits of $\FSet$ are in lexicographical order of the tuples they represent.

In the case of bags, whose elements have an associated multiplicity, we work
with circuits that compute the multiplicity of tuples modulo $2^k$, for some fixed
$k$.
Thus, for every possible tuple in a bag we use $k$ bits instead of a
single one, in order to encode the multiplicity of that tuple as a binary
number.
In the following we use $\FBag$ to refer to this representation of bags.

For nested values the $\FSet$ representation discussed above is no longer
feasible since it suffers from an exponential blowup with every nesting level.
This becomes apparent when we consider that representing in unary an inner bag
with $n_t$ possible tuples requires $2^{n_t}$ bits. 
Consequently, for a nested value we use an alternate representation $\FStr$, as
a relation $S(p,s)$ which encodes the string representation (over a non-fixed alphabet that includes the active domain, the possible atomic field values) of the value by
mapping each position $p$ in the string to its corresponding symbol $s$.

\begin{example}
The string representation $\set{\tuple{a,\set{b,c}},\tuple{d,\set{e,f}}}$,
of a nested value of type $\Bag(\Base \x \Bag(\Base))$,
is encoded by relation $S(p,s)$ as follows (we show tuples as columns to save space):
\begin{center}
{\setlength{\tabcolsep}{3pt}
\begin{tabular}{l}
\begin{tabular}{ >{$}l<{$} |
>{$}c<{$} >{$}c<{$} >{$}c<{$} >{$}c<{$} >{$}c<{$} >{$}c<{$} >{$}c<{$} >{$}c<{$}
>{$}c<{$} >{$}c<{$} >{$}c<{$} >{$}c<{$} >{$}c<{$} >{$}c<{$} >{$}c<{$} >{$}c<{$}
>{$}c<{$} >{$}c<{$} >{$}c<{$} >{$}c<{$} >{$}c<{$}
}
	\hline			
p & 
1 &  2 &  3 &  4 &  5 &  6 &  7 &  8 &  9 & 10 & 11 
  & 12 & 13 & 14 & 15 & 16 & 17 & 18 & 19 & 20 & 21 
 \\ 
s & 
\{ & \la & a & , & \{ & b & , & c & \} & \ra & ,
   & \la & d & , & \{ & e & , & f & \} & \ra & \} 
 \\ \hline  
\end{tabular}
\\[8pt]
\end{tabular}
}
\end{center}
\end{example}

For a particular input size $n$, the active domain of $S$ consists of
$\sigma_{ext}$ symbols including the active domain of the database, delimiting
symbols $``\la",``\ra",``,",``\{",``\}"$, as well as an additional symbol for
each possible position in the string (i.e.\ $\sigma_{ext} = \sigma + 5 + n$).
We remark that the $\FSet$ representation of $S$ requires $\sigma_{ext}^2$ bits
and thus remains polynomial in the size of the input.

This representation may seem to require justification, since strings over an unbounded
alphabet may seem undesirable. We note that the representation is fair in the sense that
it does not require a costly (exponential) blow-up from the practical string representation
that could be used to store the data on a real storage device such as a disk; we use
a relational representation of the string and the canonical representation of relations
as bit-sequences that is standard in circuit complexity. The one way we could have been even
more faithful would have been to start with exactly the bit-string
representation by which an (XML, JSON, or other) nested dataset would be stored on a disk.
This -- breaking up the active domain values into bit sequences -- is however
avoided for the same reason it is avoided in the case of the study of the circuit complexity of queries on flat relations -- reconstructing the active domain from the bit string dominates the cost of query evaluation.

We can now give our main results of this section.

\begin{theorem}
\label{lem:shred-TCz}
Shredding a nested bag from $\FStr$ representation to a flat
bag ($\FBag$) representation is in \TCz.
\end{theorem}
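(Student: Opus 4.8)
The plan is to exhibit, for each input size, a single LOGSPACE-uniform threshold circuit of constant depth and polynomial size that maps the relation $S(p,s)$ encoding a nested bag in $\FStr$ form to the $\FBag$ encodings of the shredded flat views $\fc{R}$ and $\gc{R}$. Because the type of the value is fixed in the data-complexity setting, its nesting depth $D$ is a constant; this is what keeps the whole parsing pipeline at constant depth and bounds the number of flat dictionaries produced by shredding to a constant (one per nesting level, mirroring the tree structure of $\gc{A}$). The circuit therefore only has to (i) recover the nesting structure of the string, (ii) assign labels to inner bags, and (iii) count tuple multiplicities at each level, and then package the results in the natural bit-sequence representation.

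First I would build the \emph{depth} primitives. For a position $p$, let $\mathrm{op}(q)=1$ when $s_q$ is an opening delimiter ($\{$ or $\la$) and $\mathrm{cl}(q)=1$ when $s_q$ is a closing delimiter ($\}$ or $\ra$); the nesting depth at $p$ is the prefix sum $\sum_{q\le p}(\mathrm{op}(q)-\mathrm{cl}(q))$. Prefix sums of polynomially many values in $\{-1,0,1\}$ are exactly iterated integer addition, which is in $\TCz$. From the depth function one obtains, again in $\TCz$, the matching-bracket relation (the close brace matching an open brace at $p$ is the least $q>p$ whose depth returns to that of $p-1$, a condition expressible with the already-computed depths and threshold comparisons) and hence, for every position, the identity of the innermost bag that directly contains it.

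Next I would assign labels. Since shredding replaces each inner-bag occurrence by a fresh label via $\gdic_C(v)=\sngv{l}$, I would take the position of that bag's opening brace as its canonical identifier $l$; this is a local, $\ACz$-definable choice once the matching-bracket relation is available (if instead the scheme merges equal bag values, the required value-equality test is a substring comparison, still in $\TCz$), and it makes the context component $\gc{R}$ straightforward to populate. The flat view at each level is then the bag of entries obtained by reading off the $\Base$ fields and replacing each directly-nested sub-bag by its label. The only genuinely counting step is the multiplicity computation, which matters chiefly at the leaf dictionaries where $\Base$-entries repeat: for each level and each candidate flat tuple $t$ drawn from the active domain of $S$ (polynomially many), its multiplicity modulo $2^k$ is the number of occurrences (with repetition) of $t$ as a sibling entry at that level, taken modulo $2^k$. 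Each occurrence is detected locally by an $\ACz$ predicate over $S$, and aggregating these polynomially many indicator bits into a $k$-bit residue is iterated addition modulo $2^k$, which lies in $\TCz$. Composing the constantly many layers (steps (i)--(iii), repeated $O(D)$ times for the $O(D)$ levels) yields a constant-depth, polynomial-size threshold circuit whose wiring is uniform because every step is a regular, size-indexed construction computable in LOGSPACE.

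The main obstacle is getting the structural parsing correct and confirming that it composes to \emph{constant} depth: the correctness of the matching-bracket relation and of the ``which inner bag directly contains this position'' map rests on the depth-counting argument, and the bound on the number of composed layers relies on the type, and hence $D$, being fixed. I also expect the multiplicity aggregation to be the precise point where $\TCz$ rather than $\ACz$ is needed, mirroring the remark in the body that bag-semantics flattening requires integer summation; once counting is isolated into iterated addition modulo $2^k$, $\TCz$ membership follows, and the promised $\NCz$ bound for constant-size updates is then immediate, since bounded-size inputs make every such count depend on only constantly many bits.
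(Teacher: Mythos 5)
Your proposal is correct, but it takes a genuinely different route from the paper's. The paper works in descriptive complexity: it expresses the entire shredding as first-order formulas with majority/counting quantifiers (FOM) over the relation $S(p,s)$ --- type-indexed interval predicates $\Value_A(i,j)$ that parse the string by recursion on the (fixed) type, shredding predicates $\fc{\Shred}_A$ and dictionaries $\Dict_C$ that replace each inner bag by the index of its first symbol, a counting-quantifier renumbering step $\ToSeq$ to make positions consecutive, and a final counting conversion to $\FBag$ --- and then invokes the theorem that FOM equals \TCz\ to conclude, which also yields LOGSPACE-uniformity for free. You instead build the circuit directly: generic balanced-bracket parsing via prefix sums (iterated addition) and a matching-bracket relation, position-of-opening-brace labels, and per-level multiplicity counting modulo $2^k$, with the fixed type entering only to bound the number of composed layers and of dictionaries. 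The two arguments share the essential ingredients --- positional labels, constant nesting depth from data complexity, and the isolation of multiplicity counting as the one step requiring \TCz\ rather than \ACz\ --- but your parsing is type-agnostic (depth counting works for any balanced string) where the paper's is type-directed, and your integrated per-candidate-tuple counting lets you skip the paper's $\ToSeq$ renumbering entirely. What the paper's route buys is that uniformity is immediate from the logical characterization; what your route buys is a more self-contained, elementary construction, at the price of having to argue (as you do, plausibly but only by assertion) that the hand-built wiring is LOGSPACE-computable.
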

\begin{proof}
To obtain our result
we take advantage of the fact that first-order logic with majority-quantifiers
(FOM) is in \TCz~\cite{BIS1990}, and express the shredding of a nested value as
a set of FOM queries over the $S(p,s)$ relation.

We start by defining a family of queries $\Value_A(i,j)$ or testing whether a
closed interval $(i,j)$ from the input contains a value of a particular type $A$:
\begin{align*}
\Value_{\Base}(i,j) \coloneq\ 
&
S_{\Base}(i) \wedge i = j
\\
\Value_{A_1 \x A_2}(i,j) \coloneq\ 
&
S_{\la}(i) \wedge S_{\ra}(j) \wedge
\exists k. \Pair_{A_1,A_2}(i+1,k,j-1)
\\
\Pair_{A_1,A_2}(i,k,j) \coloneq\ 
&
S_{,}(k) \wedge \Value_{A_1}(i,k-1) \wedge \Value_{A_2}(k+1,j)
\\
\Value_{\Bag(C)}(i,j) \coloneq\ 
&
S_{\{}(i) \wedge S_{\}}(j) \wedge 
( j = i+1 \vee \Seq_C(i+1,j-1) )
\\
\Seq_{C}(i,j) \coloneq\ 
&
\exists k,l. \Elem_{C}(i,k,l,j) \wedge
	\forall k,l. 
	\Elem_{C}(i,k,l,j) \implyA
	( \Pred_C(i,k) \wedge \Succ_C(l,j) )
\\
\Elem_{C}(i,k,l,j) \coloneq\ 
&
(i \leq k \wedge l \leq j \wedge \Value_C(k,l))
\\
\Pred_C(i,k) \coloneq\ 
&
i = k \vee 
(S_{,}(k-1) \wedge \exists k'. i \leq k' \wedge \Value_C(k',k-2))
\\
\Succ_C(l,j) \coloneq\ 
&
l = j \vee 
(S_{,}(l+1) \wedge \exists l'. l' \leq j \wedge \Value_C(l+2,l'))
\end{align*}
where $S_{\Base}(i)$ is true 
iff we have a $\Base$ symbol at position $i$ in the input string 
(and analogously for 
$S_{\{}(i), S_{\}}(i), S_{\la}(i), S_{\ra}(i)$ and $S_{,}(i)$).
When determining if an interval $(i,j)$ contains a bag value of type $\Bag(C)$
we test if it is either empty, i.e.\ $j = i+1$ or if it encloses a sequence of
$C$ elements (using $\Seq_C$), i.e.\ it has at least one $C$ element and each
element is preceded by another $C$ element or is the first in the sequence, and
succeeded by another $C$ element or is the last in the sequence.
We use auxiliary queries: $\Elem_C(i,k,l,j)$, which returns true iff the
interval $(i,j)$ contains a value of type $C$ between indices $k$ and $l$, and 
$\Succ_C(l,j)$ / $\Pred_C(i,k)$ which returns true iff the intervals
$(l,j)$ / $(i,k)$ are either empty or they begin, respectively end, with a value
of type $C$.

For shredding the value contained in an interval $(i,j)$ of the input we define
the following family of queries $\fc{\Shred}_{A}(i,j,p,s)$:
\begin{align*}
\fc{\Shred}_{\Base}(i,j,p,s) \coloneq\ 
& 
i \leq p \wedge p \leq j \wedge S(p,s) 
\\
\fc{\Shred}_{A_1 \x A_2}(i,j,p,s) \coloneq\ 
& 
\exists k. \Pair_{A_1,A_2}(i+1,k,j-1) \wedge (
\fc{\Shred}_{A_1}(i+1,k-1,p,s) 
\ \vee\ \fc{\Shred}_{A_1}(k+1,j-1,p,s) 
)
\\
\fc{\Shred}_{\Bag(C)}(i,j,p,s) \coloneq\ 
& 
p = i \wedge s = i,
\end{align*}
where the shredding of bag values results in their replacement with a unique
identifier, i.e.\ the index of their first symbol, that acts as a label.
Additionally, the definitions of these labels, i.e.\ the shredded versions of
the bags they replace are computed via:
\begin{align*}
\Dict_{C}(p,s) \coloneq\ 
&
\exists i,j,k,l. 
\Value_{\Bag(C)}(i,j) \wedge 
\Elem_{C}(i+1,k,l,j-1) \wedge (
(p = k-1 \wedge s = i)
\vee
\fc{\Shred}_C(k,l,p,s)
)
,
\end{align*}
where we prepend to each shredded element in the output the label of the bag to
which it belongs (we can do that by reusing the index of the preceding $\{$ or
comma present in the original input).
We build a corresponding relation $\Dict_C$ for every bag type $\Bag(C)$
occurring in the input.
These relations encode a flat representation of the input, 
as bags of type $\Bag(\bL \x \fc{C})$, where each tuple uses a fixed number
of of symbols, therefore we no longer make use of delimiting symbols.

For our example input, 
we only have two bag types, $\Bag(\Base \x \Bag(\Base))$ and $\Bag(\Base)$, 
and their corresponding relations are:
\newline
{\setlength{\tabcolsep}{4pt}
\begin{tabular}{l c l}
\\[-7pt]
$\Dict_{\Base\,\x \Bag(\Base)}(p,s) \coloneq$
&\qqquad&
$\Dict_{\Base}(p,s) \coloneq$
\\[2pt]
\begin{tabular}{ >{$}l<{$} |
>{$}c<{$} >{$}c<{$} >{$}c<{$} >{$}c<{$} >{$}c<{$} >{$}c<{$}
}
	\hline			
p & 1 & 3 &  5 & 11 & 13 & 15 
 \\ 
s & 1 & a &  5 &  1 &  d & 15  
 \\ \hline
\end{tabular}
&\qqquad\qqquad&
\begin{tabular}{ >{$}l<{$} |
>{$}c<{$} >{$}c<{$} >{$}c<{$} >{$}c<{$} >{$}c<{$} >{$}c<{$} >{$}c<{$} >{$}c<{$}
}
	\hline			
p & 
5 & 6 & 7 & 8 & 15 & 16 & 17 & 18 
 \\ 
s & 
5 & b & 5 & c & 15 & e & 15 & f 
 \\ \hline
\end{tabular}
\\[14pt]
\end{tabular}
}.
\newline
The flat values that they encode are 
$\set{\tuple{1,a,5},\tuple{1,d,15}} : \Bag(\bL \x Base \x \bL)$ and
$\set{\tuple{5,b},\tuple{5,c},\tuple{15,e},\tuple{15,f}}: \Bag(\bL \x \Base)$.

However, the $\Dict_C$ relations cannot be immediately used to produce the
sequence of tuples that they encode since the indices $p$ associated with their
symbols are non-consecutive.
To address this issue we define:
\begin{align*}
\ToSeq[X](p',s) \coloneq\ 
&
\exists p. X(p,s) \wedge p' = \#u(\exists w. X(u,w) \wedge u \leq p ),
\end{align*} 
which maps each index $p$ in relation $X(p,s)$ to an index $p'$ corresponding to
its position relative to the other indices in $X$. 
To do so we used predicate $p' = \#u\Phi(u)$ to count the number of positions
$u$ for which $\Phi(u)$ holds, since it is expressible in FOM~\cite{BIS1990}.

Finally, we determine the shredded version of an input value $x: \Bag(B)$, based
on its $\FStr$ representation $S(p,s)$, as 
$\fc{S}(p,s) \coloneq \ToSeq[\Dict_B(p,s) \wedge s \neq 1]$ where we
filter out from $\Dict_B(p,s)$ those symbols denoting that a tuple
belongs to the top level bag, identified by label $1$.
The shredding context is
defined by a collection of relations $\gc{S} \coloneq \gc{\Shred}_B$, where:
\begin{align*}
\gc{\Shred}_{\Base} \coloneq\ 
&
\uZ
&&&
\gc{\Shred}_{A_1 \x A_2} \coloneq\ 
&
\tuple{ \gc{\Shred}_{A_1} , \gc{\Shred}_{A_2} }
&&&
\gc{\Shred}_{\Bag(C)} \coloneq\ 
&
\tuple{ \ToSeq[\Dict_C] , \gc{\Shred}_{C} }
\end{align*}

The last step that remains is to convert the resulting flat bags from the
current representation (as $X(p,s)$ relations in $\FSet$ form) to the $\FBag$
representation.
We recall that each such relation encodes a sequence of tuples such that each
consecutive group of $n_f$ symbols (according to their positions $p$) stands for
a particular tuple in the bag, where $n_f$ is the number of fields in the tuple.
Additionally, since the bits in the $\FSet$ representation are lexicographically
ordered it follows that each consecutive group of $\sigma_{ext}$ bits contains
the unary representation of the symbol located at that position. 
Therefore, we can find out how many copies of a particular tuple $t$
are in the bag by counting (modulo $2^k$) for how many groups of 
$n_f \cdot \sigma_{ext}$ bits we have unary representations of symbols that
match the symbols in $t$.
By performing this counting for all possible tuples $t$ in the output bag we
obtain the $\FBag$ representation of $X(p,s)$.
We note that both testing whether particular bits are set and counting modulo
$k$ are in \TCz. 

Since $\fc{S}(p,s)$ and $\gc{S}$ can be defined via FOM queries,
and since their conversion from $X(p,s)$ relations in $\FSet$ form to the $\FBag$
representation uses a \TCz\ circuit, this concludes our proof that
shredding nested values from $\FStr$ to $\FBag$ representation can be done in
\TCz.
\end{proof}

\end{document}